\documentclass[preprint,12pt]{elsarticle}

\usepackage{algorithm}
\usepackage{algpseudocode}
\usepackage{amsfonts,amsmath,amssymb}
\usepackage{booktabs}
\usepackage{cancel}
\usepackage[skip=0pt]{caption}
\usepackage{graphicx}
\usepackage{mleftright}
\usepackage{subcaption}
\usepackage{multirow}
\usepackage{orcidlink}
\usepackage{placeins}
\usepackage{xcolor}

\algrenewcommand{\algorithmiccomment}[1]{\hfill$\triangleright$~#1}
\mleftright

\newdefinition{definition}{Definition}
\newtheorem{lemma}{Lemma}
\newtheorem{proposition}{Proposition}
\newtheorem{corollary}{Corollary}
\newproof{proof}{Proof}

\DeclareMathOperator*{\argmax}{arg\,max}

\newif\ifshowreview
\showreviewtrue
\definecolor{reviewcolor}{rgb}{0.00,0.25,0.70}

\begin{document}

\begin{frontmatter}

\title{Global and Local Entropy Tests for STL Files Under IEEE-754 Validity Constraints}

\author[shu]{Tonislav Troev\orcidlink{0009-0002-8803-5162}}
\ead{tonyslav\_troeff@outlook.com}

\author[shu]{Borislav Stoyanov\orcidlink{0000-0002-7307-5914}\corref{cor1}}
\ead{borislav.stoyanov@shu.bg}

\cortext[cor1]{Corresponding author.}

\affiliation[shu]{organization={Department of Computer Informatics, Konstantin Preslavsky University of Shumen},
            addressline={115 Universitetska Street},
            postcode={9700},
            city={Shumen},
            country={Bulgaria}}

\begin{abstract}
Shannon entropy is a standard measure of ciphertext randomness, evaluated either globally over the byte histogram of the entire ciphertext or locally as the average over non-overlapping blocks of fixed size.
While the global statistic captures only the overall byte distribution, the local statistic is also sensitive to non-randomness that is confined to parts of the ciphertext.
Both measures are commonly assessed against reference values derived under the assumption of independent uniform bytes.
However, STL encodes most numerical data as single-precision IEEE-754 floats, and format validity requires them to be finite, causing nonuniform byte probabilities and dependence between the two bytes of each exponent pair.
We derive the exact mean and variance of entropy under the valid-STL (VSTL) distribution, which assigns equal probability to every admissible sequence of triangle records, and provide second-order expansions that keep the global reference values practical to compute at large file sizes.
Using these expansions, we establish a two-sided hypothesis test for the global entropy statistic, whose limiting value is lowered from $8$ to $7.99991479609$ bits.
The format constraints also change the local entropy reference: applying the capacity energy criterion to triangle records selects a block size of $1950$ bytes instead of $1936$ bytes, with an exact reference mean of $7.903106732975$ rather than $7.902469316647$ bits.
Although small, these shifts are sufficient to place the VSTL reference means outside the conventional $95\%$ acceptance intervals at practical file sizes, making unadjusted entropy tests prone to falsely rejecting ideal VSTL ciphertexts.
Monte-Carlo simulations confirm all reference values within sampling error.
We therefore recommend using the format-specific reference values derived here to evaluate STL ciphers.
\end{abstract}

\begin{keyword}
3D model encryption \sep STL \sep randomness testing \sep Shannon entropy \sep entropy estimation
\end{keyword}

\end{frontmatter}

\section{Introduction}\label{sec:intro}

Additive manufacturing uses digital three dimensional (3D) models to produce physical parts, from prototypes to aerospace components and medical devices~\cite{gibson_2021}.
As their geometry may disclose proprietary information, protecting design data is part of securing the manufacturing process, in which theft of technical data is a principal threat~\cite{yampolskiy_2018}.
STL represents the geometry as a triangulated surface and remains a common exchange format in this workflow~\cite{gibson_2021,szilvasi_nagy_2003}.

The statistical evaluation of encryption methods for 3D models commonly uses measures adapted from image encryption.
Our foundational work~\cite{adjusted_npcr_uaci_2026} examined this transfer for NPCR and UACI and showed that their conventional reference values require adjustment for STL validity constraints.
The same issue arises for Shannon entropy~\cite{shannon_entropy_1948}: its reference values must reflect the admissible ciphertext encodings.

Global entropy is computed from the byte histogram of the entire ciphertext, whereas local entropy~\cite{local_entropy_2013} averages the same measure over disjoint blocks of fixed size and is also sensitive to departures from randomness confined to parts of the ciphertext.
The conventional reference values for both measures assume independent and uniformly distributed bytes.

In practice, however, evaluations of the global measure often report values close to the theoretical maximum of $8$ bits, without specifying acceptance bounds for the given sample size.
Such proximity alone does not establish consistency with the assumed null model.
A two-sided hypothesis test provides a more rigorous assessment because it rejects values that are either unusually low or unusually high.

Even under the uniform model, the expected entropy is strictly below the theoretical maximum for every finite sample size --- classical estimation results quantify the bias and variability~\cite{miller_1955,basharin_1959,harris_1977}.

As established in~\cite{adjusted_npcr_uaci_2026}, a valid STL file does not satisfy the assumption of independent uniform bytes because each triangle record is composed mostly of IEEE-754 single-precision floats~\cite{ieee754_2019}, for which the validity constraint excludes NaN and $\pm\infty$.
Consequently, the valid-STL (VSTL) distribution is uniform over admissible encodings and has nonuniform byte probabilities that vary by position, together with dependence between the exponent bytes of each float.

This paper therefore introduces a hypothesis test for the global entropy of valid STL files, adjusts the established local test, and provides reference values and acceptance bounds appropriate to the assumed distribution and sample size for both measures.

We derive exact global and local entropy moments from the individual and joint laws of byte counts in triangle records under the VSTL model.
For global entropy, second-order expansions make the reference moments efficiently computable for large files.
For local entropy, we determine the optimal block size under the VSTL model by adapting the capacity energy criterion~\cite{local_entropy_2013}.
Normal approximations provide two-sided acceptance bounds for both tests.

Although the validity constraint induces seemingly small shifts in the entropy references, these differences become practically relevant for both tests as their acceptance intervals narrow.
For local entropy, the change in block size further accentuates this effect.

The remainder of the paper is organized as follows.
Section~\ref{sec:prelim} defines the notation used throughout the paper and recalls the VSTL distribution from our foundational work.
Section~\ref{sec:block} derives the exact block entropy moments.
Section~\ref{sec:global} develops the global entropy test, while Section~\ref{sec:local} adapts the local test; both examine their practical significance and validate their corresponding reference values through Monte-Carlo simulations.
Finally, Section~\ref{sec:conclusions} concludes the paper.

\section{Preliminaries and Notation}\label{sec:prelim}

Most of the notation and structural facts in this section follow our foundational work~\cite{adjusted_npcr_uaci_2026}; we restate only those needed for the entropy analysis.

\subsection{The Byte Alphabet and Its Named Subsets}

Write $\mathcal{V} = \{0, \ldots, 255\}$ for the byte alphabet, so that $\mathcal{V}^k$ is the set of $k$-byte sequences.
Two pairs of subsets of $\mathcal{V}$ recur throughout, the \emph{halves} and the \emph{majority sets}:
\begin{equation}\label{eq:subsets}
\begin{aligned}
\mathcal{V}_{lo} &= \{0, \ldots, 127\}, &
\mathcal{V}^{*}_{lo} &= \mathcal{V}_{lo} \setminus \{127\}, \\
\mathcal{V}_{hi} &= \{128, \ldots, 255\}, &
\mathcal{V}^{*}_{hi} &= \mathcal{V}_{hi} \setminus \{255\}.
\end{aligned}
\end{equation}
The two majority sets drop $127$ and $255$, respectively; these values are the \emph{minority bytes}.

\subsection{Validity Constraint}\label{subsec:validity}

A valid float is any finite IEEE-754 single-precision 32-bit number~\cite{ieee754_2019} whose exponent field is not \texttt{0xFF}; subnormals and signed zeros are retained, NaN and $\pm\infty$ excluded.
The set $\mathcal{F} \subset \{0,1\}^{32}$ of valid encodings has cardinality $\lvert \mathcal{F} \rvert = 2^{32} - 2^{24} = 4278190080$.

Each float $(s,\, e_0 \ldots e_7,\, m_0 \ldots m_{22})$ splits into four bytes, from most to least significant:
\begin{equation}\label{eq:bytes}
B_0 = s \| e_{0 \ldots 6}, \quad B_1 = e_7 \| m_{0 \ldots 6}, \quad B_2 = m_{7 \ldots 14}, \quad B_3 = m_{15 \ldots 22}.
\end{equation}
Since $e \neq \texttt{0xFF}$ reads $\neg\,(e_{0 \ldots 6} = \texttt{0x7F} \wedge e_7 = 1)$, the validity constraint only restricts $B_0$ and $B_1$ jointly, while the other bytes remain unaffected.
Accordingly, bytes fall into three classes based on their position --- $U$, $B_0$ and $B_1$, as defined in~\cite[Sec.~2.3]{adjusted_npcr_uaci_2026}.

\subsection{The Valid-STL (VSTL) Distribution}\label{subsec:layout}

The STL file format allows representing triangulated surfaces in a readable ASCII encoding or a compact binary encoding.
We treat these representations as semantically equivalent and use the binary encoding for the subsequent analysis, converting ASCII content when necessary~\cite[Sec.~2.1]{adjusted_npcr_uaci_2026}.
Each triangle record contains a normal vector ($3$ floats, $12$ bytes), three vertices ($9$ floats, $36$ bytes), and an attribute word ($2$ bytes), for a total of $50$ bytes.
Throughout the rest of the paper a \emph{block} is a string of triangle records.

We adopt assumptions (A1) and (A2) of~\cite[Sec.~2.3]{adjusted_npcr_uaci_2026}.
Assumption (A1) concerns format-preserving encryption~\cite{bellare_fpe_2009}, while (A2) confines dependence to the exponent pair $(B_0, B_1)$ within each float.

We restrict our attention to the triangle records of the ciphertext; file metadata is outside the scope of the presented analysis.

\begin{definition}\label{def:vstl}
For a fixed number of triangles $T$, the \emph{valid-STL} (VSTL) distribution assigns equal probability to every valid STL file.
Floats and attribute words are sampled independently and uniformly over their respective admissible encodings, giving $26T$ unconstrained bytes and $12T$ exponent pairs that form mutually independent groups.
\end{definition}

\subsection{Marginal and Joint Byte Distributions}\label{subsec:marginals}

The marginals for the three byte classes~\cite[Sec.~3]{adjusted_npcr_uaci_2026} are
\begin{gather}
P_{B_U}(v) = \frac{1}{256}, \label{eq:pu} \\
P_{B_1}(v) =
\begin{cases}
P_{loB_1} = 1/255, & v \in \mathcal{V}_{lo}, \\
P_{hiB_1} = 127/32640, & v \in \mathcal{V}_{hi},
\end{cases} \label{eq:pb1} \\
P_{B_0}(v) =
\begin{cases}
P_{B_0^{*}} = 1/255, & v \in \mathcal{V}^{*}_{lo} \cup \mathcal{V}^{*}_{hi}, \\
1/510, & v \in \{127, 255\}.
\end{cases} \label{eq:pb0}
\end{gather}

The admissible exponent pairs form
\begin{equation}\label{eq:X}
\mathcal{X} = \left\{ (v_0, v_1) \in \mathcal{V}^2 : v_0 \in \mathcal{V}^{*}_{lo} \cup \mathcal{V}^{*}_{hi} \vee v_1 \in \mathcal{V}_{lo} \right\},
\end{equation}
on which $(B_0, B_1)$ is uniform:
\begin{equation}\label{eq:pb01}
P_{B_{0,1}}(v_0, v_1) =
\begin{cases}
\dfrac{1}{65280}, & (v_0, v_1) \in \mathcal{X}, \\[4pt]
0, & (v_0, v_1) \notin \mathcal{X}.
\end{cases}
\end{equation}
The joint law makes explicit the dependence imposed by the validity constraint of Section~\ref{subsec:validity} and permitted by (A2), since $P_{B_{0,1}} \neq P_{B_0} P_{B_1}$.

\subsection{Entropy Measures}

For a sequence $S \in \mathcal{V}^n$ in which the value $v$ occurs $\#(v)$ times, the Shannon entropy~\cite{shannon_entropy_1948} is
\begin{equation}\label{eq:plugin}
\begin{gathered}
H(S) = \sum_{v \in \mathcal{V}} \varphi \left(\frac{\#(v)}{N}\right), \\
\varphi(x) = x \lambda(x), \\
\lambda(x) =
\begin{cases}
0, & x = 0, \\
- \log_2 \left( x \right), & x > 0.
\end{cases} 
\end{gathered}
\end{equation}

We reserve $H_\infty$ for the limiting value of $H$ as the string length grows.
For a source of independent uniform bytes $H_\infty^{\mathrm{uniform}} = \log_2 \lvert \mathcal{V} \rvert = 8$ bits.
Section~\ref{subsec:ceiling} quantifies the reduction imposed by the validity constraint and establishes the corresponding entropy ceiling for the VSTL model.

This paper considers two entropy measures used to evaluate cryptographic systems.
The \emph{global} measure evaluates~\eqref{eq:plugin} on the entire ciphertext.
The \emph{local} measure~\cite{local_entropy_2013} averages~\eqref{eq:plugin} over $k$ non-overlapping blocks of $N_B$ bytes each.
Sections~\ref{sec:global} and~\ref{sec:local}, respectively, provide reference values, acceptance bounds and test procedures specific to the VSTL format.

\section{Exact Entropy Statistics of a Valid STL File}\label{sec:block}

This section derives the exact mean and variance of the block entropy under the VSTL distribution, accounting for the dependence within each exponent pair.

\subsection{The Block Entropy as a Function of Its Histogram}

The moments of the block entropy can be expressed in terms of the byte counts:
\begin{equation}\label{eq:muterms}
\mu^{\mathrm{VSTL}}(T) = \sum_{v \in \mathcal{V}} E \left[ \varphi \left( \frac{\#(v)}{N} \right) \right],
\end{equation}
\begin{equation}\label{eq:varterms}
\begin{aligned}
\left[ \sigma^{\mathrm{VSTL}}(T) \right]^2 &= \sum_{v \in \mathcal{V}} E \left[ \varphi \left( \frac{\#(v)}{N} \right)^2 \right] \\
&\quad + \sum_{\substack{v_0, v_1 \in \mathcal{V} \\ v_0 \neq v_1}} E \left[ \varphi \left( \frac{\#(v_0)}{N} \right) \varphi \left( \frac{\#(v_1)}{N} \right) \right] \\
&\quad - \left( \sum_{v \in \mathcal{V}} E \left[ \varphi \left( \frac{\#(v)}{N} \right) \right] \right)^2.
\end{aligned}
\end{equation}

Since $H$ is additive over byte values, only the one-dimensional and two-dimensional marginals of $\#$ ever appear.
The counts are dependent --- their sum is fixed at $N$ and the validity constraint couples the two exponent bytes within every float --- so the off-diagonal terms of~\eqref{eq:varterms} call for the joint law of $\left(\#(v_0), \#(v_1)\right)$ and not for the product of their individual marginals.
It remains to establish, under the VSTL model, how likely it is for a given value to appear a specified number of times in a block, and for two distinct values to appear specified numbers of times jointly.

\subsection{Distributions of Individual and Joint Byte Counts}

To count the occurrence of a given byte $v$ at an unconstrained position, we define
\begin{equation}\label{eq:Z_U,v}
Z_{U,v}(b_U) =
\begin{cases}
1, & b_U = v, \\
0, & b_U \neq v,
\end{cases}
\qquad b_U \in \mathcal{V}.
\end{equation}
$Z_{U,v}$ without an argument stands for $Z_{U,v}(B_U)$, where $B_U$ is randomly drawn according to~\eqref{eq:pu}. Hence, $Z_{U,v} \sim \mathrm{Bernoulli}(P_{B_U})$.

Likewise, to count the occurrences of a given byte $v$ in an exponent pair, we define
\begin{equation}\label{eq:Z_X,v}
Z_{\mathcal{X},v}(b_0, b_1) =
\begin{cases}
2, & b_0 = b_1 = v, \\
1, & b_0 \neq b_1 \wedge (b_0 = v \vee b_1 = v), \\
0, & b_0 \neq v \wedge b_1 \neq v,
\end{cases}
\qquad (b_0, b_1) \in \mathcal{X}.
\end{equation}
$Z_{\mathcal{X},v}$ without arguments stands for $Z_{\mathcal{X},v}(B_0, B_1)$, where $(B_0, B_1)$ is randomly drawn according to~\eqref{eq:pb01}.

Let $\Lambda_{\mathcal{X},v}(m)$ denote the probability that exactly $m \in \{0, 1, 2\}$ bytes of an exponent pair are equal to $v$.
Its expression follows by inclusion--exclusion on the events $B_0 = v$ and $B_1 = v$:
\begin{equation}\label{eq:lambda_X,v}
\begin{aligned}
\Lambda_{\mathcal{X},v}(m) &= \Pr \left[ Z_{\mathcal{X},v} = m \right], \\
\Lambda_{\mathcal{X},v}(0) &= 1 - P_{B_0}(v) - P_{B_1}(v) + P_{B_{0,1}}(v, v), \\
\Lambda_{\mathcal{X},v}(1) &= P_{B_0}(v) + P_{B_1}(v) - 2 P_{B_{0,1}}(v, v), \\
\Lambda_{\mathcal{X},v}(2) &= P_{B_{0,1}}(v, v).
\end{aligned}
\end{equation}

An additional argument generalizes the law to any number of independent groups.
Specifically, $\Lambda_{\mathcal{X},v}(n, m)$ is the probability that $n$ independent exponent pairs contribute a total of $m$ occurrences of $v$.
The $n$-fold convolution of~\eqref{eq:lambda_X,v} therefore admits the closed form
\begin{equation}\label{eq:lambda_X,v_n}
\Lambda_{\mathcal{X},v}(n, m) = \!\!\! \sum_{\substack{i \ge 0 \\ 2i \le m \\ i \ge m - n}} \!\! \binom{n}{i, m - 2i, n - m + i} \left[ \Lambda_{\mathcal{X},v}(2) \right]^{i} \left[ \Lambda_{\mathcal{X},v}(1) \right]^{m-2i} \left[ \Lambda_{\mathcal{X},v}(0) \right]^{n-m+i}.
\end{equation}

We extend this idea to the unconstrained positions and define $\Lambda_{U,v}(n, m)$ as the probability that $n$ of them contribute a total of $m$ occurrences of $v$.
These contributions are independent and identically distributed Bernoulli variables~\eqref{eq:Z_U,v}, so $\Lambda_{U,v}(n, m)$ admits the binomial form
\begin{equation}\label{eq:lambda_U,v_n}
\Lambda_{U,v}(n, m) = \binom{n}{m} \left[ P_{B_U} \right]^{m} \left[ 1 - P_{B_U} \right]^{n - m} .
\end{equation}

\begin{lemma}\label{lem:singular_count}
For every $v \in \mathcal{V}$ and every block of $T$ records, the count $\#(v)$ is distributed as the convolution of $\Lambda_{U,v}$ over the $26T$ unconstrained positions with $\Lambda_{\mathcal{X},v}$ over the $12T$ exponent pairs.
\end{lemma}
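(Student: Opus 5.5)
We write $\#(v)$ as a sum of independent contributions and then use the fact that the law of a sum of independent nonnegative integer variables is the convolution of their laws.

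By Definition~\ref{def:vstl}, a block of $T$ records splits into $26T$ unconstrained byte positions and $12T$ exponent pairs, and these $38T$ groups are mutually independent. Each byte of the block lies in exactly one group: the $B_2,B_3$ bytes of every float and the two attribute bytes are unconstrained, while each $(B_0,B_1)$ pair forms one exponent group. This gives $12T\cdot 2+2T=26T$ unconstrained bytes and $12T$ pairs, accounting for all $50T$ bytes. The first step is therefore the decomposition
\begin{equation*}
\#(v) = \sum_{j=1}^{26T} Z_{U,v}\bigl(B_U^{(j)}\bigr) + \sum_{k=1}^{12T} Z_{\mathcal{X},v}\bigl(B_0^{(k)},B_1^{(k)}\bigr),
\end{equation*}
which holds pointwise because every byte equal to $v$ is counted exactly once. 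For an unconstrained position this is immediate from~\eqref{eq:Z_U,v}. For an exponent pair, \eqref{eq:Z_X,v} returns the number of coordinates equal to $v$; the case $b_0=b_1=v$ gives $2$.

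The second step establishes the laws of the summands. Each $Z_{U,v}$ is $\mathrm{Bernoulli}(P_{B_U})$ with $P_{B_U}$ from~\eqref{eq:pu}. Each $Z_{\mathcal{X},v}$ has law $\Lambda_{\mathcal{X},v}$ given by~\eqref{eq:lambda_X,v}, which follows by inclusion--exclusion on the events $B_0=v$ and $B_1=v$ under the joint law~\eqref{eq:pb01}. All summands are independent because they are functions of distinct independent groups.

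The third step groups the summands into two independent sums. The first sum is a sum of $26T$ i.i.d.\ Bernoulli variables, so its law is $\Lambda_{U,v}(26T,\cdot)$ from~\eqref{eq:lambda_U,v_n}. The second sum is a sum of $12T$ i.i.d.\ copies of $Z_{\mathcal{X},v}$, so its law is the $12T$-fold convolution $\Lambda_{\mathcal{X},v}(12T,\cdot)$. The multinomial form~\eqref{eq:lambda_X,v_n} can be checked by choosing which $i$ pairs contribute $2$ and which $m-2i$ pairs contribute $1$; the remaining $n-m+i$ pairs contribute $0$. The two sums depend on disjoint sets of groups, so they are independent, and
\begin{equation*}
\Pr[\#(v)=m] = \sum_{a=0}^{m} \Lambda_{U,v}(26T,a)\,\Lambda_{\mathcal{X},v}(12T,m-a),
\end{equation*}
which is exactly the claimed convolution.

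The argument is routine, and the only point needing care is the first step. There we must confirm that the partition into groups covers the $50$ bytes of each record exactly once. We must also confirm that the dependence within a pair is fully absorbed into $\Lambda_{\mathcal{X},v}$, so that no cross-group dependence remains; this is guaranteed by (A2) and Definition~\ref{def:vstl}.
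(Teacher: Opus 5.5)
Your proposal is correct and follows the same route as the paper. The paper treats the lemma as immediate from defining $\Lambda_{U,v}(n,\cdot)$ and $\Lambda_{\mathcal{X},v}(n,\cdot)$ as the laws of the subtotals over independent groups, and your steps match it: you write $\#(v)$ as a sum of per-group contributions (the same split the paper later records in~\eqref{eq:XvsQ}), identify the binomial and $12T$-fold convolution laws of the two subtotals, and convolve them using the independence of disjoint groups, while your byte-accounting check and your verification of the multinomial form~\eqref{eq:lambda_X,v_n} make explicit what the paper leaves implicit.
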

\begin{equation}\label{eq:singular_count_probability}
\begin{aligned}
\Pr[\#(v) = m] &= \left[ \Lambda_{U,v}(26T, \cdot) \ast \Lambda_{\mathcal{X},v}(12T, \cdot) \right](m) \\
&= \sum_{u} \Lambda_{U,v}(26T, u) \Lambda_{\mathcal{X},v}(12T, m - u).
\end{aligned}
\end{equation}

To count the joint occurrences of two distinct bytes $v_0 \neq v_1$, we evaluate~\eqref{eq:Z_U,v} and~\eqref{eq:Z_X,v} at each of the two values, accounting for the constraints a single group imposes on their co-occurrence.

Let $\Lambda_{\mathcal{X},v_0,v_1}(m_0, m_1)$ be the probability that a single exponent pair contributes $m_0$ occurrences of $v_0$ and $m_1$ occurrences of $v_1$.
Each exponent pair holds only two bytes, so it is necessary that $m_0 + m_1 \le 2$.
Thus,
\begin{equation}\label{eq:lambda_X,v0,v1}
\begin{aligned}
\Lambda_{\mathcal{X},v_0,v_1}(m_0, m_1) &= \Pr \left[ Z_{\mathcal{X},v_0} = m_0, Z_{\mathcal{X},v_1} = m_1 \right], \\
\Lambda_{\mathcal{X},v_0,v_1}(2, 0) &= \Lambda_{\mathcal{X},v_0}(2), \\
\Lambda_{\mathcal{X},v_0,v_1}(0, 2) &= \Lambda_{\mathcal{X},v_1}(2), \\
\Lambda_{\mathcal{X},v_0,v_1}(1, 1) &= P_{B_{0,1}}(v_0, v_1) + P_{B_{0,1}}(v_1, v_0), \\
\Lambda_{\mathcal{X},v_0,v_1}(1, 0) &= \Lambda_{\mathcal{X},v_0}(1) - \Lambda_{\mathcal{X},v_0,v_1}(1, 1), \\
\Lambda_{\mathcal{X},v_0,v_1}(0, 1) &= \Lambda_{\mathcal{X},v_1}(1) - \Lambda_{\mathcal{X},v_0,v_1}(1, 1), \\
\Lambda_{\mathcal{X},v_0,v_1}(0, 0) &= 1 - \!\!\!\!\! \sum_{(a, b) \neq (0, 0)} \!\!\!\!\! \Lambda_{\mathcal{X},v_0,v_1}(a, b).
\end{aligned}
\end{equation}

An additional argument again generalizes the law to any number of independent groups.
Specifically, $\Lambda_{\mathcal{X},v_0,v_1}(n, m_0, m_1)$ is the probability that $n$ independent exponent pairs contribute a total of $m_0$ occurrences of $v_0$ and $m_1$ occurrences of $v_1$.
The $n$-fold convolution of~\eqref{eq:lambda_X,v0,v1} therefore admits the closed form
\begin{equation}\label{eq:lambda_X,v0,v1_n}
\begin{gathered}
\Lambda_{\mathcal{X},v_0,v_1}(n, m_0, m_1) = \\
\begin{aligned}
\sum_{\substack{i, j, k \ge 0 \\ 2i + k \le m_0 \\ 2j + k \le m_1 \\ i + j + k \ge m_0 + m_1 - n}} \!\!\!\!\!\!\!\!\!\! &\binom{n}{i, j, k, m_0 - 2i - k, m_1 - 2j - k, n - m_0 - m_1 + i + j + k} \\
&\quad \times \left[ \Lambda_{\mathcal{X},v_0,v_1}(2,0) \right]^{i} \left[ \Lambda_{\mathcal{X},v_0,v_1}(0,2) \right]^{j} \left[ \Lambda_{\mathcal{X},v_0,v_1}(1,1) \right]^{k} \\
&\quad \times \left[ \Lambda_{\mathcal{X},v_0,v_1}(1,0) \right]^{m_0 - 2i - k} \left[ \Lambda_{\mathcal{X},v_0,v_1}(0,1) \right]^{m_1 - 2j - k} \\
&\quad \times \left[ \Lambda_{\mathcal{X},v_0,v_1}(0,0) \right]^{n - m_0 - m_1 + i + j + k}.
\end{aligned}
\end{gathered}
\end{equation}

We extend this idea to the unconstrained positions as well and define $\Lambda_{U,v_0,v_1}(n, m_0, m_1)$ as the probability that $n$ of them contribute a total of $m_0$ occurrences of $v_0$ and $m_1$ occurrences of $v_1$.
A single unconstrained position carries only one value, so $Z_{U,v_0} Z_{U,v_1} = 0$.
The binomial form~\eqref{eq:lambda_U,v_n} therefore becomes trinomial over two equiprobable outcomes~\eqref{eq:pu} and their complement:
\begin{equation}\label{eq:lambda_U,v0,v1_n}
\Lambda_{U,v_0,v_1}(n, m_0, m_1) = \binom{n}{m_0, m_1, n - m_0 - m_1} \left[ P_{B_U} \right]^{m_0 + m_1} \left[ 1 - 2 P_{B_U} \right]^{n - m_0 - m_1}.
\end{equation}

Convolving the two families over their respective group counts gives the two-dimensional counterpart of Lemma~\ref{lem:singular_count}.

\begin{lemma}\label{lem:joint_count}
For all distinct $v_0, v_1 \in \mathcal{V}$ and every block of $T$ records, the pair of counts $\bigl( \#(v_0), \#(v_1) \bigr)$ is jointly distributed as the two-dimensional convolution of $\Lambda_{U,v_0,v_1}$ over the $26T$ unconstrained positions with $\Lambda_{\mathcal{X},v_0,v_1}$ over the $12T$ exponent pairs.
\end{lemma}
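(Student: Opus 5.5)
The plan is to write the pair of counts as a sum of independent vector-valued contributions, one per group, and then use the fact that the law of a sum of independent random vectors is the convolution of their laws. By Definition~\ref{def:vstl}, a block of $T$ records consists of $26T$ unconstrained bytes and $12T$ exponent pairs, and these form mutually independent groups. The first step is therefore the decomposition
\begin{equation*}
\bigl( \#(v_0), \#(v_1) \bigr) = \sum_{p=1}^{26T} \bigl( Z^{(p)}_{U,v_0}, Z^{(p)}_{U,v_1} \bigr) + \sum_{q=1}^{12T} \bigl( Z^{(q)}_{\mathcal{X},v_0}, Z^{(q)}_{\mathcal{X},v_1} \bigr),
\end{equation*}
where the superscripts index the groups. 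It holds because each byte of the block lies in exactly one group, and~\eqref{eq:Z_U,v} and~\eqref{eq:Z_X,v} count exactly the occurrences of the given value within that group. For an exponent pair with $b_0 = b_1 = v$ the value $2$ is correct, and if neither byte equals $v$ the contribution is $0$.

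The second step is to identify the law of each summand. For an unconstrained position the vector takes only the values $(1,0)$, $(0,1)$ and $(0,0)$, with probabilities $P_{B_U}$, $P_{B_U}$ and $1-2P_{B_U}$. Summing $n$ i.i.d.\ copies gives the trinomial law~\eqref{eq:lambda_U,v0,v1_n}. For an exponent pair the joint law is $\Lambda_{\mathcal{X},v_0,v_1}$ in~\eqref{eq:lambda_X,v0,v1}, which I would verify entry by entry using $v_0 \neq v_1$:
\begin{itemize}
\item the event $(2,0)$ is $\{B_0 = B_1 = v_0\}$, and similarly for $(0,2)$;
\item the event $(1,1)$ is the disjoint union of $\{(B_0,B_1)=(v_0,v_1)\}$ and $\{(B_0,B_1)=(v_1,v_0)\}$;
\item the event $(1,0)$ is $\{Z_{\mathcal{X},v_0}=1\}$ minus the $(1,1)$ event, because a pair containing exactly one $v_0$ has either $v_1$ or some third value in the other slot;
\item $(0,0)$ is the complement, and no other outcome with $m_0+m_1\le 2$ is possible.
\end{itemize}
The $n$-fold convolution of i.i.d.\ copies of $\Lambda_{\mathcal{X},v_0,v_1}$ is $\Lambda_{\mathcal{X},v_0,v_1}(n,\cdot,\cdot)$ in~\eqref{eq:lambda_X,v0,v1_n}. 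This follows from a multinomial count over how many pairs fall into each of the six outcome types:
\begin{itemize}
\item $i$ pairs of type $(2,0)$, $j$ of type $(0,2)$ and $k$ of type $(1,1)$;
\item $m_0-2i-k$ pairs of type $(1,0)$ and $m_1-2j-k$ of type $(0,1)$;
\item the remaining $n-m_0-m_1+i+j+k$ pairs of type $(0,0)$.
\end{itemize}
The index constraints in~\eqref{eq:lambda_X,v0,v1_n} are exactly the nonnegativity of these six numbers.

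Finally, mutual independence of all groups makes the two partial sums independent. Their sum therefore has the two-dimensional convolution law
\begin{equation*}
\Pr\bigl[\#(v_0)=m_0, \#(v_1)=m_1\bigr] = \sum_{u_0,u_1} \Lambda_{U,v_0,v_1}(26T,u_0,u_1)\, \Lambda_{\mathcal{X},v_0,v_1}(12T,m_0-u_0,m_1-u_1),
\end{equation*}
which is the claim. The main obstacle is not the convolution step, which is standard, but the bookkeeping in the exponent-pair law. One must check that the $(1,0)$ and $(0,1)$ entries correctly subtract the mixed outcome, and that the six-type multinomial enumerates each configuration of $n$ pairs exactly once. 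I would settle the first by a case check on the two slots of the pair. I would settle the second by noting that the map from configurations to type counts partitions the outcome space, so summing multinomial probabilities over compatible type counts gives the convolution.
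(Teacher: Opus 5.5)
Your proposal is correct and follows essentially the same route as the paper. You decompose $\bigl(\#(v_0),\#(v_1)\bigr)$ into independent per-group vector contributions, identify the six-outcome law~\eqref{eq:lambda_X,v0,v1} of an exponent pair and the trinomial law~\eqref{eq:lambda_U,v0,v1_n} of the unconstrained positions, and take the $n$-fold convolutions in closed form. You then convolve the two families using the mutual independence of groups from Definition~\ref{def:vstl}, and your explicit checks of the $(1,0)$ entry and of the index constraints in~\eqref{eq:lambda_X,v0,v1_n} fill in bookkeeping the paper leaves implicit.
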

\begin{equation}\label{eq:joint_count}
\begin{gathered}
\Pr[\#(v_0) = m_0, \#(v_1) = m_1] = \left[ \Lambda_{U,v_0,v_1}(26T, \cdot, \cdot) \ast \Lambda_{\mathcal{X},v_0,v_1}(12T, \cdot, \cdot) \right](m_0, m_1) \\
= \sum_{u_0, u_1} \Lambda_{U,v_0,v_1}(26T, u_0, u_1) \Lambda_{\mathcal{X},v_0,v_1}(12T, m_0 - u_0, m_1 - u_1).
\end{gathered}
\end{equation}

\subsection{The Exact Moments}

The count laws of Lemmas~\ref{lem:singular_count} and~\ref{lem:joint_count} involve $v$, respectively $(v_0, v_1)$, only through the marginals~\eqref{eq:pu}, \eqref{eq:pb1}, \eqref{eq:pb0} and~\eqref{eq:pb01}.
The coarsest partition of $\mathcal{V}$ on which all four are constant is the set of categories
\begin{equation}\label{eq:cats}
\Gamma = \left\{ \mathcal{V}^{*}_{lo}, \{127\}, \mathcal{V}^{*}_{hi}, \{255\} \right\}.
\end{equation}
Any two bytes from the same category thus contribute equal terms to~\eqref{eq:muterms} and~\eqref{eq:varterms}, as do any two pairs drawn from the same categories.
The sums over individual bytes are therefore rewritten as sums over $\Gamma$, with each category weighted by its cardinality: $127$ for either majority set, $1$ for either minority byte.

For an arbitrary $v \in \gamma$, the two single-count moments of the category are
\begin{equation}\label{eq:Eh}
\begin{aligned}
\theta_\gamma &= E \left[ \varphi \left( \frac{\#(v)}{N} \right) \right] \\
&= \sum_{m} \varphi \left(\frac{m}{N}\right) \Pr \left[ \#(v) = m \right],
\end{aligned}
\end{equation}
\begin{equation}\label{eq:Eh2}
\begin{aligned}
\theta^{(2)}_\gamma &= E \left[ \varphi \left( \frac{\#(v)}{N} \right)^2 \right] \\
&= \sum_{m} \varphi \left(\frac{m}{N}\right)^2 \Pr \left[ \#(v) = m \right].
\end{aligned}
\end{equation}

For arbitrary $v_0 \in \gamma_0$ and $v_1 \in \gamma_1$ with $v_0 \neq v_1$, the joint moment of the two categories is
\begin{equation}\label{eq:Ehh}
\begin{aligned}
\theta_{\gamma_0 \gamma_1} &= E \left[ \varphi \left( \frac{\#(v_0)}{N} \right) \varphi \left( \frac{\#(v_1)}{N} \right) \right] \\
&= \sum_{m_0, m_1} \varphi \left(\frac{m_0}{N}\right) \varphi \left(\frac{m_1}{N}\right) \Pr[\#(v_0) = m_0, \#(v_1) = m_1].
\end{aligned}
\end{equation}

\begin{proposition}\label{prop:moments}
For every triangle count $T$, the block entropy has mean~\eqref{eq:muterms} and variance~\eqref{eq:varterms} given by the cardinality-weighted sums over $\Gamma$
\begin{equation}\label{eq:mean}
\mu^{\mathrm{VSTL}}(T) = \sum_{\gamma \in \Gamma} \lvert \gamma \rvert\, \theta_\gamma,
\end{equation}
\begin{equation}\label{eq:var}
\begin{aligned}
\left[ \sigma^{\mathrm{VSTL}}(T) \right]^2 &= \sum_{\gamma \in \Gamma} \lvert \gamma \rvert \theta^{(2)}_\gamma + \sum_{\substack{\gamma \in \Gamma \\ \lvert \gamma \rvert \ge 2}} \lvert \gamma \rvert \left( \lvert \gamma \rvert - 1 \right) \theta_{\gamma\gamma} \\
&\quad + \sum_{\substack{\gamma_0, \gamma_1 \in \Gamma \\ \gamma_0 \neq \gamma_1}} \lvert \gamma_0 \rvert \lvert \gamma_1 \rvert \theta_{\gamma_0 \gamma_1} - \left[ \mu^{\mathrm{VSTL}}(T) \right]^{2} .
\end{aligned}
\end{equation}
\end{proposition}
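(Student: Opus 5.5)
The plan is to start from the expressions \eqref{eq:muterms} and \eqref{eq:varterms}, which follow from linearity of expectation and from $\operatorname{Var}(H) = E[H^2] - (E[H])^2$ applied to $H = \sum_v \varphi(\#(v)/N)$. Expanding $H^2$ as a double sum over $(v_0, v_1)$ and splitting off the diagonal $v_0 = v_1$ yields exactly the three terms of \eqref{eq:varterms}. The remaining work is therefore purely combinatorial: show that each summand depends on $v$, or on $(v_0,v_1)$, only through the category or categories in $\Gamma$ that contain it, and then count how many terms share each value.

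For the single-count terms, I will use Lemma~\ref{lem:singular_count}. The law of $\#(v)$ is determined by $P_{B_U}$ together with $\Lambda_{\mathcal{X},v}(0), \Lambda_{\mathcal{X},v}(1), \Lambda_{\mathcal{X},v}(2)$ via \eqref{eq:lambda_X,v_n} and \eqref{eq:lambda_U,v_n}. By \eqref{eq:lambda_X,v}, these depend only on $P_{B_0}(v)$, $P_{B_1}(v)$ and $P_{B_{0,1}}(v,v)$. Reading off \eqref{eq:pb1}, \eqref{eq:pb0} and \eqref{eq:X}, each of these is constant on each $\gamma\in\Gamma$:
\begin{itemize}
\item on $\mathcal{V}^*_{lo}$ and $\mathcal{V}^*_{hi}$ we have $(v,v)\in\mathcal{X}$;
\item for $v=127$ we have $(127,127)\in\mathcal{X}$, since $127\in\mathcal{V}_{lo}$;
\item for $v=255$ we have $(255,255)\notin\mathcal{X}$.
\end{itemize}
Hence $\theta_\gamma$ and $\theta^{(2)}_\gamma$ are well defined, and summing over $v$ gives \eqref{eq:mean} and the first term of \eqref{eq:var}.

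For the joint terms, I will use Lemma~\ref{lem:joint_count}. The law of $(\#(v_0),\#(v_1))$ depends on $v_0, v_1$ only through the six single-pair probabilities in \eqref{eq:lambda_X,v0,v1}. These are built from the single-byte quantities above plus the cross probabilities $P_{B_{0,1}}(v_0,v_1)$ and $P_{B_{0,1}}(v_1,v_0)$. I will therefore verify that membership of $(v_0,v_1)$ and $(v_1,v_0)$ in $\mathcal{X}$ depends only on the categories of $v_0$ and $v_1$. This holds because \eqref{eq:X} tests only whether the first coordinate lies in $\mathcal{V}^*_{lo}\cup\mathcal{V}^*_{hi}$ and whether the second lies in $\mathcal{V}_{lo}$, and $\Gamma$ refines both partitions. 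Thus $\theta_{\gamma_0\gamma_1}$ is well defined for ordered pairs of categories, including $\gamma_0=\gamma_1$ with $v_0\neq v_1$.

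Counting ordered pairs of distinct bytes then finishes the argument. Pairs inside the same category number $|\gamma|(|\gamma|-1)$, which vanishes for the singletons, hence the restriction $|\gamma|\ge 2$. Pairs across distinct categories number $|\gamma_0||\gamma_1|$ for each ordered pair $(\gamma_0,\gamma_1)$. Substituting these counts gives \eqref{eq:var}. The main obstacle is the symmetry check for the cross term $\Lambda_{\mathcal{X},v_0,v_1}(1,1)$. Because $\mathcal{X}$ is not symmetric, I would tabulate $P_{B_{0,1}}(v_0,v_1)+P_{B_{0,1}}(v_1,v_0)$ over the sixteen ordered category pairs to confirm constancy, rather than arguing abstractly.
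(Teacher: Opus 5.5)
Your proposal is correct and follows essentially the same route as the paper. The paper justifies Proposition~\ref{prop:moments} only by observing that the count laws of Lemmas~\ref{lem:singular_count} and~\ref{lem:joint_count} depend on $v$ and $(v_0,v_1)$ solely through the marginals and the joint law $P_{B_{0,1}}$, which are constant on the categories of $\Gamma$ (or on products of them), and then regrouping the sums in~\eqref{eq:muterms} and~\eqref{eq:varterms} with cardinality weights; you make the constancy checks and the ordered-pair counting explicit. Note that your refinement argument already shows that $P_{B_{0,1}}(v_0,v_1)$ and $P_{B_{0,1}}(v_1,v_0)$ are each constant on $\gamma_0\times\gamma_1$, so the sixteen-cell tabulation you flag as the main obstacle is redundant; it is exactly the full-or-empty cell structure recorded later in~\eqref{eq:cells}.
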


\subsection{Consistency with the Uniform Model}

\begin{corollary}\label{cor:uniform}
Under the uniform byte model, with $L=\lvert\mathcal{V}\rvert=256$, Proposition~\ref{prop:moments} reduces to
\begin{equation}\label{eq:wu}
\begin{gathered}
\mu^{\mathrm{uniform}}(T) = L \theta_{\mathcal{V}}, \\
\left[ \sigma^{\mathrm{uniform}}(T) \right]^{2} = L \theta^{(2)}_{\mathcal{V}} + L \left(L - 1 \right) \theta_{\mathcal{V}\mathcal{V}} - \left[ \mu^{\mathrm{uniform}}(T) \right]^{2}.
\end{gathered}
\end{equation}
\end{corollary}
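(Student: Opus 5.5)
Under the uniform byte model every one of the $N$ bytes of a block is drawn independently from $P_{B_U}(v)=1/256$, including the positions that carry exponent pairs. The plan is to apply Proposition~\ref{prop:moments} to this model and to show that its category partition collapses to the single class $\mathcal{V}$.

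First, the exponent-pair laws must coincide with the unconstrained ones. Setting $P_{B_0}=P_{B_1}=1/L$ and $P_{B_{0,1}}(v_0,v_1)=1/L^2$ for all pairs (so $\mathcal{X}=\mathcal{V}^2$), \eqref{eq:lambda_X,v} gives
\[
\Lambda_{\mathcal{X},v}(0)=(1-1/L)^2,\qquad \Lambda_{\mathcal{X},v}(1)=2(1/L)(1-1/L),\qquad \Lambda_{\mathcal{X},v}(2)=1/L^2 .
\]
This is $\mathrm{Binomial}(2,1/L)$, so $\Lambda_{\mathcal{X},v}(n,\cdot)=\Lambda_{U,v}(2n,\cdot)$. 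Convolving binomials with the same success probability in Lemma~\ref{lem:singular_count} then gives $\#(v)\sim\mathrm{Binomial}(26T+24T,1/L)=\mathrm{Binomial}(N,1/L)$.

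Similarly, \eqref{eq:lambda_X,v0,v1} becomes
\[
\Lambda_{\mathcal{X},v_0,v_1}(1,1)=2/L^2,\qquad \Lambda_{\mathcal{X},v_0,v_1}(1,0)=2(1/L)(1-2/L),
\]
and so on. This is the trinomial law of two independent positions with outcome probabilities $1/L,\,1/L,\,1-2/L$. Lemma~\ref{lem:joint_count} therefore yields the trinomial \eqref{eq:lambda_U,v0,v1_n} with $n=N$.

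Both laws are now independent of the particular $v$, respectively of the particular pair $(v_0,v_1)$. The coarsest partition on which all marginals are constant is therefore $\Gamma=\{\mathcal{V}\}$ with $\lvert\gamma\rvert=L$. This step is the one needing care. The four-set $\Gamma$ of \eqref{eq:cats} was tailored to the VSTL marginals, so I must argue that the proof of Proposition~\ref{prop:moments} uses only the fact that $\theta_\gamma$, $\theta^{(2)}_\gamma$ and $\theta_{\gamma_0\gamma_1}$ are constant on categories. That property holds for any partition on which the marginals are constant. As an alternative, one could keep the four categories and merge them, since all their $\theta$'s coincide. In that case the sums $\sum_\gamma\lvert\gamma\rvert(\lvert\gamma\rvert-1)\theta_{\gamma\gamma}+\sum_{\gamma_0\ne\gamma_1}\lvert\gamma_0\rvert\lvert\gamma_1\rvert\theta_{\gamma_0\gamma_1}$ add up to $\bigl[(\sum\lvert\gamma\rvert)^2-\sum\lvert\gamma\rvert\bigr]\theta_{\mathcal{V}\mathcal{V}}=L(L-1)\theta_{\mathcal{V}\mathcal{V}}$.

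Substituting into \eqref{eq:mean} and \eqref{eq:var} then gives the mean $L\theta_{\mathcal{V}}$, the diagonal term $L\theta^{(2)}_{\mathcal{V}}$ and the off-diagonal term $L(L-1)\theta_{\mathcal{V}\mathcal{V}}$. The cross-category sum is empty, or absorbed as above, which yields \eqref{eq:wu}.
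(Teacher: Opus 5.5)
Your proposal is correct and follows the paper's route. Dropping the validity constraint makes every byte value equivalent, so $\Gamma$ collapses to $\{\mathcal{V}\}$, and substituting this into the mean and variance of Proposition~\ref{prop:moments} gives the result; your explicit check that the exponent-pair laws become $\mathrm{Binomial}(2,1/L)$ and the matching trinomial (so $\#(v)\sim\mathrm{Binomial}(N,1/L)$), and your category-merging alternative, only fill in details the paper's short proof leaves implicit.
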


\begin{proof}
Assuming uniformity drops the validity constraint on the exponent pairs.
All byte values are then equivalent, so the partition $\Gamma$ reduces to a set of one category, that category being the alphabet $\mathcal{V}$ itself.
Substituting this partition into~\eqref{eq:mean} and~\eqref{eq:var} yields~\eqref{eq:wu}.

Equation~\eqref{eq:wu} therefore reproduces the mean and variance of~\cite[Sec.~3.3]{local_entropy_2013}.
\end{proof}

\section{The VSTL Global Entropy Test}\label{sec:global}

Throughout this section we assume $N$ grows without bound.
Practical STL files justify this assumption: meshes usually run from thousands of triangles upwards, at fifty bytes per triangle.
Although Proposition~\ref{prop:moments} provides exact reference moments at every file size, their direct evaluation through the nested sums in~\eqref{eq:Eh}--\eqref{eq:Ehh} becomes unnecessarily costly for large files.

What the global measure needs instead is the shape~\cite{adjusted_npcr_uaci_2026} gives to the differential statistics: a reference mean and a reference deviation in closed form, assembled in $O(1)$ arithmetic operations at a fixed numerical precision, and turned into a critical value by a single quantile.
This section derives both VSTL references in second-order asymptotic form, with an explicit $O(N^{-3})$ error bound.

\subsection{The Cross-Entropy Decomposition}

The entropy~\eqref{eq:plugin} of a block is a function of its \emph{empirical byte distribution}
\begin{equation}\label{eq:emp}
Q_S(v) = \frac{\#(v)}{N}, \qquad v \in \mathcal{V},
\end{equation}
so that $\displaystyle H(S) = \sum_{v \in \mathcal{V}} \varphi(Q_S(v))$.

Measured against a fixed probability law, the entropy splits into a cross-entropy and a divergence.
The cross-entropy is linear in $Q_S$, hence a sum of $T$ independent per-record terms, so its mean and its variance follow from the moments of a single record, in closed form and at every file size.
The divergence, by contrast, holds the entire logarithmic nonlinearity in one nonnegative term; its moments cannot be obtained in closed form, and Section~\ref{subsec:secondorder} approximates them by a Taylor expansion.

The law we measure against, and later expand around, is the \emph{record-average marginal}
\begin{equation}\label{eq:pbar}
\bar{P} = \frac{26 P_{B_U} + 12 P_{B_0} + 12 P_{B_1}}{50}.
\end{equation}
It is the average of the fifty per-position marginals of a record~\eqref{eq:pu}--\eqref{eq:pb0} and constant on the four categories~\eqref{eq:cats}:
\begin{equation}\label{eq:pbarvals}
\bar{P}(v) =
\begin{cases}
2129/544000, & v \in \mathcal{V}^{*}_{lo}, \\
1873/544000, & v = 127, \\
1/256, & v \in \mathcal{V}^{*}_{hi}, \\
1869/544000, & v = 255.
\end{cases}
\end{equation}
For an arbitrary $v \in \gamma$, we write $\bar{P}(\gamma) = \bar{P}(v)$ for the common value of the law on the category.
Likewise for $P_{B_U}$, $P_{B_0}$ and $P_{B_1}$.

$\bar{P}(v)$ is the probability that a uniformly random position of a record holds the value $v$, and $\bar{P}$ is also the centre of the empirical distribution $Q_S$: linearity of expectation over the fifty positions of each record gives $E[\#(v)] = N \bar{P}(v)$, and hence $E[Q_S] = \bar{P}$ exactly, at every $T$.
This is precisely what qualifies $\bar{P}$ as the point to expand around: relative to it, the divergence measures how far $Q_S$ strays from the centre of its own fluctuations.

\begin{lemma}\label{lem:crossentropy}
Every block $S$ of $T$ records satisfies the identity
\begin{equation}\label{eq:crossentropy}
H(S) = \sum_{v \in \mathcal{V}} Q_S(v)\lambda \left( \bar{P}(v) \right) - D(Q_S \| \bar{P}) ,
\end{equation}
where
\begin{equation}\label{eq:kl}
D(Q_S \| \bar{P}) = - \sum_{v \in \mathcal{V}} Q_S(v) \lambda \left( \frac{Q_S(v)}{\bar{P}(v)} \right)
\end{equation}
is the Kullback--Leibler divergence in bits~\cite[Sec.~2.3]{cover_thomas_2006}.
\end{lemma}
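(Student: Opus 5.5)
The identity is a pointwise algebraic fact about any probability vector $Q_S$ measured against the strictly positive law $\bar{P}$. We verify it term by term over $v \in \mathcal{V}$, using only the definitions~\eqref{eq:plugin} and~\eqref{eq:kl}. First we record that $\bar{P}(v) > 0$ for every $v$, which is immediate from~\eqref{eq:pbarvals}. Hence $\lambda(\bar{P}(v)) = -\log_2 \bar{P}(v)$ is finite and the ratio $Q_S(v)/\bar{P}(v)$ is well defined.

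We then split the alphabet into the values with $Q_S(v) = 0$ and those with $Q_S(v) > 0$.

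\emph{Values with $Q_S(v) = 0$.} Here $\varphi(Q_S(v)) = 0$ by the convention in~\eqref{eq:plugin}. The cross-entropy term $Q_S(v)\lambda(\bar{P}(v))$ vanishes because $\lambda(\bar{P}(v))$ is finite. The divergence term $Q_S(v)\lambda(Q_S(v)/\bar{P}(v))$ vanishes because $\lambda(0) = 0$. Both sides therefore contribute zero.

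\emph{Values with $Q_S(v) > 0$.} The logarithm splits as $\log_2 Q_S(v) = \log_2 \bigl(Q_S(v)/\bar{P}(v)\bigr) + \log_2 \bar{P}(v)$. In terms of $\lambda$ this reads
\begin{equation*}
\lambda(Q_S(v)) = \lambda\bigl(Q_S(v)/\bar{P}(v)\bigr) + \lambda(\bar{P}(v)).
\end{equation*}
Multiplying by $Q_S(v)$ gives
\begin{equation*}
\varphi(Q_S(v)) = Q_S(v)\lambda(\bar{P}(v)) + Q_S(v)\lambda\bigl(Q_S(v)/\bar{P}(v)\bigr).
\end{equation*}

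Summing over all $v$ yields~\eqref{eq:crossentropy}. The second sum is exactly $-D(Q_S\|\bar{P})$ by~\eqref{eq:kl}. We finally note that the left-hand side equals $H(S)$ by~\eqref{eq:plugin} and~\eqref{eq:emp}.

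The only delicate point is the bookkeeping of the zero-count convention. We must make sure $\lambda$ is never evaluated at a point where the logarithm split fails. The positivity of $\bar{P}$ and the case split above handle this.

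No property of the VSTL distribution beyond $\bar{P} > 0$ is used. The identity therefore holds for every block, not merely in expectation.
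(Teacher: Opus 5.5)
Your proof is correct and takes essentially the same route as the paper. The paper also uses the positivity of $\bar{P}$ and the per-coordinate quotient rule $\lambda(x)-\lambda(y)=\lambda(x/y)$, obtained by adding and subtracting $\lambda(\bar{P}(v))$ in each summand, and it disposes of zero counts by noting that the factor $Q_S(v)$ annihilates every term; your explicit case split on $Q_S(v)=0$ versus $Q_S(v)>0$ is simply a slightly more careful presentation of that same bookkeeping.
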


\begin{proof}
It is convenient to note a property of the surprisal function~\eqref{eq:plugin} first: the quotient rule of the underlying logarithm carries over, so differences of surprisals collapse to ratios of their arguments. For positive $x$ and $y$,
\begin{equation}\label{eq:lambda_quotient_rule}
\lambda(x) - \lambda(y) = -\left( \log_2(x) - \log_2(y) \right) = -\log_2\left(\frac{x}{y}\right) = \lambda \left( \frac{x}{y} \right).
\end{equation}

By~\eqref{eq:pbarvals}, $\forall v \in \mathcal{V}, \bar{P}(v) > 0$; hence $\lambda(\bar{P}(v))$ is finite and the ratio $Q_S(v) / \bar{P}(v)$ is well defined.
Adding and subtracting $\lambda(\bar{P}(v))$ inside each summand of~\eqref{eq:plugin} then gives
\begin{equation}\label{eq:cross_entropy_divergence_decomposition_proof}
\begin{aligned}
H(S)
&= \sum_{v \in \mathcal{V}} \varphi \left( Q_S(v) \right) \\
&= \sum_{v \in \mathcal{V}} Q_S(v) \left[ \lambda \left( Q_S(v) \right) + \lambda \left( \bar{P}(v) \right) - \lambda \left( \bar{P}(v) \right) \right] \\
&= \sum_{v \in \mathcal{V}} Q_S(v) \lambda \left( \bar{P}(v) \right) + \sum_{v \in \mathcal{V}} Q_S(v) \lambda \left( \frac{Q_S(v)}{\bar{P}(v)} \right) \\
&= \sum_{v \in \mathcal{V}} Q_S(v) \lambda \left( \bar{P}(v) \right) - D(Q_S \| \bar{P}).
\end{aligned}
\end{equation}
Terms with $Q_S(v) = 0$ contribute nothing to $H(S)$, since $\varphi(0) = 0$; in the chain above, they likewise vanish because $Q_S(v)$ multiplies every summand and carries its zero through.
The identity is therefore applied only where both of its arguments are positive.
\end{proof}

Write $C_r$ for the \emph{score} of the $r$-th record, namely, the surprisal its fifty bytes carry under $\bar{P}$:
\begin{equation}\label{eq:score}
C_r = \sum_{i=1}^{50} \lambda \left( \bar{P}(S_{r,i}) \right), \qquad r = 1, \ldots, T.
\end{equation}
By Definition~\ref{def:vstl}, the $T$ records are i.i.d., and hence so are their scores.
We denote the score of a generic record by $C$.

Averaged per byte over the block, the scores give back the cross-entropy term of~\eqref{eq:crossentropy}, which we denote by $\bar{C}_S$,
\begin{equation}\label{eq:cbar}
\begin{aligned}
\bar{C}_S &= \sum_{v \in \mathcal{V}} Q_S(v) \lambda \left( \bar{P}(v) \right) \\
&= \frac{1}{N} \sum_{v \in \mathcal{V}} \#(v) \lambda \left( \bar{P}(v) \right) \\
&= \frac{1}{N} \sum_{r=1}^{T} \sum_{i=1}^{50} \lambda \left( \bar{P}(S_{r,i}) \right) \\
&= \frac{1}{N} \sum_{r=1}^{T} C_r.
\end{aligned}
\end{equation}

\begin{lemma}\label{lem:score}
The two moments of the record score are
\begin{equation}\label{eq:Cmom}
\begin{gathered}
E[C] = 26 E \left[ \lambda \left( \bar{P}(B_U) \right) \right] + 12 E \left[ \lambda \left( \bar{P}(B_0) \right) + \lambda \left( \bar{P}(B_1) \right) \right] = 50 H(\bar{P}), \\
\operatorname{Var}[C] = 26 \operatorname{Var} \left[ \lambda \left( \bar{P}(B_U) \right) \right] + 12 \operatorname{Var} \left[ \lambda \left( \bar{P}(B_0) \right) + \lambda \left( \bar{P}(B_1) \right) \right].
\end{gathered}
\end{equation}
Consequently, the moments of the block average score are
\begin{equation}\label{eq:CSmom}
\begin{gathered}
E \left[ \bar{C}_S \right] = \frac{E[C]}{50} = H(\bar{P}), \\
\operatorname{Var} \left[ \bar{C}_S \right] = \frac{\operatorname{Var}[C]}{50 N},
\end{gathered}
\end{equation}
exactly and at every $T$.
\end{lemma}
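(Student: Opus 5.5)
The proof decomposes the record score into the independent groups fixed by Definition~\ref{def:vstl}. Within a record, the fifty positions split into $26$ unconstrained bytes and $12$ exponent pairs $(B_0,B_1)$. The remaining float bytes $B_2,B_3$ and the attribute word are counted among the unconstrained positions, consistent with the count $26T$. These $38$ groups are mutually independent. So I will write
\[
C = \sum_{j=1}^{26} \lambda\left(\bar P(B_{U,j})\right) + \sum_{k=1}^{12}\left[\lambda\left(\bar P(B_{0,k})\right) + \lambda\left(\bar P(B_{1,k})\right)\right],
\]
with the summands independent. Linearity of expectation then gives the first line of~\eqref{eq:Cmom}. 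Independence of the groups gives the variance line: the cross-covariances between distinct groups vanish. Within a pair, $B_0$ and $B_1$ are dependent by~\eqref{eq:pb01}, so the pair must be kept as a single summand. That is why the variance contains $\operatorname{Var}[\lambda(\bar P(B_0))+\lambda(\bar P(B_1))]$ rather than a sum of two separate variances.

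The one nontrivial identity is $E[C]=50H(\bar P)$. I will expand each expectation against its marginal, $E[\lambda(\bar P(B_U))]=\sum_v P_{B_U}(v)\lambda(\bar P(v))$, and likewise for $B_0$ and $B_1$. Each of these uses only the marginals~\eqref{eq:pu}--\eqref{eq:pb0}, since the expectation of a sum needs no joint law. Collecting terms gives $\sum_v\left[26P_{B_U}(v)+12P_{B_0}(v)+12P_{B_1}(v)\right]\lambda(\bar P(v))$. By definition~\eqref{eq:pbar}, this equals $50\sum_v \bar P(v)\lambda(\bar P(v)) = 50\sum_v\varphi(\bar P(v)) = 50H(\bar P)$. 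The argument needs $\lambda(\bar P(v))$ finite, which holds because $\bar P>0$ by~\eqref{eq:pbarvals}.

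For the block average I use~\eqref{eq:cbar}, $\bar C_S = N^{-1}\sum_{r=1}^T C_r$ with $N=50T$. Linearity gives $E[\bar C_S]=T\,E[C]/N = E[C]/50 = H(\bar P)$. Since the records are i.i.d., $\operatorname{Var}[\bar C_S] = T\operatorname{Var}[C]/N^2 = \operatorname{Var}[C]/(50N)$. Neither step uses any approximation, so both results are exact for every $T$.

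The main point of care is the justification for dropping the cross-terms inside the variance. One has to confirm that (A2) and Definition~\ref{def:vstl} confine all dependence to each exponent pair. In particular, $B_2,B_3$ must be independent of $(B_0,B_1)$ within a float. This holds because the admissible set $\mathcal F$ is a product of $\mathcal X$ with $\{0,1\}^{16}$, so uniformity on $\mathcal F$ factorizes. Once this is checked, the rest is routine.
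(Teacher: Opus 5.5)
Your proposal is correct and follows essentially the same route as the paper. Both split each record into $26$ unconstrained bytes and $12$ exponent pairs, get the mean from linearity and the definition of $\bar P$, get the variance by adding over the $38$ independent groups with each pair kept intact, and finish with the i.i.d.\ records for $\bar C_S$. Your explicit check that uniformity on $\mathcal F$ factorizes (so $B_2,B_3$ are independent of the exponent pair) is a small extra justification that the paper simply attributes to (A2) and Definition~\ref{def:vstl}, while the paper additionally re-derives the mean under the joint law $P_{B_{0,1}}$ using Table~\ref{tab:marginals}.
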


\begin{proof}
Write $B_{U,r}$ for the multiset of the unconstrained bytes and $B_{\mathcal{X},r}$ for the multiset of its exponent pairs of the $r$-th record.
Regrouping the fifty terms of~\eqref{eq:score} along this split leaves the sum unchanged,
\begin{equation}\label{eq:Cgroups}
C_r = \sum_{b_U \in B_{U,r}} \lambda \left( \bar{P}(b_U) \right) + \sum_{(b_0, b_1) \in B_{\mathcal{X},r}} \left( \lambda \left( \bar{P}(b_0) \right) + \lambda \left( \bar{P}(b_1) \right) \right).
\end{equation}
Under (A2) the $38$ summands are mutually independent.

\emph{Mean.}
Linearity of expectation in~\eqref{eq:Cgroups} gives
\begin{equation}\label{eq:ECr}
\begin{aligned}
E[C] &= 26 E \left[ \lambda \left( \bar{P}(B_U) \right) \right] + 12 E \left[ \lambda \left( \bar{P}(B_0) \right) \right] + 12 E \left[ \lambda \left( \bar{P}(B_1) \right) \right] \\
&= 26 \sum_{v \in \mathcal{V}} P_{B_U}(v) \lambda \left( \bar{P}(v) \right) + 12 \sum_{v \in \mathcal{V}} P_{B_0}(v) \lambda \left( \bar{P}(v) \right) + 12 \sum_{v \in \mathcal{V}} P_{B_1}(v) \lambda \left( \bar{P}(v) \right) \\
&= \sum_{v \in \mathcal{V}} \left[ 26 P_{B_U} + 12 P_{B_0} + 12 P_{B_1} \right](v) \lambda \left( \bar{P}(v) \right) \\
&= 50 \sum_{v \in \mathcal{V}} \varphi \left( \bar{P}(v) \right) = 50 H(\bar{P}).
\end{aligned}
\end{equation}

The same contribution can be computed while keeping the exponent pair $(B_0,B_1)$ intact and evaluating it under its joint law $P_{B_{0,1}}$.
This approach uses the fact that~\eqref{eq:pb0} and~\eqref{eq:pb1} are the marginals of~\eqref{eq:pb01}, which Table~\ref{tab:marginals} verifies.
By extending the sum to $\mathcal{V}^2$ --- with no change in value, since $P_{B_{0,1}}$ vanishes off $\mathcal{X}$ --- and rearranging the resulting terms, we obtain the same expression:
\begin{equation}\label{eq:pairsplitsum}
\begin{gathered}
\sum_{(v_0, v_1) \in \mathcal{X}} P_{B_{0,1}}(v_0, v_1) \left( \lambda \left( \bar{P}(v_0) \right) + \lambda \left( \bar{P}(v_1) \right) \right) \\
= \sum_{v_0, v_1 \in \mathcal{V}} P_{B_{0,1}}(v_0, v_1) \lambda \left( \bar{P}(v_0) \right) + \sum_{v_0, v_1 \in \mathcal{V}} P_{B_{0,1}}(v_0, v_1) \lambda \left( \bar{P}(v_1) \right) \\
= \sum_{v_0 \in \mathcal{V}} \lambda \left( \bar{P}(v_0) \right) \sum_{v_1 \in \mathcal{V}} P_{B_{0,1}}(v_0, v_1) + \sum_{v_1 \in \mathcal{V}} \lambda \left( \bar{P}(v_1) \right) \sum_{v_0 \in \mathcal{V}} P_{B_{0,1}}(v_0, v_1) \\
= \sum_{v \in \mathcal{V}} P_{B_0}(v) \lambda \left( \bar{P}(v) \right) + \sum_{v \in \mathcal{V}} P_{B_1}(v) \lambda \left( \bar{P}(v) \right).
\end{gathered}
\end{equation}
Thus, for the mean, evaluating the pair jointly gives the same result as evaluating its two components through their marginal laws.

\begin{table}
\caption{
Verification that~\eqref{eq:pb0} and~\eqref{eq:pb1} are the marginals of~\eqref{eq:pb01}.
Each interior cell gives the probability mass contributed by the indicated partner categories.
Zero indicates that~\eqref{eq:X} excludes every pair in that block.
The entries in each row therefore sum directly to the stated marginal probability.
}

\label{tab:marginals}
\centering
\footnotesize
\begin{subtable}{\textwidth}
\centering
\subcaption{$\displaystyle P_{B_0}(v_0) = \sum_{v_1 \in \mathcal{V}} P_{B_{0,1}}(v_0, v_1)$}
\begin{tabular}{lccc}
\toprule
$v_0$ & $v_1 \in \mathcal{V}_{lo}$ & $v_1 \in \mathcal{V}_{hi}$ & $P_{B_0}(v_0)$ \\
\midrule
$\mathcal{V}^{*}_{lo} \cup \mathcal{V}^{*}_{hi}$ & $\dfrac{128}{65280}$ & $\dfrac{128}{65280}$ & $\dfrac{1}{255}$ \\ [8pt]
$\{127, 255\}$ & $\dfrac{128}{65280}$ & $0$ & $\dfrac{1}{510}$ \\
\bottomrule
\end{tabular}
\end{subtable}

\bigskip

\begin{subtable}{\textwidth}
\centering
\subcaption{$\displaystyle P_{B_1}(v_1) = \sum_{v_0 \in \mathcal{V}} P_{B_{0,1}}(v_0, v_1)$}
\begin{tabular}{lccc}
\toprule
$v_1$ & $v_0 \in \mathcal{V}^{*}_{lo} \cup \mathcal{V}^{*}_{hi}$ & $v_0 \in \{127, 255\}$ & $P_{B_1}(v_1)$ \\
\midrule
$\mathcal{V}_{lo}$ & $\dfrac{254}{65280}$ & $\dfrac{2}{65280}$ & $\dfrac{1}{255}$ \\ [8pt]
$\mathcal{V}_{hi}$ & $\dfrac{254}{65280}$ & $0$ & $\dfrac{127}{32640}$ \\
\bottomrule
\end{tabular}
\end{subtable}
\end{table}

Since the $T$ record scores in~\eqref{eq:cbar} have the common mean $E[C]$, the expected value of the block-average score is
\begin{equation}\label{eq:CbarMean}
E \left[ \bar{C}_S \right] = \frac{T E[C]}{N} = \frac{E[C]}{50} = H(\bar{P}).
\end{equation}

\emph{Variance.}
The $38$ groups of~\eqref{eq:Cgroups} are the finest partition of a record into independent parts, and variances add over them and over no coarser split.
Because validity couples the two bytes of every exponent pair, they must be treated jointly, with their contribution given by the variance of $\lambda(\bar{P}(B_0)) + \lambda(\bar{P}(B_1))$, which does not equal the sum of the corresponding single-byte variances, as shown in~\eqref{eq:paircov} below.
By the per-position marginals of Section~\ref{subsec:layout}, the $26$ unconstrained groups and the $12$ exponent pairs are identically distributed under~\eqref{eq:pu} and~\eqref{eq:pb01}, respectively, giving
\begin{equation}\label{eq:varCform}
\operatorname{Var}[C] = 26 \operatorname{Var} \left[ \lambda \left( \bar{P}(B_U) \right) \right] + 12 \operatorname{Var} \left[ \lambda \left( \bar{P}(B_0) \right) + \lambda \left( \bar{P}(B_1) \right) \right] .
\end{equation}
Expanding each of the two terms leaves
\begin{equation}\label{eq:varmoments}
\begin{gathered}
\operatorname{Var} \left[ \lambda \left( \bar{P}(B_U) \right) \right] = E \left[ \lambda \left( \bar{P}(B_U) \right)^2 \right] - \left( E \left[ \lambda \left( \bar{P}(B_U) \right) \right] \right)^2 , \\
\begin{aligned}
\operatorname{Var} \left[ \lambda \left( \bar{P}(B_0) \right) + \lambda \left( \bar{P}(B_1) \right) \right] &= E \left[ \left( \lambda \left( \bar{P}(B_0) \right) + \lambda \left( \bar{P}(B_1) \right) \right)^{2} \right] \\
&\quad - \left( E \left[ \lambda \left( \bar{P}(B_0) \right) + \lambda \left( \bar{P}(B_1) \right) \right] \right)^2 .
\end{aligned}
\end{gathered}
\end{equation}

Unlike the mean, the variance cannot be expressed in terms of $\bar{P}$ alone.
The squared means in~\eqref{eq:varmoments} are not linear in the three class marginals, so~\eqref{eq:pbar} cannot fold them into $\bar{P}$ as it did in~\eqref{eq:ECr}.
Moreover, expanding the exponent pair's second moment introduces the cross term $E \left[ \lambda \left( \bar{P}(B_0) \right) \lambda \left( \bar{P}(B_1) \right) \right]$, which depends on the joint law~\eqref{eq:pb01}, not on its marginals.
The evaluation of the variance therefore requires computing the per-class moments and the moments of the exponent pair under its joint law.

\begin{table}
\caption{The surprisal a byte carries under $\bar{P}$, by category~\eqref{eq:cats}, from the probabilities~\eqref{eq:pbarvals}.}
\label{tab:catscores}
\centering
\footnotesize
\resizebox{\textwidth}{!}{%
\begin{tabular}{lcccc}
\toprule
 & $\mathcal{V}^{*}_{lo}$ & $\{127\}$ & $\mathcal{V}^{*}_{hi}$ & $\{255\}$ \\
\midrule
$\lambda \left( \bar{P}(\gamma) \right)$ & $7.997286891460$ & $8.182111941669$ & $8.000000000000$ & $8.185196272167$ \\
\bottomrule
\end{tabular}
}
\end{table}

By~\eqref{eq:pu}--\eqref{eq:pb0} and~\eqref{eq:pbarvals}, the three marginals and the surprisal under $\bar{P}$ are constant on the categories~\eqref{eq:cats}, the values of the latter being reported in Table~\ref{tab:catscores}.
Hence, for $B_{\circ} \in \{ B_U, B_0, B_1 \}$ and $m = 1, 2$,
\begin{equation}\label{eq:byterule}
\begin{aligned}
E \left[ \lambda \left( \bar{P}(B_{\circ}) \right)^{m} \right] &= \sum_{v \in \mathcal{V}} P_{B_{\circ}}(v) \lambda \left( \bar{P}(v) \right)^{m} \\
&= \sum_{\gamma \in \Gamma} \lvert \gamma \rvert P_{B_{\circ}}(\gamma) \lambda \left( \bar{P}(\gamma) \right)^{m}.
\end{aligned}
\end{equation}
The category partitioning thus reduces each expectation to four summands since all $\lvert \gamma \rvert$ bytes of a given category contribute the same term.

For the exponent pair the weights must instead come from the joint law $P_{B_{0,1}}$.
It is uniform on $\mathcal{X}$ by~\eqref{eq:pb01}, and the surprisal depends on $(v_0, v_1)$ only through their categories, so, to compute a pair moment, only the number of admissible pairs in each cell of $\Gamma \times \Gamma$ is needed:
\begin{equation}\label{eq:cells}
n_{\gamma_0 \gamma_1} =
\begin{array}{c|cccc}
\gamma_0 \big\backslash \gamma_1 & \mathcal{V}^{*}_{lo} & \{127\} & \mathcal{V}^{*}_{hi} & \{255\} \\
\hline
\mathcal{V}^{*}_{lo} & 16129 & 127 & 16129 & 127 \\
\{127\} & 127 & 1 & 0 & 0 \\
\mathcal{V}^{*}_{hi} & 16129 & 127 & 16129 & 127 \\
\{255\} & 127 & 1 & 0 & 0
\end{array}
\end{equation}
Every cell holds all $\lvert \gamma_0 \rvert \lvert \gamma_1 \rvert$ pairs of $\gamma_0 \times \gamma_1$, except the four that the IEEE-754 validity constraint discards.

A moment of the exponent pair therefore reduces to a sixteen-term sum over $\Gamma \times \Gamma$. For $m = 1,2$,
\begin{equation}\label{eq:pairrule}
\begin{aligned}
E \left[ \left( \lambda \left( \bar{P}(B_0) \right) + \lambda \left( \bar{P}(B_1) \right) \right)^{m} \right]  &= \!\!\! \sum_{(v_0, v_1) \in \mathcal{X}} \!\!\! P_{B_{0,1}}(v_0, v_1) \left( \lambda \left( \bar{P}(v_0) \right) + \lambda \left( \bar{P}(v_1) \right) \right)^{m} \\
&= \frac{1}{65280} \sum_{\gamma_0, \gamma_1 \in \Gamma} n_{\gamma_0 \gamma_1} \left( \lambda \left( \bar{P}(\gamma_0) \right) + \lambda \left( \bar{P}(\gamma_1) \right) \right)^{m},
\end{aligned}
\end{equation}
The common denominator factors out, while $n_{\gamma_0 \gamma_1}$ gives the multiplicity of each category pair.

\begin{table}
\caption{First two moments and variances of the byte and exponent-pair surprisal variables under $\bar{P}$.}
\label{tab:moments}
\centering
\footnotesize
\resizebox{\textwidth}{!}{%
\begin{tabular}{lcccc}
\toprule
& $\lambda \left( \bar{P}(B_U) \right)$ & $\lambda \left( \bar{P}(B_0) \right)$ & $\lambda \left( \bar{P}(B_1) \right)$ & $\lambda \left( \bar{P}(B_0) \right) + \lambda \left( \bar{P}(B_1) \right)$ \\
\midrule
$E \left[ \bullet \right]$ & $8.000088841521$ & $7.999368977735$ & $8.000083516013$ & $15.999452493748$ \\
$E \left[ \bullet^{2} \right]$ & $64.001688641127$ & $63.990039589053$ & $64.001603429977$ & $255.982880850774$ \\
$\operatorname{Var} \left[ \bullet \right]$ & $0.000267168905$ & $0.000135547109$ & $0.000267166789$ & $0.000400751074$ \\
\bottomrule
\end{tabular}
}
\end{table}

Table~\ref{tab:moments} reports the first and second moments of the unconstrained-byte and exponent-pair surprisals, required by the two variances in~\eqref{eq:varmoments} and evaluated using~\eqref{eq:byterule} and~\eqref{eq:pairrule}.
Substitution into~\eqref{eq:varCform} then gives

\begin{equation}\label{eq:varCnum}
\begin{gathered}
\operatorname{Var}[C] \approx 0.011755404409.
\end{gathered}
\end{equation}

Since the $T$ record scores in~\eqref{eq:cbar} are independent and have the common variance $\operatorname{Var}[C]$, the variance of the block-average score is
\begin{equation}\label{eq:CbarVar}
\operatorname{Var} \left[ \bar{C}_S \right] = \frac{T \operatorname{Var}[C]}{N^{2}} = \frac{\operatorname{Var}[C]}{50 N}.
\end{equation}
Together with~\eqref{eq:CbarMean}, this establishes~\eqref{eq:CSmom}.

\emph{Covariance within the exponent pair.}
To quantify the dependence within the exponent pair, we compute the covariance of its two byte-surprisal contributions:
\begin{equation}\label{eq:paircov}
\begin{gathered}
\begin{aligned}
\operatorname{Cov} \left[ \lambda \left( \bar{P}(B_0) \right), \lambda \left( \bar{P}(B_1) \right) \right] &= \frac{\operatorname{Var} \left[ \lambda \left( \bar{P}(B_0) \right) + \lambda \left( \bar{P}(B_1) \right) \right]}{2} \\
&\quad - \frac{\operatorname{Var} \left[ \lambda \left( \bar{P}(B_0) \right) \right] + \operatorname{Var} \left[ \lambda \left( \bar{P}(B_1) \right) \right]}{2}
\end{aligned} \\
\operatorname{Cov} \left[ \lambda \left( \bar{P}(B_0) \right), \lambda \left( \bar{P}(B_1) \right) \right] \approx -0.000000981412.
\end{gathered}
\end{equation}
Although the covariance is small, omitting it would overestimate $\operatorname{Var}[C]$ by approximately $0.000023553883$, or $0.2\%$.
\end{proof}

\subsection{The Entropy Ceiling}\label{subsec:ceiling}

Section~\ref{sec:prelim} defines $H_\infty$ as the limit the block entropy attains as the string grows and gives $H_\infty = 8$ bits for independent uniform bytes.
For the VSTL source, if $Q_S = \bar{P}$, then~\eqref{eq:crossentropy} reduces to $H(S) = H(\bar{P})$, with $D(Q_S \,\|\, \bar{P}) = 0$.
Lemma~\ref{lem:score} shows that $H(\bar{P})$ is also the exact mean of the cross-entropy term for every $T$, making it the candidate value of $H_\infty^{\mathrm{VSTL}}$.
The following proposition establishes it.

\begin{proposition}\label{prop:ceiling}
Let
\begin{equation}\label{eq:ceiling}
H_\infty^{\mathrm{VSTL}} = H(\bar{P}) \approx 7.99991479609.
\end{equation}
Then $H(S) \to H_\infty^{\mathrm{VSTL}}$ almost surely as $T \to \infty$.
Moreover, at every $T$,
\begin{equation}\label{eq:ceilingid}
\mu^{\mathrm{VSTL}}(T) = H_\infty^{\mathrm{VSTL}} - E \left[ D(Q_S \| \bar{P}) \right] < H_\infty^{\mathrm{VSTL}}.
\end{equation}
The deficit from $H_\infty$, the entropy ceiling under the uniform-byte model, is
\begin{equation}\label{eq:deficit}
\Delta = H_\infty - H_\infty^{\mathrm{VSTL}} \approx 0.00008520391 \text{ bits}.
\end{equation}
\end{proposition}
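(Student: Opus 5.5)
The plan has four parts. The identity \eqref{eq:ceilingid} comes from taking expectations in Lemma~\ref{lem:crossentropy}. The strict inequality comes from showing that $Q_S \neq \bar{P}$ with positive probability. The almost-sure limit comes from the strong law of large numbers applied to the byte counts record by record. The numerical values follow from \eqref{eq:pbarvals}.

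\emph{Mean identity.} Take expectations in \eqref{eq:crossentropy}. By Lemma~\ref{lem:score}, $E[\bar{C}_S] = H(\bar{P})$ exactly at every $T$. Since $E[H(S)] = \mu^{\mathrm{VSTL}}(T)$ by \eqref{eq:muterms}, this gives $\mu^{\mathrm{VSTL}}(T) = H(\bar{P}) - E[D(Q_S\|\bar{P})]$. The divergence is nonnegative by Gibbs' inequality, which can be cited from \cite[Sec.~2.6]{cover_thomas_2006}, and it vanishes only when $Q_S = \bar{P}$.

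\emph{Strict inequality.} It suffices to exhibit one admissible block of positive probability with $Q_S \neq \bar{P}$. This is easy on integrality grounds alone. By \eqref{eq:pbarvals}, $N\bar{P}(v) = 50T\cdot 2129/544000$ for $v \in \mathcal{V}^{*}_{lo}$. Every such block must then satisfy $\#(v) = 2129T/10880$. Consider the block in which every byte equals $0$; it is admissible, since $(0,0)\in\mathcal{X}$. For this block $Q_S(0)=1 \neq \bar{P}(0)$. Its probability is positive because the VSTL law is uniform on a finite set containing it. Hence $E[D] > 0$, and the inequality is strict for every $T \ge 1$.

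\emph{Almost-sure convergence.} Write $\#(v) = \sum_{r=1}^{T} \#_r(v)$, where $\#_r(v)$ counts occurrences of $v$ in the $r$-th record. The records are i.i.d. by Definition~\ref{def:vstl}, and each $\#_r(v)$ is bounded by $50$. The strong law of large numbers therefore gives $\#(v)/(50T) \to E[\#_1(v)]/50 = \bar{P}(v)$ almost surely, for each of the finitely many $v$. This yields $Q_S \to \bar{P}$ almost surely in $\mathbb{R}^{256}$. The map $q \mapsto \sum_v \varphi(q_v)$ is continuous on the simplex, because $\varphi$ is continuous on $[0,1]$ with $\varphi(0)=0=\lim_{x\downarrow0}x\lambda(x)$. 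Hence $H(S) \to H(\bar{P})$ almost surely. Alternatively, one can argue through \eqref{eq:crossentropy}: $\bar{C}_S \to H(\bar{P})$ by the strong law on the $C_r$, and $D(Q_S\|\bar{P}) \to 0$ by continuity at $\bar{P}$, which has full support.

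\emph{Numerical values.} Evaluate $H(\bar{P}) = \sum_{\gamma}|\gamma|\,\bar{P}(\gamma)\lambda(\bar{P}(\gamma))$ using \eqref{eq:pbarvals} and Table~\ref{tab:catscores}. This gives $\approx 7.99991479609$, and $\Delta = 8 - H(\bar{P})$ follows. The only point needing care is the strict inequality, that is, making sure positive divergence occurs with positive probability. The explicit all-zero block settles it.
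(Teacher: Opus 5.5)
Your proof is correct and takes the same route as the paper: the strong law on bounded i.i.d.\ per-record counts, with continuity of $H$ on the simplex, gives the almost-sure limit, and taking expectations in \eqref{eq:crossentropy} with $E[\bar{C}_S]=H(\bar{P})$ and Gibbs' inequality gives \eqref{eq:ceilingid}. Your all-zero block usefully makes explicit the positive-probability claim that the paper only asserts; the integrality aside is unnecessary, and on its own it would not cover $T$ divisible by $10880$.
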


\begin{proof}
For each $v \in \mathcal{V}$, its counts in the $T$ records are i.i.d.\ by Definition~\ref{def:vstl}, bounded, and have common mean $50 \bar{P}(v)$ by~\eqref{eq:pbar}.
The strong law of large numbers~\cite[Sec.~7.5]{grimmett_stirzaker_2001} gives $Q_S(v) \to \bar{P}(v)$ almost surely for each $v$; since $\mathcal{V}$ is finite, these coordinatewise convergences hold simultaneously, and hence $Q_S \to \bar{P}$ almost surely.
The continuity of $H$ on the probability simplex then gives $H(S) = H(Q_S) \to H(\bar{P}) = H_\infty^{\mathrm{VSTL}}$ almost surely.

Taking expectations in~\eqref{eq:crossentropy} and using $E[\bar{C}_S] = H(\bar{P})$ from Lemma~\ref{lem:score} gives~\eqref{eq:ceilingid}.
Since the divergence is nonnegative and vanishes only at $Q_S = \bar{P}$, its expectation is strictly positive, proving the inequality.
Indeed, for every finite $T$, the empirical byte distribution is not fixed at $\bar{P}$ and therefore differs from it with positive probability.

Finally, $\bar{P}$ is nonuniform, so $H(\bar{P}) = H_\infty^{\mathrm{VSTL}} < H_\infty$, and~\eqref{eq:deficit} follows.
\end{proof}

One distinction remains: $H_\infty^{\mathrm{VSTL}}$ is the limiting entropy of the empirical byte distribution, not the entropy rate of the VSTL source.
Each record consists of the $26$ uniform bytes and $12$ exponent pairs in~\eqref{eq:Cgroups}, so its entropy is
\begin{equation}\label{eq:recordentropy}
26 \lambda \left( \frac{1}{\lvert \mathcal{V} \rvert} \right) + 12 \lambda \left( \frac{1}{\lvert \mathcal{X} \rvert} \right) = 12 \lambda \left( \frac{1}{\lvert \mathcal{F} \rvert} \right) + 2 \lambda \left( \frac{1}{\lvert \mathcal{V} \rvert} \right).
\end{equation}
The source entropy rate is therefore
\begin{equation}\label{eq:sourcerate}
h^{\mathrm{VSTL}} := \frac{26 H(P_{B_U}) + 12 H(P_{B_{0,1}})}{50}.
\end{equation}
The marginal and joint entropies are reported in Table~\ref{tab:sourceentropies}.

\begin{table}
\caption{Entropies of the three byte marginals and the joint law of an exponent pair.}
\label{tab:sourceentropies}
\centering
\footnotesize
\resizebox{\textwidth}{!}{%
\begin{tabular}{lcccc}
\toprule
& $P_{B_U}$ & $P_{B_0}$ & $P_{B_1}$ & $P_{B_{0,1}}$ \\
\midrule
$H(\bullet)$ & $8$ & $7.998275005486$ & $7.999988906584$ & $15.994353436859$ \\
\bottomrule
\end{tabular}
}
\end{table}

Two standard inequalities then give
\begin{equation}\label{eq:rate}
H_\infty^{\mathrm{VSTL}} \ge \frac{26 H(P_{B_U}) + 12 \left( H(P_{B_0}) + H(P_{B_1}) \right)}{50} \ge h^{\mathrm{VSTL}}.
\end{equation}
The first inequality is the concavity of $H$~\cite[Thm.~2.7.3]{cover_thomas_2006}, applied to $\bar{P}$; the difference is approximately $0.000331457193$ bits per byte.
The second inequality is the independence bound~\cite[Thm.~2.6.6]{cover_thomas_2006}.
For each exponent pair, the difference between the sum of the marginal entropies and the joint entropy is the mutual information
\begin{equation}\label{eq:pairmi}
I(B_0; B_1) = H(P_{B_0}) + H(P_{B_1}) - H(P_{B_{0,1}}) \approx 0.003910475212 \text{ bits}.
\end{equation}
Accumulated over all exponent pairs and normalized per byte, this gives the difference of approximately $0.000938514051$ bits per byte between the two sides of the second inequality.
Overall, the total gap is $0.001269971244$ bits per byte, about fifteen times $\Delta$.

These two differences explain why the limiting entropy of the empirical byte distribution differs from the source entropy rate.
Reading~\eqref{eq:rate} from right to left, the independence bound first replaces the joint law $P_{B_{0,1}}$ in $h^{\mathrm{VSTL}}$ with its two marginals.
Concavity then replaces the position-dependent marginals with their average $\bar{P}$.
Accordingly, $h^{\mathrm{VSTL}}$ reflects both the dependence within exponent pairs and the differences among positional marginals, whereas $H_\infty^{\mathrm{VSTL}}$ is the entropy of their pooled byte law $\bar{P}$.

Because the global entropy statistic operates at this aggregate level, the analysis below uses $H_\infty^{\mathrm{VSTL}}$ as its reference value rather than $h^{\mathrm{VSTL}}$.

\subsection{Second-Order Expansions of the Reference Values}\label{subsec:secondorder}

A Taylor expansion of $H(S)$ about $\bar{P}$ relates fluctuations of the empirical byte distribution to the reference mean and variance.
The decomposition established in Lemma~\ref{lem:crossentropy} separates a linear cross-entropy score, whose moments are given by Lemma~\ref{lem:score}, from a divergence, which is expanded coordinatewise with a remainder that quantifies the approximation error.
Each coordinate of the whole-file empirical distribution averages contributions from independent, identically distributed records, so the required expectations and covariances can be calculated from the joint moments for a single record.

The difference between the empirical byte distribution and its mean recurs throughout the expansion.
For brevity, this fluctuation at any $v \in \mathcal{V}$ is abbreviated as
\begin{equation}\label{eq:delta_S,v}
\delta_{S,v} = Q_S(v) - \bar{P}(v).
\end{equation}

For a given $K \ge 2$ and a function $f$ with $K+1$ continuous derivatives between $a$ and $x$, Taylor's formula with an integral remainder is
\begin{equation}\label{eq:taylorgeneric}
f(x) = \sum_{k=0}^{K} \frac{f^{(k)}(a)}{k!} (x - a)^{k} + \int_{a}^{x} \frac{(x - t)^{K}}{K!} f^{(K+1)}(t) dt ,
\end{equation}
This identity follows from the fundamental theorem of calculus after $K$ integrations by parts~\cite[Ch.~7]{apostol_1967}.

Entropy is separable across byte values: each summand involves a single coordinate, so its Taylor expansion requires only the univariate derivatives of $\varphi$ given by
\begin{equation}\label{eq:phider}
\begin{gathered}
\varphi'(x) = \lambda \left( x \right) - \frac{1}{\ln(2)}, \\
\varphi^{(k)}(x) = \frac{(-1)^{k-1} (k-2)!}{\ln(2) x^{k-1}}, \quad k \ge 2.
\end{gathered}
\end{equation}
The formula for $\varphi^{(k)}$ is used in the classical entropy expansion of Basharin~\cite[Eq.~(5)]{basharin_1959}.

Applying~\eqref{eq:taylorgeneric} to each coordinate, with $a=\bar{P}(v)$ and $x=Q_S(v)$, gives
\begin{equation}\label{eq:htaylor}
\begin{aligned}
H(S) = {}& \sum_{v \in \mathcal{V}} \varphi \left( \bar{P}(v) \right) + \sum_{v \in \mathcal{V}} \varphi' \left( \bar{P}(v) \right) \delta_{S,v} \\
&+ \sum_{k=2}^{K} \sum_{v \in \mathcal{V}}  \frac{\delta_{S,v}^{k} \varphi^{(k)} \left( \bar{P}(v) \right)}{k!} + \sum_{v \in \mathcal{V}} R_{K,v},
\end{aligned}
\end{equation}
where $R_{K,v}$ is the coordinatewise remainder
\begin{equation}\label{eq:remdef}
R_{K,v} = \int_{\bar{P}(v)}^{Q_S(v)} \frac{\left( Q_S(v) - t \right)^{K}}{K!} \varphi^{(K+1)}(t) dt.
\end{equation}
As established earlier, the expansion point is the exact mean of the empirical byte distribution, so the deviations $\delta_{S,v}$ are centred.
Furthermore, since every $\bar{P}(v)$ is positive, all derivatives in~\eqref{eq:htaylor} are finite at the expansion point.

The linear term in~\eqref{eq:htaylor} can now be simplified by substituting the expression for $\varphi'(x)$ from~\eqref{eq:phider}:
\begin{equation}\label{eq:linterm}
\begin{aligned}
\sum_{v \in \mathcal{V}} \delta_{S,v} \varphi' \left( \bar{P}(v) \right) &= \sum_{v \in \mathcal{V}} Q_S(v) \lambda \left( \bar{P}(v) \right) - \sum_{v \in \mathcal{V}} \varphi \left( \bar{P}(v) \right) \\
&\quad  - \frac{1}{\ln(2)} \left( \sum_{v \in \mathcal{V}} Q_S(v) - \sum_{v \in \mathcal{V}} \bar{P}(v) \right) \\
&= \bar{C}_S - H_\infty^{\mathrm{VSTL}}.
\end{aligned}
\end{equation}
Histogram normalization gives $\displaystyle \sum_{v \in \mathcal{V}} Q_S(v) = \sum_{v \in \mathcal{V}} \bar{P}(v) = 1$; therefore, the constant term in $\varphi'(x)$ makes no contribution.

Substituting~\eqref{eq:linterm} into~\eqref{eq:htaylor} and comparing the result with Lemma~\ref{lem:crossentropy} isolates the divergence:
\begin{equation}\label{eq:dtaylor}
- D(Q_S \| \bar{P}) = \sum_{k=2}^{K} \frac{(-1)^{k-1}}{\ln(2)k(k-1)} \sum_{v \in \mathcal{V}} \frac{\delta_{S,v}^{k}}{\bar{P}(v)^{k-1}} + \sum_{v \in \mathcal{V}} R_{K,v}.
\end{equation}
The equation remains separable across byte values and expresses the nonlinear part of the entropy in powers of the per-coordinate deviations, beginning with their squares.

To determine an appropriate value for $K$ so that the reference mean and variance are approximated with sufficient accuracy, we must first analyze the error left after truncating the expansion at that degree.
For each coordinate, this error is $R_{K,v}$.
In the following calculations, its magnitude is shown to be bounded by a constant multiple of $\lvert \delta_{S,v} \rvert^{K+1}$.

Substituting the formula for $\varphi^{(K+1)}$ from~\eqref{eq:phider} into~\eqref{eq:remdef} gives
\begin{equation}\label{eq:remratio}
\begin{aligned}
R_{K,v} &= \int_{\bar{P}(v)}^{Q_S(v)} \frac{\left(Q_S(v) - t \right)^K}{K!} \cdot \frac{(-1)^K (K-1)!}{\ln(2)t^K} dt \\
&= \frac{(-1)^K}{\ln(2)K} \int_{\bar{P}(v)}^{Q_S(v)} \left( \frac{Q_S(v) - t}{t} \right)^K dt,
\end{aligned}
\end{equation}
This expression is rewritten by parametrizing the line segment from $\bar{P}(v)$ to $Q_S(v)$.
This substitution allows us to factor out a power of the deviation $\delta_{S,v}$, leaving an integral that can be bounded uniformly.
The parameter runs from $u=0$ at the expansion point to $u=1$ at the empirical probability, giving
\begin{equation}\label{eq:subs}
\begin{gathered}
t = \bar{P}(v) + u \delta_{S,v}, \qquad u \in [0, 1], \\
t \big|_{u=0} = \bar{P}(v), \qquad t \big|_{u=1} = Q_S(v), \\
dt = \delta_{S,v} du,
\end{gathered}
\end{equation}
which transforms~\eqref{eq:remratio} into
\begin{equation}\label{eq:remsub}
\begin{aligned}
R_{K,v} &= \frac{(-1)^K}{\ln(2)K} \int_{0}^{1} \left( \frac{\left( 1 - u \right) \delta_{S,v}}{\bar{P}(v) + u \delta_{S,v}} \right)^{K} \delta_{S,v} du \\
&= \frac{(-1)^K \delta_{S,v}^{K+1}}{\ln(2)K} \int_{0}^{1} \left( \frac{1 - u}{\left( 1 - u \right) \bar{P}(v) + u Q_S(v)} \right)^{K} du.
\end{aligned}
\end{equation}

When $Q_S(v)=0$ and $u=1$, the ratio in~\eqref{eq:remsub} is undefined, but the integral can still be defined as an improper integral by approaching the upper integration endpoint from below~\cite[Sec.~10.23]{apostol_1967}.
The resulting limit exists because, for $Q_S(v)=0$ and $u \in [0,1)$, the ratio is constant at $1/\bar{P}(v)$.
Assigning this same value at $u=1$ therefore gives a continuous extension whose ordinary integral agrees with that limit.

The preceding argument establishes the following inequalities:
\begin{equation}\label{eq:kernel}
\begin{gathered}
(1 - u) \bar{P}(v) \le (1 - u) \bar{P}(v) + u Q_S(v), \\
0 \le \frac{1 - u}{(1 - u) \bar{P}(v) + u Q_S(v)} \le \frac{1}{\bar{P}(v)}, \\
0 \le \int_{0}^{1} \left( \frac{1 - u}{(1 - u) \bar{P}(v) + u Q_S(v)} \right)^{K} du \le \int_{0}^{1} \frac{1}{\bar{P}(v)^{K}} du = \frac{1}{\bar{P}(v)^{K}}.
\end{gathered}
\end{equation}
Combining this result with~\eqref{eq:remsub} gives the error bound
\begin{equation}\label{eq:rembound}
\lvert R_{K,v} \rvert \le \frac{\lvert \delta_{S,v} \rvert^{K+1}}{\ln(2) K \bar{P}(v)^{K}} .
\end{equation}

\subsubsection{The Reference Mean}\label{subsubsec:asymean}

To proceed with the asymptotic analysis, we first require moments of the histogram deviations.
Fix a byte $v \in \mathcal{V}$ and denote its count in the $r$-th record by $X_{v,r}\in\{0,\ldots,50\}$.
By Definition~\ref{def:vstl}, these counts are i.i.d.\ across the $T$ records, and their generic representative is denoted by $X_v$.
Using the multisets $B_{U,r}$ and $B_{\mathcal{X},r}$ introduced in~\eqref{eq:Cgroups} and the counting functions $Z_{U,v}$ and $Z_{\mathcal{X},v}$ defined in~\eqref{eq:Z_U,v} and~\eqref{eq:Z_X,v}, we obtain
\begin{equation}\label{eq:XvsQ}
\begin{gathered}
X_{v,r} = \sum_{b_U \in B_{U,r}} Z_{U,v}(b_U) + \sum_{(b_0, b_1) \in B_{\mathcal{X},r}} Z_{\mathcal{X},v}(b_0, b_1), \\
Q_S(v) = \frac{\#(v)}{N} = \frac{1}{N} \sum_{r=1}^{T} X_{v,r}, \\
E \left[ X_v \right] = 50 \bar{P}(v).
\end{gathered}
\end{equation}
Thus, although $Q_S(v)$ is computed over the entire block, it is a normalized sum of independent record-level counts.

The corresponding centred record count is therefore
\begin{equation}\label{eq:Xtilde}
\tilde{X}_v := X_v - E \left[ X_v \right].
\end{equation}
By construction, $E[\tilde{X}_v]=0$~\cite[Def.~(3.3.5)]{grimmett_stirzaker_2001}, but for every higher order the central moment follows from the binomial expansion~\cite[Sec.~5-4]{papoulis_pillai_2002}
\begin{equation}\label{eq:centralbinom}
E \left[ \tilde{X}_v^{k} \right] = \sum_{i=0}^{k} (-1)^{k-i} \binom{k}{i} E \left[ X_v \right]^{k-i} E \left[ X_v^{i} \right].
\end{equation}

\begin{proposition}\label{prop:asymean}
As $T \to \infty$, the VSTL reference mean admits the following second-order expansion:
\begin{equation}\label{eq:asymean}
\begin{aligned}
\mu^{\mathrm{VSTL}}(T) = {}& H_\infty^{\mathrm{VSTL}} - \frac{1}{100 \ln(2) N} \sum_{v \in \mathcal{V}} \frac{E[\tilde{X}_v^{2}]}{\bar{P}(v)} \\
- &\frac{1}{100 \ln(2) N^{2}} \sum_{v \in \mathcal{V}} \left[ \frac{E[\tilde{X}_v^{2}]^{2}}{100 \bar{P}(v)^{3}} - \frac{E[\tilde{X}_v^{3}]}{3 \bar{P}(v)^{2}} \right] + O(N^{-3}).
\end{aligned}
\end{equation}
\end{proposition}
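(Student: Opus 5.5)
The plan is to take expectations in the divergence expansion~\eqref{eq:dtaylor} and combine them with~\eqref{eq:ceilingid} of Proposition~\ref{prop:ceiling}. Since $\mu^{\mathrm{VSTL}}(T) = H_\infty^{\mathrm{VSTL}} - E[D(Q_S\|\bar{P})]$ exactly, it suffices to show
\[
E\left[-D(Q_S\|\bar{P})\right] = \sum_{k=2}^{K} \frac{(-1)^{k-1}}{\ln(2)k(k-1)} \sum_{v \in \mathcal{V}} \frac{E[\delta_{S,v}^{k}]}{\bar{P}(v)^{k-1}} + \sum_{v\in\mathcal{V}} E[R_{K,v}]
\]
and to identify each term up to $O(N^{-3})$. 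I will truncate at $K=5$. The reason is that the remainder bound~\eqref{eq:rembound} with $K=4$ only gives $E\lvert\delta_{S,v}\rvert^{5} = O(N^{-5/2})$, which is too coarse. With $K=5$ it gives $\lvert R_{5,v}\rvert \le \delta_{S,v}^{6}/(5\ln(2)\bar{P}(v)^{5})$. This bound holds deterministically, including when $Q_S(v)=0$ by the continuous extension discussed before~\eqref{eq:kernel}, so $\lvert E[R_{5,v}]\rvert \le E[\delta_{S,v}^{6}]/(5\ln(2)\bar{P}(v)^5)$.

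The key input is the moments of $\delta_{S,v}$. By~\eqref{eq:XvsQ}, $\delta_{S,v} = N^{-1}\sum_{r=1}^{T}\tilde{X}_{v,r}$ is a normalized sum of $T = N/50$ i.i.d.\ centred bounded variables. Expanding the $k$-th power of the sum and discarding every monomial that contains a lone factor (it has mean zero) gives, in terms of the central moments~\eqref{eq:centralbinom}:
\begin{itemize}
\item $E[\delta_{S,v}^{2}] = T E[\tilde{X}_v^2]/N^2 = E[\tilde{X}_v^2]/(50N)$;
\item $E[\delta_{S,v}^{3}] = E[\tilde{X}_v^3]/(50N^2)$;
\item $E[\delta_{S,v}^{4}] = \bigl(T E[\tilde{X}_v^4] + 3T(T-1)E[\tilde{X}_v^2]^2\bigr)/N^4 = 3E[\tilde{X}_v^2]^2/(2500N^2) + O(N^{-3})$;
\item $E[\delta_{S,v}^{5}] = O(T^2/N^5) = O(N^{-3})$, since the largest surviving index pattern is one square times one cube;
\item $E[\delta_{S,v}^{6}] = O(T^3/N^6) = O(N^{-3})$.
\end{itemize}
The general principle behind the last two items is that a monomial with $j$ distinct record indices, each of multiplicity at least $2$, requires $j \le k/2$. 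Boundedness of $X_v \in \{0,\ldots,50\}$ makes every constant finite.

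Substituting these moments, the $k=2$ term becomes $-\frac{1}{2\ln 2}\cdot\frac{E[\tilde{X}_v^2]}{50N\bar{P}(v)}$, which is the first correction in~\eqref{eq:asymean}. The $k=3$ term gives $+\frac{E[\tilde{X}_v^3]}{300\ln(2) N^2\bar{P}(v)^2}$. The $k=4$ term gives $-\frac{1}{12\ln 2}\cdot\frac{3E[\tilde{X}_v^2]^2}{2500N^2\bar{P}(v)^3} = -\frac{E[\tilde{X}_v^2]^2}{10000\ln(2)N^2\bar{P}(v)^3}$. Together these two reproduce the bracketed $N^{-2}$ term, and the $k=5$ term and the remainder are $O(N^{-3})$. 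Summing over the finite alphabet $\mathcal{V}$ preserves all orders, because each $\bar{P}(v)$ is bounded away from $0$ by~\eqref{eq:pbarvals}.

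The main obstacle is the remainder control: one must pick $K$ large enough that $E[\delta_{S,v}^{K+1}]$ is genuinely $O(N^{-3})$, and justify the bound at $Q_S(v)=0$. The first is handled by choosing $K=5$ together with the index-counting argument above. The second is already supplied by the uniform kernel bound~\eqref{eq:kernel}. The remaining bookkeeping, namely the exact combinatorial coefficient $3T(T-1)$ and tracking $T = N/50$, is routine.
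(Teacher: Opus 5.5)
Your proposal is correct and follows essentially the same route as the paper's proof. Both arguments reduce the problem to $E[D(Q_S\|\bar{P})]$ via \eqref{eq:ceilingid}, obtain the deviation moments by counting record-index multiplicities (so $E[\delta_{S,v}^{k}]=O(N^{-\lceil k/2\rceil})$), truncate at $K=5$ so that the even-order bound $E[\delta_{S,v}^{6}]=O(N^{-3})$ controls the remainder, and substitute the $k=2,3,4$ moments (including the $3T(T-1)$ coefficient) into \eqref{eq:dtaylor}.
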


\begin{proof}
By Proposition~\ref{prop:ceiling} and~\eqref{eq:ceilingid}, it suffices to expand $E \left[ D(Q_S \| \bar{P}) \right]$.

Equations~\eqref{eq:XvsQ} and~\eqref{eq:Xtilde} show that each histogram deviation is a normalized sum of centred i.i.d.\ record counts:
\begin{equation}\label{eq:devsum}
\begin{aligned}
\delta_{S,v} &= \frac{1}{N} \sum_{r=1}^{T} X_{v,r} - \bar{P}(v) \\
&= \frac{1}{N} \sum_{r=1}^{T} \left( X_{v,r} - 50 \bar{P}(v) \right) \\
&= \frac{1}{N} \sum_{r=1}^{T} \tilde{X}_{v,r}.
\end{aligned}
\end{equation}
Consequently, the moments of the histogram deviations can be expressed through moments of the centred record counts.

Raising the final identity in~\eqref{eq:devsum} to the $k$-th power yields a sum over $k$-tuples $(r_1, \ldots, r_k)$ of not necessarily distinct record indices, where each summand is a product of the form $\tilde{X}_{v,r_1} \cdots \tilde{X}_{v,r_k}$.
Independence factorizes the expectation of each term, with each represented record contributing a centred moment of order equal to its multiplicity.
Since $E[\tilde{X}_v]=0$, a term vanishes unless each of its constituent records has multiplicity at least two.
For $k \ge 2$,
\begin{equation}\label{eq:devexplicit}
\begin{aligned}
E \left[ \delta_{S,v}^{k} \right] &= \frac{1}{N^{k}} \sum_{r_1 = 1}^{T} \cdots \sum_{r_k = 1}^{T} E \left[ \tilde{X}_{v,r_1} \cdots \tilde{X}_{v,r_k} \right] \\
&= \frac{1}{N^{k}}\displaystyle \sum_{\ell = 1}^{\lfloor k/2 \rfloor} \binom{T}{\ell} \sum_{\substack{a_1, \ldots, a_\ell \ge 2 \\ a_1 + \cdots + a_\ell = k}} \binom{k}{a_1, \ldots, a_\ell} \prod_{i=1}^{\ell} E \left[ \tilde{X}_v^{a_i} \right].
\end{aligned}
\end{equation}
The integer $\ell$ is the number of distinct records represented in a surviving $k$-tuple, and the binomial coefficient $\binom{T}{\ell}$ counts the ways to select these $\ell$ records from the block.
The integer $a_i$ is the multiplicity of the $i$-th selected record, and the multinomial coefficient $\binom{k}{a_1,\ldots,a_\ell}$ counts the corresponding arrangements.

Since every $a_i\ge2$ and $a_1+\cdots+a_\ell=k$, at most $\lfloor k/2\rfloor$ distinct records can be represented in an arrangement of the $k$ factors.
Consequently, the upper limit on $\ell$ in~\eqref{eq:devexplicit} also determines the asymptotic order
\begin{equation}\label{eq:mean_prob_diff_asymptotics}
E \left[ \delta_{S,v}^{k} \right] = O \left( N^{- \lceil k / 2 \rceil} \right).
\end{equation}

Therefore, the deviation moments contributing through $O(N^{-2})$ in the Taylor expansion are
\begin{equation}\label{eq:dmoments}
\begin{aligned}
E \left[ \delta_{S,v}^{2} \right] = {}& \frac{E \left[ \tilde{X}_v^{2} \right]}{50 N}, \\
E \left[ \delta_{S,v}^{3} \right] = {}& \frac{E \left[ \tilde{X}_v^{3} \right]}{50 N^2}, \\
E \left[ \delta_{S,v}^{4} \right] = {}& \frac{3 E \left[ \tilde{X}_v^{2} \right]^{2}}{2500 N^2} + \frac{E \left[ \tilde{X}_v^{4} \right] - 3 E \left[ \tilde{X}_v^{2} \right]^{2} }{50 N^3}. \\
\end{aligned}
\end{equation}

Equations~\eqref{eq:rembound} and~\eqref{eq:mean_prob_diff_asymptotics} now provide everything needed to choose an appropriate value for $K$ so that the remainder bound has the required asymptotic order.
The first equation involves an absolute moment and the second, a signed one; for odd $K+1$, the two are conceptually different: positive and negative deviations can cancel in the signed moment, whereas both contribute positively to the absolute moment.
However, the two coincide when $K+1$ is even, so we choose the smallest such power whose moment is $O(N^{-3})$, namely $K+1=6$, and hence fix $K=5$.
For this choice, the resulting bound is
\begin{equation}\label{eq:remorder}
\begin{gathered}
\sum_{v \in \mathcal{V}} \Bigl\lvert E \left[ R_{5,v} \right] \Bigr\rvert \le \sum_{v \in \mathcal{V}} \frac{E \left[ \delta_{S,v}^{6} \right]}{5 \ln(2) \bar{P}(v)^{5}} \\
\sum_{v \in \mathcal{V}} \Bigl\lvert E \left[ R_{5,v} \right] \Bigr\rvert = O(N^{-3}).
\end{gathered}
\end{equation}

Substituting~\eqref{eq:dmoments} into~\eqref{eq:dtaylor} determines the leading terms, while $O(N^{-3})$ absorbs all remaining terms and, by~\eqref{eq:remorder}, the remainder.
Equation~\eqref{eq:ceilingid} then gives~\eqref{eq:asymean}.
\end{proof}

To compute the leading terms in~\eqref{eq:asymean}, the moments of $\tilde{X}_v$ must be evaluated.
For the two group counts in~\eqref{eq:XvsQ}, define the centred variables
\begin{equation}\label{eq:groupcopies}
\begin{gathered}
\tilde{Z}_{U,v} := Z_{U,v} - E \left[ Z_{U,v} \right], \\
\tilde{Z}_{\mathcal{X},v} := Z_{\mathcal{X},v} - E \left[ Z_{\mathcal{X},v} \right].
\end{gathered}
\end{equation}
Taking expectations in~\eqref{eq:XvsQ} and subtracting the resulting identity from the original decomposition writes $\tilde{X}_v$ as a sum of mutually independent centred variables.
Squaring this sum and taking expectations leaves only the diagonal terms: every cross term vanishes because independence factorizes it into a product of zero means.
Cubing the sum likewise leaves only the pure-cube terms: every mixed term contains at least one centred variable to the first power and therefore has zero expectation.
Thus,
\begin{equation}\label{eq:varsigma}
\begin{gathered}
E \left[ \tilde{X}_v^{2} \right] = 26 E \left[ \tilde{Z}_{U,v}^{2} \right] + 12 E \left[ \tilde{Z}_{\mathcal{X},v}^{2} \right], \\
E \left[ \tilde{X}_v^{3} \right] = 26 E \left[ \tilde{Z}_{U,v}^{3} \right] + 12 E \left[ \tilde{Z}_{\mathcal{X},v}^{3} \right].
\end{gathered}
\end{equation}
Equivalently, these identities express the additivity of cumulants over independent summands~\cite[Sec.~5-5]{papoulis_pillai_2002}.

For the Bernoulli count $Z_{U,v}$, its indicator definition in~\eqref{eq:Z_U,v} implies the standard idempotence property
\begin{equation}\label{eq:E_Z_U,v^k}
\begin{gathered}
Z_{U,v}^k = Z_{U,v}, \\
E \left[ Z_{U,v}^k \right] = P_{B_U},
\end{gathered}
\qquad k \ge 1.
\end{equation}
Applying the same binomial expansion as in~\eqref{eq:centralbinom} to $\tilde{Z}_{U,v}$ for $k = 2,3$ therefore yields
\begin{equation}\label{eq:meanZ_U}
\begin{aligned}
E \left[ \tilde{Z}_{U,v}^{2} \right] &= E \left[ Z_{U,v}^{2} \right] - E \left[ Z_{U,v} \right]^{2} \\
&= P_{B_U} (1 - P_{B_U}), \\
E \left[ \tilde{Z}_{U,v}^{3} \right] &= E \left[ Z_{U,v}^{3} \right] - 3 E \left[ Z_{U,v} \right] E \left[ Z_{U,v}^{2} \right] + 2 E \left[ Z_{U,v} \right]^{3} \\
&= P_{B_U} (1 - P_{B_U})(1 - 2 P_{B_U}).
\end{aligned}
\end{equation}

The moment of order $k$ of $Z_{\mathcal{X},v}$ under its law $\Lambda_{\mathcal{X},v}$~\eqref{eq:lambda_X,v} is
\begin{equation}
E \left[ Z_{\mathcal{X},v}^{k} \right] = \Lambda_{\mathcal{X},v}(1) + 2^k \Lambda_{\mathcal{X},v}(2), \qquad k \ge 1.
\end{equation}
The same binomial expansion as in~\eqref{eq:centralbinom}, now applied to $\tilde{Z}_{\mathcal{X},v}$ for $k = 2,3$, gives
\begin{equation}\label{eq:meanZ_X}
\begin{aligned}
E \left[ \tilde{Z}_{\mathcal{X},v}^{2} \right] &= E \left[ Z_{\mathcal{X},v}^{2} \right] - E \left[ Z_{\mathcal{X},v} \right]^{2} \\
&= \Lambda_{\mathcal{X},v}(1) + 4 \Lambda_{\mathcal{X},v}(2) - \left( \Lambda_{\mathcal{X},v}(1) + 2 \Lambda_{\mathcal{X},v}(2) \right)^{2}, \\
E \left[ \tilde{Z}_{\mathcal{X},v}^{3} \right] &= E \left[ Z_{\mathcal{X},v}^{3} \right] - 3 E \left[ Z_{\mathcal{X},v} \right] E \left[ Z_{\mathcal{X},v}^{2} \right] + 2 E \left[ Z_{\mathcal{X},v} \right]^{3} \\
&= \Lambda_{\mathcal{X},v}(1) + 8 \Lambda_{\mathcal{X},v}(2) - 3 \left( \Lambda_{\mathcal{X},v}(1) + 2 \Lambda_{\mathcal{X},v}(2) \right) \left( \Lambda_{\mathcal{X},v}(1) + 4 \Lambda_{\mathcal{X},v}(2) \right) \\
& \quad + 2 \left( \Lambda_{\mathcal{X},v}(1) + 2 \Lambda_{\mathcal{X},v}(2) \right)^{3}.
\end{aligned}
\end{equation}

Substituting~\eqref{eq:varsigma}, \eqref{eq:meanZ_U}, and~\eqref{eq:meanZ_X} into~\eqref{eq:asymean} gives
\begin{equation}\label{eq:meanconstants}
\begin{gathered}
\frac{1}{100 \ln(2)} \sum_{v \in \mathcal{V}} \frac{E[\tilde{X}_v^{2}]}{\bar{P}(v)} \approx 183.943323730404 \\
\frac{1}{100 \ln(2)} \sum_{v \in \mathcal{V}} \left[ \frac{E[\tilde{X}_v^{2}]^{2}}{100 \bar{P}(v)^{3}} - \frac{E[\tilde{X}_v^{3}]}{3 \bar{P}(v)^{2}} \right] \approx 7879.931052184461
\end{gathered}
\end{equation}
The numerical expansion of the reference mean therefore is
\begin{equation}\label{eq:meanassembled}
\boxed{
\begin{aligned}
\mu^{\mathrm{VSTL}}(T) &= 7.99991479609 - \frac{183.943323730404}{N} \\
&\quad - \frac{7879.931052184461}{N^2} + O \left( N^{-3} \right)
\end{aligned}
}.
\end{equation}

\subsubsection{The Reference Variance}\label{subsubsec:asyvar}

From the cross-entropy decomposition and the established expansion, the reference mean has been determined by taking expectations; taking variances instead expresses the reference variance as
\begin{equation}\label{eq:varsplit}
\left[ \sigma^{\mathrm{VSTL}}(T) \right]^2 = \operatorname{Var} \left[ \bar{C}_S \right] - 2 \operatorname{Cov} \left[ \bar{C}_S, D(Q_S \| \bar{P}) \right] + \operatorname{Var}[D(Q_S \| \bar{P})].
\end{equation}

The first term in~\eqref{eq:varsplit} follows from Lemma~\ref{lem:score}.
The covariance term is obtained from~\eqref{eq:dtaylor} with $K=5$ by linearity of covariance, thereby separating the contributions of the polynomial part of the Taylor expansion from its remainder:
\begin{equation}\label{eq:covexpand}
\begin{gathered}
\operatorname{Cov} \left[ \bar{C}_S, D(Q_S \| \bar{P}) \right] = - \operatorname{Cov} \left[ \bar{C}_S, \sum_{k=2}^{5} \frac{(-1)^{k-1}}{\ln(2)k(k-1)} \sum_{v \in \mathcal{V}} \frac{\delta_{S,v}^{k}}{\bar{P}(v)^{k-1}} + \sum_{v \in \mathcal{V}} R_{5,v} \right] \\
= - \sum_{k=2}^{5} \frac{(-1)^{k-1}}{\ln(2)k(k-1)} \sum_{v \in \mathcal{V}} \frac{\operatorname{Cov} \left[ \bar{C}_S, \delta_{S,v}^{k} \right]}{\bar{P}(v)^{k-1}} - \sum_{v \in \mathcal{V}} \operatorname{Cov} \left[ \bar{C}_S, R_{5,v} \right].
\end{gathered}
\end{equation}
Analogously, the variance of the divergence is obtained from the same expansion by bilinearity of covariance:
\begin{equation}\label{eq:varDexpand}
\begin{gathered}
\operatorname{Var}[D(Q_S \| \bar{P})] = \operatorname{Var} \left[ \sum_{k=2}^{5} \frac{(-1)^{k-1}}{\ln(2)k(k-1)} \sum_{v \in \mathcal{V}} \frac{\delta_{S,v}^{k}}{\bar{P}(v)^{k-1}} + \sum_{v \in \mathcal{V}} R_{5,v} \right] \\
= \frac{1}{\ln(2)^{2}} \sum_{k_0, k_1 = 2}^{5} \frac{(-1)^{k_0-1}}{k_0 (k_0 - 1)} \frac{(-1)^{k_1-1}}{k_1 (k_1 - 1)} \sum_{v_0, v_1 \in \mathcal{V}} \frac{\operatorname{Cov} \left[ \delta_{S,v_0}^{k_0}, \delta_{S,v_1}^{k_1} \right]}{\bar{P}(v_0)^{k_0 - 1} \bar{P}(v_1)^{k_1 - 1}} \\
+ \frac{2}{\ln(2)} \sum_{k = 2}^{5} \frac{(-1)^{k-1}}{k(k-1)} \sum_{v_0, v_1 \in \mathcal{V}} \frac{\operatorname{Cov} \left[ \delta_{S,v_0}^{k}, R_{5,v_1} \right]}{\bar{P}(v_0)^{k-1}} + \operatorname{Var} \left[ \sum_{v \in \mathcal{V}} R_{5,v} \right].
\end{gathered}
\end{equation}

Denote the centred score of the $r$-th record by
\begin{equation}\label{eq:Ctilde}
\tilde{C}_r := C_r - E \left[ C \right].
\end{equation}
By Definition~\ref{def:vstl}, these centred scores are i.i.d.\ across the $T$ records; write $\tilde{C}$ for a generic representative.

Taking expectations in the final line of~\eqref{eq:cbar} and subtracting the resulting identity from the original gives
\begin{equation}\label{eq:Cbarcentred}
\bar{C}_S - E \left[ \bar{C}_S \right] = \frac{1}{N} \sum_{r=1}^{T} \tilde{C}_r.
\end{equation}

\begin{proposition}\label{prop:asyvar}
As $T \to \infty$, the VSTL reference variance admits the following second-order expansion:
\begin{equation}\label{eq:asyvar}
\begin{aligned}
\left[ \sigma^{\mathrm{VSTL}}(T) \right]^{2} = {}& \frac{\operatorname{Var} \left[ C \right]}{50 N} - \frac{1}{50 \ln(2) N^{2}} \sum_{v \in \mathcal{V}} \frac{E \left[ \tilde{C} \tilde{X}_v^{2} \right]}{\bar{P}(v)} \\
& + \frac{1}{2500 \ln(2) N^{2}} \sum_{v \in \mathcal{V}}\frac{E \left[ \tilde{C} \tilde{X}_v \right] E \left[ \tilde{X}_v^{2} \right]}{\bar{P}(v)^{2}} \\
& + \frac{1}{5000 \ln(2)^2 N^2} \sum_{v_0, v_1 \in \mathcal{V}} \frac{E \left[ \tilde{X}_{v_0} \tilde{X}_{v_1} \right]^{2}}{\bar{P}(v_0) \bar{P}(v_1)} + O \left( N^{-3} \right).
\end{aligned}
\end{equation}
\end{proposition}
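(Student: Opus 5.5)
The proof starts from the decomposition \eqref{eq:varsplit} and expands each of its three terms only up to order $N^{-2}$. The first term is exact by Lemma~\ref{lem:score}: $\operatorname{Var}[\bar{C}_S] = \operatorname{Var}[C]/(50N)$. This gives the leading term of \eqref{eq:asyvar}. The remaining work is to find which pieces of the expansions \eqref{eq:covexpand} and \eqref{eq:varDexpand} reach $O(N^{-2})$, and to show that everything else, including the remainders, is $O(N^{-3})$.

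For the covariance term, write $\bar{C}_S - E[\bar{C}_S] = N^{-1}\sum_r \tilde{C}_r$ by \eqref{eq:Cbarcentred} and $\delta_{S,v} = N^{-1}\sum_r \tilde{X}_{v,r}$ by \eqref{eq:devsum}. Then expand $\operatorname{Cov}[\bar{C}_S, \delta_{S,v}^k] = E[(\bar{C}_S - E\bar{C}_S)\,\delta_{S,v}^k]$ as a sum over $(k+1)$-tuples of record indices, using the same counting argument as in \eqref{eq:devexplicit}. Here records whose multiplicity is at least two contribute mixed moments $E[\tilde{C}^{a}\tilde{X}_v^{b}]$. This gives the order $O(N^{-\lceil (k+1)/2\rceil})$.

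For $k=2$ only one record can appear, giving $E[\tilde{C}\tilde{X}_v^2]/(50N^2)$. For $k=3$ the leading configuration pairs $\tilde{C}$ with one $\tilde{X}_v$ and the other two $\tilde{X}_v$ with each other. This gives $3E[\tilde{C}\tilde{X}_v]E[\tilde{X}_v^2]/(2500N^2)$, plus terms of order $N^{-3}$. For $k=4,5$ the contributions are $O(N^{-3})$.

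Multiplying by the coefficients $1/(2\ln 2)$ and $-1/(6\ln 2)$, and by the factor $-2$ from \eqref{eq:varsplit}, should reproduce the second and third terms of \eqref{eq:asyvar}. The third-order term then appears with coefficient $2\cdot 3/(6\cdot 2500)$, which is $1/2500$. This is the sign and factor check I will carry out first.

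For $\operatorname{Var}[D]$, only the pair $k_0=k_1=2$ survives at order $N^{-2}$. Its value is
\[
\operatorname{Cov}[\delta_{v_0}^2, \delta_{v_1}^2] = E[\delta_{v_0}^2\delta_{v_1}^2] - E[\delta_{v_0}^2]E[\delta_{v_1}^2].
\]
In the four-index expansion, the configuration pairing indices $(v_0,v_0)$ and $(v_1,v_1)$ cancels against the product of means. The two cross pairings then give $2T^2E[\tilde{X}_{v_0}\tilde{X}_{v_1}]^2/N^4 = 2E[\tilde{X}_{v_0}\tilde{X}_{v_1}]^2/(2500N^2)$, up to $O(N^{-3})$ single-record terms. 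With the prefactor $1/(4\ln^2 2)$ this yields the final term of \eqref{eq:asyvar}. Cross terms with $k_0+k_1\ge 5$ are $O(N^{-3})$ by the same multiplicity count: a covariance of products of $m$ centred averages is $O(N^{-\lceil m/2\rceil})$, and it also loses the disconnected leading term.

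The main obstacle is the remainder terms $\operatorname{Cov}[\bar{C}_S, R_{5,v}]$, $\operatorname{Cov}[\delta_{v_0}^k, R_{5,v_1}]$ and $\operatorname{Var}[\sum_v R_{5,v}]$, since these are not polynomial. I will bound them with Cauchy--Schwarz together with \eqref{eq:rembound}. This gives $|R_{5,v}| \le c_v\,\delta_{S,v}^6$, so that $\operatorname{Var}[R]\le E[R^2] = O(E[\delta^{12}]) = O(N^{-6})$. It also gives
\[
|\operatorname{Cov}[\bar{C}_S, R_{5,v}]| \le \sqrt{\operatorname{Var}[\bar{C}_S]\,E[\delta^{12}]} = O(N^{-1/2}N^{-3}),
\]
and similarly $|\operatorname{Cov}[\delta^k, R]| = O(N^{-k/2}N^{-3})$. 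All of these are $O(N^{-3})$. The needed orders $E[\delta_{S,v}^{2m}] = O(N^{-m})$ follow from \eqref{eq:mean_prob_diff_asymptotics}, because the record counts are bounded. Collecting the surviving terms then gives \eqref{eq:asyvar}.
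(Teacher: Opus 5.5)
Your proposal is correct and follows essentially the same route as the paper. The shared steps are the split \eqref{eq:varsplit}, the record-tuple counting for $\operatorname{Cov}[\bar{C}_S,\delta_{S,v}^{k}]$ (your order $O(N^{-\lceil (k+1)/2\rceil})$ is identical to the paper's $O(N^{-\lfloor k/2\rfloor-1})$), keeping only the $k_0=k_1=2$ pair covariance in $\operatorname{Var}[D]$, and Cauchy--Schwarz bounds on the three remainder contributions via \eqref{eq:rembound}; one cosmetic slip is that $\tilde{C}$ enters each tuple exactly once, so only mixed moments $E[\tilde{C}\tilde{X}_v^{b}]$ arise, never $E[\tilde{C}^{a}\tilde{X}_v^{b}]$ with $a>1$.
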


\begin{proof}
We evaluate in turn all covariances and variances appearing in~\eqref{eq:covexpand} and~\eqref{eq:varDexpand}, retaining every contribution through order $N^{-2}$.

We begin with the covariance between the score fluctuation and powers of the deviations of the empirical byte distribution.
By~\eqref{eq:Cbarcentred}, the score fluctuation is an average of centred record scores and therefore has expectation zero.
Consequently, the covariance reduces to the product moment
\begin{equation}\label{eq:covraw}
\operatorname{Cov} \left[ \bar{C}_S, \delta_{S,v}^{k} \right] = E \left[ \delta_{S,v}^{k} \left( \bar{C}_S - E[\bar{C}_S] \right) \right].
\end{equation}

Substituting~\eqref{eq:devsum} and~\eqref{eq:Cbarcentred} into~\eqref{eq:covraw} yields a sum over $(k+1)$-tuples $(r_0,r_1,\ldots,r_k)$ of not necessarily distinct record indices, where each summand is a product of the form $\tilde{C}_{r_0}\tilde{X}_{v,r_1}\cdots\tilde{X}_{v,r_k}$.
Independence factorizes the expectation of each term; a record of multiplicity $m$ contributes $E[\tilde{C}\tilde{X}_v^{m-1}]$ when it is indexed by $r_0$, and $E[\tilde{X}_v^m]$ otherwise.
Since $E[\tilde{C}]=E[\tilde{X}_v]=0$, a term vanishes unless $r_0$ coincides with at least one of $r_1,\ldots,r_k$ and every other represented record occurs at least twice.
For $k \ge 2$,
\begin{equation}\label{eq:covexplicit}
\begin{gathered}
E \left[ \delta_{S,v}^{k} \left( \bar{C}_S - E[\bar{C}_S] \right) \right] = \frac{1}{N^{k+1}} \sum_{r_0 = 1}^{T} \sum_{r_1 = 1}^{T} \cdots \sum_{r_k = 1}^{T} E \left[ \tilde{C}_{r_0} \tilde{X}_{v,r_1} \cdots \tilde{X}_{v,r_k} \right] \\
= \frac{1}{50 N^k} \sum_{\ell = 0}^{\lfloor (k-1)/2 \rfloor} \binom{T-1}{\ell} \sum_{\substack{a_1, \ldots, a_\ell \ge 2, \; b \ge 1 \\ a_1 + \cdots + a_\ell + b = k}} \binom{k}{a_1, \ldots, a_\ell, b } E \left[ \tilde{C} \tilde{X}_v^{b} \right] \prod_{i=1}^{\ell} E \left[ \tilde{X}_v^{a_i} \right].
\end{gathered}
\end{equation}

The index $r_0$ identifies the record contributing $\tilde{C}$.
The integer $\ell$ is the number of additional distinct records represented among the $k$ deviation factors, and the binomial coefficient $\binom{T-1}{\ell}$ counts the ways to select them from the remaining records.
The integer $b$ is the multiplicity of $r_0$ among those factors, while $a_i$ is the multiplicity of the $i$-th additional record; the multinomial coefficient $\binom{k}{a_1,\ldots,a_\ell,b}$ counts the corresponding arrangements.

Since $b\ge1$ and every $a_i\ge2$, at most $\lfloor(k-1)/2\rfloor$ additional records can be represented among the $k$ deviation factors.
Consequently, the upper limit on $\ell$ in~\eqref{eq:covexplicit} determines the asymptotic order
\begin{equation}\label{eq:cov_prob_diff_asymptotics}
\operatorname{Cov} \left[ \bar{C}_S, \delta_{S,v}^{k} \right] = O \left( N^{- \lfloor k / 2 \rfloor - 1} \right).
\end{equation}

Therefore, the covariance terms contributing through $O(N^{-2})$ in~\eqref{eq:covexpand} are
\begin{equation}\label{eq:covterms}
\begin{aligned}
\operatorname{Cov} \left[ \bar{C}_S, \delta_{S,v}^{2} \right] &= \frac{E \left[ \tilde{C} \tilde{X}_v^{2} \right]}{50 N^2}, \\
\operatorname{Cov} \left[ \bar{C}_S, \delta_{S,v}^{3} \right] &= \frac{3 E \left[ \tilde{C} \tilde{X}_v \right] E \left[ \tilde{X}_v^{2} \right]}{2500 N^2} + \frac{E \left[ \tilde{C} \tilde{X}_v^{3} \right] - 3 E \left[ \tilde{C} \tilde{X}_v \right] E \left[ \tilde{X}_v^{2} \right]}{50 N^3}. \\
\end{aligned}
\end{equation}

We next consider the covariance between powers of the deviations at two byte values $v_0$ and $v_1$.
The covariance is obtained by subtracting the product of the two marginal moments from their joint moment:
\begin{equation}\label{eq:pair_cov}
\operatorname{Cov} \left[ \delta_{S,v_0}^{k_0}, \delta_{S,v_1}^{k_1} \right] = E \left[ \delta_{S,v_0}^{k_0} \delta_{S,v_1}^{k_1} \right] - E \left[ \delta_{S,v_0}^{k_0} \right] E \left[ \delta_{S,v_1}^{k_1} \right].
\end{equation}
It remains only to evaluate the joint moment, since equations~\eqref{eq:devexplicit}--\eqref{eq:dmoments} already determine the product of the marginal moments in~\eqref{eq:pair_cov}.

Substituting~\eqref{eq:devsum} for both deviations in $E \left[ \delta_{S,v_0}^{k_0} \delta_{S,v_1}^{k_1} \right]$ yields a sum over $(k_0+k_1)$-tuples of not necessarily distinct record indices, where each summand contains $k_0$ factors associated with $v_0$ and $k_1$ factors associated with $v_1$.
Independence factorizes the expectation of each term; a record represented $a$ times among the factors associated with $v_0$ and $b$ times among those associated with $v_1$ contributes the joint centred moment $E[\tilde{X}_{v_0}^{a}\tilde{X}_{v_1}^{b}]$.
Since $E[\tilde{X}_{v_0}]=E[\tilde{X}_{v_1}]=0$, a term vanishes unless every represented record carries at least two factors in total.
For $k_0, k_1 \ge 2$, we have
\begin{equation}\label{eq:varexplicit}
\begin{gathered}
E \left[ \delta_{S,v_0}^{k_0} \delta_{S,v_1}^{k_1} \right] = \frac{1}{N^{k_0+k_1}} \sum_{r_1 = 1}^{T} \cdots \sum_{r_{k_0+k_1} = 1}^{T} E \left[ \prod_{i=1}^{k_0} \tilde{X}_{v_0,r_i} \prod_{i=1}^{k_1} \tilde{X}_{v_1,r_{k_0+i}} \right] \\
= \frac{1}{N^{k_0+k_1}} \sum_{\ell = 1}^{\lfloor (k_0+k_1)/2 \rfloor} \binom{T}{\ell} \sum_{\substack{a_1 + \cdots + a_\ell = k_0 \\ b_1 + \cdots + b_\ell = k_1 \\ a_i, b_i \ge 0 \\ a_i + b_i \ge 2}} \binom{k_0}{a_1, \ldots, a_\ell} \binom{k_1}{b_1, \ldots, b_\ell} \prod_{i=1}^{\ell} E \left[ \tilde{X}_{v_0}^{a_i} \tilde{X}_{v_1}^{b_i} \right].
\end{gathered}
\end{equation}

The integer $\ell$ is the number of distinct records represented in a surviving $(k_0+k_1)$-tuple, and the binomial coefficient $\binom{T}{\ell}$ counts the ways to select them from the block.
The integers $a_i$ and $b_i$ are the multiplicities of the $i$-th selected record among the factors associated with $v_0$ and $v_1$, respectively; the two multinomial coefficients count the corresponding arrangements within the two groups.

Since every $a_i+b_i\ge2$ and their sum over all represented records is $k_0+k_1$, at most $\lfloor(k_0+k_1)/2\rfloor$ distinct records can appear.
Consequently, the upper limit on $\ell$ in~\eqref{eq:varexplicit} determines the asymptotic order
\begin{equation}\label{eq:pair_prob_diff_asymptotics}
E \left[ \delta_{S,v_0}^{k_0} \delta_{S,v_1}^{k_1} \right] = O \left( N^{- \lceil (k_0 + k_1) / 2 \rceil} \right)
\end{equation}

It follows from~\eqref{eq:mean_prob_diff_asymptotics} and~\eqref{eq:pair_prob_diff_asymptotics} that the product of the marginal moments is of no greater asymptotic order than the joint moment. Their difference~\eqref{eq:pair_cov} therefore obeys the same bound:
\begin{equation}\label{eq:pair_cov_asymptotics}
\operatorname{Cov} \left[ \delta_{S,v_0}^{k_0}, \delta_{S,v_1}^{k_1} \right] = O \left( N^{- \lceil (k_0 + k_1) / 2 \rceil} \right)
\end{equation}

The only joint moment required through $O(N^{-2})$ is
\begin{equation}\label{eq:varchi}
\begin{aligned}
E \left[ \delta_{S,v_0}^{2} \delta_{S,v_1}^{2} \right] &= \frac{E \left[ \tilde{X}_{v_0}^{2} \right] E \left[ \tilde{X}_{v_1}^{2} \right] + 2 E \left[ \tilde{X}_{v_0} \tilde{X}_{v_1} \right]^{2}}{2500 N^{2}} \\
&\quad + \frac{E \left[ \tilde{X}_{v_0}^{2} \tilde{X}_{v_1}^{2} \right] - E \left[ \tilde{X}_{v_0}^{2} \right] E \left[ \tilde{X}_{v_1}^{2} \right] - 2 E \left[ \tilde{X}_{v_0} \tilde{X}_{v_1} \right]^{2}}{50 N^{3}}. \\
\end{aligned}
\end{equation}

All remaining terms involve the Taylor remainder, whose variance provides the common bound needed for their analysis.
Since $R_{5,v}$ is square-integrable, we may use the standard identity $\operatorname{Var}[Y]=E[Y^{2}]-E[Y]^{2}$, which implies $\operatorname{Var}[Y] \le E[Y^{2}]$.
Squaring~\eqref{eq:rembound} at $K=5$, taking expectations, and applying~\eqref{eq:mean_prob_diff_asymptotics} with $k=12$ therefore determines the asymptotic order of $\operatorname{Var}[R_{5,v}]$:
\begin{equation}\label{eq:remsq}
\begin{gathered}
\operatorname{Var} \left[ R_{5,v} \right] \le E \left[ R_{5,v}^{2} \right] \le \frac{E \left[ \delta_{S,v}^{12} \right]}{25 \ln(2)^{2} \bar{P}(v)^{10}}, \\
\operatorname{Var} \left[ R_{5,v} \right] = O(N^{-6}).
\end{gathered}
\end{equation}

The next term to be analyzed is the covariance between the score fluctuation and the remainder.
The Cauchy--Schwarz inequality bounds the absolute value of the covariance between two random variables by the square root of the product of their variances~\cite[App.~10B]{papoulis_pillai_2002}.
Lemma~\ref{lem:score} shows that $\operatorname{Var}[\bar{C}_S]=O(N^{-1})$; combining this with the bound for $R_{5,v}$ in~\eqref{eq:remsq} gives
\begin{equation}\label{eq:cov_cs_r5}
\begin{gathered}
\Bigl\lvert \operatorname{Cov} \left[ \bar{C}_S, R_{5,v} \right] \Bigr\rvert \le \sqrt{\operatorname{Var} \left[ \bar{C}_S \right] \operatorname{Var} \left[ R_{5,v} \right]} \le \sqrt{\operatorname{Var} \left[ \bar{C}_S \right] E \left[ R_{5,v}^{2} \right]} \\
\Bigl\lvert \operatorname{Cov} \left[ \bar{C}_S, R_{5,v} \right] \Bigr\rvert = O \left( N^{-7/2} \right)
\end{gathered}
\end{equation}

Likewise, the same Cauchy--Schwarz inequality is applied to $\operatorname{Cov}[\delta_{S,v_0}^{k},R_{5,v_1}]$, with $v_0$ indexing the deviation and $v_1$ the remainder~\cite[App.~10B]{papoulis_pillai_2002}.
Equation~\eqref{eq:mean_prob_diff_asymptotics}, together with the general variance bound established above, shows that $\operatorname{Var} \left[ \delta_{S,v_0}^{k} \right] \le E \left[ \delta_{S,v_0}^{2k} \right] = O(N^{-k})$; combining this with the bound for $R_{5,v_1}$ in~\eqref{eq:remsq} gives
\begin{equation}\label{eq:cov_prob_diff_r5}
\begin{gathered}
\Bigl\lvert \operatorname{Cov} \left[ \delta_{S,v_0}^{k}, R_{5,v_1} \right] \Bigr\rvert \le \sqrt{\operatorname{Var} \left[ \delta_{S,v_0}^{k} \right] \operatorname{Var} \left[ R_{5,v_1} \right]} \\
\Bigl\lvert \operatorname{Cov} \left[ \delta_{S,v_0}^{k}, R_{5,v_1} \right] \Bigr\rvert = O \left( N^{- k / 2 - 3} \right)
\end{gathered}
\end{equation}

The final remainder contribution in~\eqref{eq:varDexpand} is the variance of the sum of $R_{5,v}$ over all byte values.
The finite-dimensional Cauchy--Schwarz inequality~\cite[Thm.~1.41]{apostol_1967} then bounds the squared sum before the expectation is taken:
\begin{equation}\label{eq:var_sum_r5}
\begin{gathered}
\operatorname{Var} \left[ \sum_{v \in \mathcal{V}} R_{5,v} \right] \le E \left[ \left( \sum_{v \in \mathcal{V}} R_{5,v} \right)^2 \right] \le \lvert \mathcal{V} \rvert \sum_{v \in \mathcal{V}} E \left[ R_{5,v}^{2} \right] \\
\operatorname{Var} \left[ \sum_{v \in \mathcal{V}} R_{5,v} \right] = O \left( N^{-6} \right) .
\end{gathered}
\end{equation}

Substituting the results from~\eqref{eq:covterms}, \eqref{eq:pair_cov}, \eqref{eq:dmoments}, and~\eqref{eq:varchi} into the expansions~\eqref{eq:covexpand} and~\eqref{eq:varDexpand} determines all contributions through order $N^{-2}$.
Equations~\eqref{eq:cov_prob_diff_asymptotics}, \eqref{eq:pair_cov_asymptotics}, and~\eqref{eq:remsq}--\eqref{eq:var_sum_r5} show that every remaining contribution is $O(N^{-3})$ or smaller.
Equation~\eqref{eq:varsplit}, together with Lemma~\ref{lem:score}, then gives~\eqref{eq:asyvar}.
\end{proof}

To evaluate the mixed moments in~\eqref{eq:asyvar}, we first decompose the centred score $\tilde{C}_r$ into contributions from the unconstrained bytes and exponent pairs of the $r$-th record, by analogy with~\eqref{eq:Cgroups} and~\eqref{eq:XvsQ}:
\begin{equation}\label{eq:groupcentred}
\begin{gathered}
\tilde{C}_r = \sum_{b_U \in B_{U,r}} \tilde{C}_U(b_U) + \sum_{(b_0,b_1) \in B_{\mathcal{X},r}} \tilde{C}_{\mathcal{X}}(b_0,b_1), \\
\tilde{C}_U(b_U) := \lambda \left( \bar{P}(b_U) \right) - E \left[ \lambda \left( \bar{P}(B_U) \right) \right], \\
\tilde{C}_{\mathcal{X}}(b_0, b_1) := \lambda \left( \bar{P}(b_0) \right) + \lambda \left( \bar{P}(b_1) \right) - E \left[ \lambda \left( \bar{P}(B_0) \right) + \lambda \left( \bar{P}(B_1) \right) \right].
\end{gathered}
\end{equation}
$\tilde{C}_U$ without an argument stands for $\tilde{C}_U(B_U)$, where $B_U$ is randomly drawn according to~\eqref{eq:pu}.
Likewise, $\tilde{C}_{\mathcal{X}}$ without arguments stands for $\tilde{C}_{\mathcal{X}}(B_0, B_1)$, where $(B_0, B_1)$ is randomly drawn according to~\eqref{eq:pb01}.
As a consequence,
\begin{equation}
E \left[ \tilde{C} \right] = 26 E \left[ \tilde{C}_U \right] + 12 E \left[ \tilde{C}_{\mathcal{X}} \right].
\end{equation}
Together with the centred count decomposition from~\eqref{eq:groupcopies}, this gives:
\begin{equation}\label{eq:mixedadd}
\begin{aligned}
E \left[ \tilde{X}_{v_0} \tilde{X}_{v_1} \right] &= 26 E \left[ \tilde{Z}_{U,v_0} \tilde{Z}_{U,v_1} \right] + 12 E \left[ \tilde{Z}_{\mathcal{X},v_0} \tilde{Z}_{\mathcal{X},v_1} \right], \\
E \left[ \tilde{C} \tilde{X}_v \right] &= 26 E \left[ \tilde{C}_U \tilde{Z}_{U,v} \right] + 12 E \left[ \tilde{C}_{\mathcal{X}} \tilde{Z}_{\mathcal{X},v} \right], \\
E \left[ \tilde{C} \tilde{X}_v^{2} \right] &= 26 E \left[ \tilde{C}_U \tilde{Z}_{U,v}^{2} \right] + 12 E \left[ \tilde{C}_{\mathcal{X}} \tilde{Z}_{\mathcal{X},v}^{2} \right].
\end{aligned}
\end{equation}
Independence factorizes moments combining contributions from unconstrained bytes with contributions from exponent pairs, and centring makes them vanish: for example, $E \left[ \tilde{C}_U\tilde{Z}_{\mathcal{X},v} \right] = E \left[ \tilde{C}_U \right] E \left[ \tilde{Z}_{\mathcal{X},v} \right] = 0$.

For a single unconstrained byte, the product of the occurrence indicators is given by
\begin{equation}\label{eq:Z_U_identities}
\left[ Z_{U,v_0} Z_{U,v_1} \right](b_U) =
\begin{cases}
1, & v_0 = v_1 = b_U, \\
0, & v_0 \neq b_U \vee v_1 \neq b_U,
\end{cases}
\qquad b_U \in \mathcal{V}.
\end{equation}
The two indicators coincide only when $v_0=v_1=v$, in which case $Z_{U,v_0}Z_{U,v_1}=Z_{U,v}$, consistently with~\eqref{eq:E_Z_U,v^k}; when $v_0\neq v_1$, they are mutually exclusive and their product is zero.
Taking expectations gives
\begin{equation}\label{eq:E_Z_U_v0_Z_U_v1}
E \left[ Z_{U,v_0} Z_{U,v_1} \right] =
\begin{cases}
P_{B_U}, & v_0 = v_1, \\
0, & v_0 \neq v_1.
\end{cases}
\end{equation}
It follows that the centred product has expectation
\begin{equation}\label{eq:sigZ_U}
\begin{aligned}
E \left[ \tilde{Z}_{U,v_0} \tilde{Z}_{U,v_1} \right]
&= E \left[ Z_{U,v_0} Z_{U,v_1} \right] - E \left[ Z_{U,v_0} \right] E \left[ Z_{U,v_1} \right] \\
&= E \left[ Z_{U,v_0} Z_{U,v_1} \right] - P_{B_U}^{2}.
\end{aligned}
\end{equation}
When evaluating the double sum over $(v_0,v_1)$ in~\eqref{eq:asyvar}, the two cases from~\eqref{eq:E_Z_U_v0_Z_U_v1}, implicitly inherited by~\eqref{eq:sigZ_U}, apply as follows: the first on the diagonal, for equal target byte values, and the second off the diagonal, for distinct target byte values.

For an exponent pair, the product of the occurrence counts is given by
\begin{equation}\label{eq:Z_X_identities}
\left[ Z_{\mathcal{X},v_0} Z_{\mathcal{X},v_1} \right](b_0, b_1) =
\begin{cases}
4, & b_0 = b_1 = v_0 = v_1, \\
1, & b_0 \neq b_1 \wedge (b_0 = v_0 \vee b_1 = v_0) \wedge (b_0 = v_1 \vee b_1 = v_1), \\
0, & \left( b_0 \neq v_0 \wedge b_1 \neq v_0 \right) \vee \left( b_0 \neq v_1 \wedge b_1 \neq v_1 \right).
\end{cases}
\end{equation}
Taking expectations gives
\begin{equation}\label{eq:E_Z_X_v0_Z_X_v1}
E \left[ Z_{\mathcal{X},v_0} Z_{\mathcal{X},v_1} \right] =
\begin{cases}
E \left[ Z_{\mathcal{X},v}^{2} \right] = \Lambda_{\mathcal{X},v}(1) + 4 \Lambda_{\mathcal{X},v}(2), & v_0 = v_1 = v, \\
\Lambda_{\mathcal{X},v_0,v_1}(1,1), & v_0 \neq v_1.
\end{cases}
\end{equation}
The corresponding centred moment is then
\begin{equation}\label{eq:sigZ_X}
\begin{aligned}
E \left[ \tilde{Z}_{\mathcal{X},v_0} \tilde{Z}_{\mathcal{X},v_1} \right]
&= E \left[ Z_{\mathcal{X},v_0} Z_{\mathcal{X},v_1} \right] - E \left[ Z_{\mathcal{X},v_0} \right] E \left[ Z_{\mathcal{X},v_1} \right] \\
&= E \left[ Z_{\mathcal{X},v_0} Z_{\mathcal{X},v_1} \right] - \left[ P_{B_0} + P_{B_1} \right](v_0) \left[ P_{B_0} + P_{B_1} \right](v_1).
\end{aligned}
\end{equation}
Again, in the double sum over $(v_0,v_1)$ in~\eqref{eq:asyvar}, the two cases from~\eqref{eq:E_Z_X_v0_Z_X_v1}, implicitly inherited by~\eqref{eq:sigZ_X}, apply on and off the diagonal, respectively.

To evaluate the second identity of~\eqref{eq:mixedadd}, we derive its two mixed moments separately:
\begin{equation}\label{eq:CZ}
\begin{aligned}
E \left[ \tilde{C}_U \tilde{Z}_{U,v} \right] &= E \left[ \tilde{C}_U Z_{U,v} \right] - \cancel{E \left[ \tilde{C}_U \right] E \left[ Z_{U,v} \right]} \\
&= P_{B_U} \left( \lambda \left( \bar{P}(v) \right) - E \left[ \lambda \left( \bar{P}(B_U) \right) \right] \right), \\
E \left[ \tilde{C}_{\mathcal{X}} \tilde{Z}_{\mathcal{X},v} \right]
&= E \left[ \tilde{C}_{\mathcal{X}} Z_{\mathcal{X},v} \right] - \cancel{E \left[ \tilde{C}_{\mathcal{X}} \right] E \left[ Z_{\mathcal{X},v} \right]} \\
&= P_{B_0}(v) E \left[ \lambda \left( \bar{P}(B_1) \right) \middle| B_0 = v \right] + P_{B_1}(v) E \left[ \lambda \left( \bar{P}(B_0) \right) \middle| B_1 = v \right] \\
&\quad + \left[ P_{B_0} + P_{B_1} \right](v) \Bigl( \lambda \left( \bar{P}(v) \right) - E \left[ \lambda \left( \bar{P}(B_0) \right) + \lambda \left( \bar{P}(B_1) \right) \right] \Bigr).
\end{aligned}
\end{equation}

With one component of the exponent pair fixed, each conditional expectation in~\eqref{eq:CZ} averages the surprisal of the admissible values of the other component under $\bar{P}$.
The relevant row or column of~\eqref{eq:cells} then supplies the required weights:
\begin{equation}\label{eq:partnervals}
\begin{aligned}
E \left[ \lambda \left( \bar{P}(B_1) \right) \middle| B_0 = v \right] &= \frac{\displaystyle \sum_{\gamma_1 \in \Gamma} n_{\gamma_v \gamma_1} \lambda \left( \bar{P}(\gamma_1) \right)}{\displaystyle \sum_{\gamma_1 \in \Gamma} n_{\gamma_v \gamma_1}}, \\
E \left[ \lambda \left( \bar{P}(B_0) \right) \middle| B_1 = v \right] &= \frac{\displaystyle \sum_{\gamma_0 \in \Gamma} n_{\gamma_0 \gamma_v} \lambda \left( \bar{P}(\gamma_0) \right)}{\displaystyle \sum_{\gamma_0 \in \Gamma} n_{\gamma_0 \gamma_v}}.
\end{aligned}
\end{equation}
Here $\gamma_v\in\Gamma$ denotes the category containing $v$.

Finally, the third identity of~\eqref{eq:mixedadd} reduces to the following expansions:
\begin{equation}\label{eq:CZ2}
\begin{aligned}
E \left[ \tilde{C}_U \tilde{Z}_{U,v}^{2} \right] &= E \left[ \tilde{C}_U Z_{U,v}^2 \right] - 2 E \left[ \tilde{C}_U Z_{U,v} \right] E \left[ Z_{U,v} \right] + \cancel{E \left[ \tilde{C}_U \right] E \left[ Z_{U,v} \right]^2} \\
&= (1 - 2 P_{B_U}) E \left[ \tilde{C}_U \tilde{Z}_{U,v} \right], \\
E \left[ \tilde{C}_{\mathcal{X}} \tilde{Z}_{\mathcal{X},v}^{2} \right]
&= E \left[ \tilde{C}_{\mathcal{X}} Z_{\mathcal{X},v}^{2} \right] - 2 E \left[ \tilde{C}_{\mathcal{X}} Z_{\mathcal{X},v} \right] E \left[ Z_{\mathcal{X},v} \right] + \cancel{E \left[ \tilde{C}_{\mathcal{X}} \right] E \left[ Z_{\mathcal{X},v} \right]^{2}} \\
&= \Lambda_{\mathcal{X},v}(1) E \left[ \tilde{C}_{\mathcal{X}} \middle| Z_{\mathcal{X},v} = 1 \right] + 4 \Lambda_{\mathcal{X},v}(2) E \left[ \tilde{C}_{\mathcal{X}} \middle| Z_{\mathcal{X},v} = 2 \right] \\
&\quad - 2 \left[ P_{B_0} + P_{B_1} \right](v) E \left[ \tilde{C}_{\mathcal{X}} \tilde{Z}_{\mathcal{X},v} \right] \\
&= \bigl( 1 - 2 \left[ P_{B_0} + P_{B_1} \right](v) \bigr) E \left[ \tilde{C}_{\mathcal{X}} \tilde{Z}_{\mathcal{X},v} \right] + 2 \Lambda_{\mathcal{X},v}(2) \, \tilde{C}_{\mathcal{X}}(v, v).
\end{aligned}
\end{equation}

Substituting~\eqref{eq:varCnum}, \eqref{eq:mixedadd}, \eqref{eq:sigZ_U}, \eqref{eq:sigZ_X}, \eqref{eq:CZ}, and~\eqref{eq:CZ2} into~\eqref{eq:asyvar} gives
\begin{equation}\label{eq:varconstants}
\begin{gathered}
\frac{\operatorname{Var}[C]}{50} \approx 0.000235108088, \\
\frac{1}{50 \ln(2)} \sum_{v \in \mathcal{V}} \frac{E \left[ \tilde{C} \tilde{X}_v^{2} \right]}{\bar{P}(v)} \approx 0.063911197090, \\
\frac{1}{2500 \ln(2)} \sum_{v \in \mathcal{V}}\frac{E \left[ \tilde{C} \tilde{X}_v \right] E \left[ \tilde{X}_v^{2} \right]}{\bar{P}(v)^{2}} \approx 0.064022119277, \\
\frac{1}{5000 \ln(2)^2} \sum_{v_0, v_1 \in \mathcal{V}} \frac{E \left[ \tilde{X}_{v_0} \tilde{X}_{v_1} \right]^{2}}{\bar{P}(v_0) \bar{P}(v_1)} \approx 265.373964301715.
\end{gathered}
\end{equation}
After assembling these coefficients accordingly, we obtain the numerical expansion of the reference variance
\begin{equation}\label{eq:varassembled}
\boxed{\displaystyle \left[ \sigma^{\mathrm{VSTL}}(T) \right]^2 = \frac{0.000235108088}{N} + \frac{265.374075223902}{N^2} + O \left(N^{-3} \right)}.
\end{equation}

\subsubsection{Consistency with the Uniform Model}\label{subsec:asyuniform}

\begin{corollary}\label{cor:asyuniform}
Under the uniform byte model, with $L=\lvert\mathcal{V}\rvert=256$, Propositions~\ref{prop:asymean} and~\ref{prop:asyvar} reduce to
\begin{equation}\label{eq:uniformvalues}
\begin{gathered}
\mu^{\mathrm{uniform}}(T) = 8 - \frac{L - 1}{2 N \ln(2)} - \frac{L^{2} - 1}{12 N^{2} \ln(2)} + O \left( N^{-3} \right), \\
\left[ \sigma^{\mathrm{uniform}}(T) \right]^{2} = \frac{L - 1}{2 N^{2} \ln(2)^{2}} + O \left( N^{-3} \right).
\end{gathered}
\end{equation}
\end{corollary}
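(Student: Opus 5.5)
The plan is to specialize Propositions~\ref{prop:asymean} and~\ref{prop:asyvar} to a record made of $50$ unconstrained positions. First I will check that the derivations still apply. Both proofs use only three facts: records are i.i.d., each record count $X_v$ and score $C$ is a sum of independent group contributions, and $E[Q_S]=\bar{P}$. All three hold when the $12$ exponent pairs are replaced by $24$ further unconstrained bytes. The group decompositions~\eqref{eq:varsigma} and~\eqref{eq:mixedadd} then involve only $\tilde{Z}_{U,v}$ and $\tilde{C}_U$, with weight $50$ instead of $26$. Writing $p=P_{B_U}=1/L$, the record-average marginal~\eqref{eq:pbar} becomes $\bar{P}(v)=p$ for every $v$, and $H_\infty=H(\bar{P})=\log_2 L=8$.

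\emph{Mean.} By~\eqref{eq:varsigma} and~\eqref{eq:meanZ_U}, $E[\tilde{X}_v^2]=50p(1-p)$ and $E[\tilde{X}_v^3]=50p(1-p)(1-2p)$ for every $v$. Substituting into the first-order coefficient of~\eqref{eq:asymean} and summing over the $L$ values gives $L(1-p)/(2\ln 2)=(L-1)/(2\ln 2)$. In the second-order bracket I will factor out $L(1-p)/p=L^2(1-p)$, which leaves
\[
25(1-p)-\tfrac{50}{3}(1-2p)=\tfrac{25}{3}(1+p).
\]
The coefficient therefore collapses to $L^2(1-p^2)/(12\ln 2)=(L^2-1)/(12\ln 2)$, as claimed.

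\emph{Variance.} The key simplification is that $\lambda(\bar{P}(v))=\log_2 L$ for every byte. Hence every record score is the constant $C=50\log_2 L$, so $\tilde{C}\equiv 0$ and $\operatorname{Var}[C]=0$. This kills the $N^{-1}$ term and both mixed-moment terms of~\eqref{eq:asyvar}. Only the double sum survives. By~\eqref{eq:sigZ_U} we have $E[\tilde{X}_{v_0}\tilde{X}_{v_1}]=50(p\,\mathbf{1}[v_0=v_1]-p^2)$. The sum then equals
\[
2500\sum_{v_0,v_1}\left(\mathbf{1}[v_0=v_1]-p\right)^2=2500\left[L(1-p)^2+L(L-1)p^2\right]=2500(L-1).
\]
Dividing by $5000\ln(2)^2$ yields $(L-1)/(2\ln(2)^2)$.

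The only delicate point is the first step. The propositions were stated with the $26/12$ split, so I must argue that nothing in their proofs used the specific exponent-pair law. That argument amounts to re-reading the order bounds~\eqref{eq:mean_prob_diff_asymptotics}--\eqref{eq:var_sum_r5}, which depend only on bounded i.i.d.\ record counts and on $\bar{P}(v)>0$. After that, the remaining steps are the short algebra above.
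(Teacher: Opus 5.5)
Your proposal is correct and follows essentially the same route as the paper's proof. You treat each record as $50$ unconstrained binomial positions with $\bar{P}\equiv 1/L$, and observe that the constant surprisal forces $\tilde{C}\equiv 0$, which removes $\operatorname{Var}[C]$ and both mixed-moment terms; substituting the binomial moments and $E[\tilde{X}_{v_0}\tilde{X}_{v_1}]=50(p\,\delta_{v_0v_1}-p^2)$ into the two expansions then gives the result, and your explicit algebra for the $N^{-2}$ mean coefficient and the surviving double sum checks out where the paper only asserts the substitution.
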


\begin{proof}
Assume uniformity --- then the validity constraint no longer couples the exponent bytes $B_0$ and $B_1$ in each float, so they become independent, while every byte marginal, along with $\bar{P}$, coincides with $P_{B_U}=\dfrac{1}{L}$.
Thus, a block consists of $N$ independent uniform bytes, and its entropy converges to $H_\infty^{\mathrm{uniform}}=\log_2 L=8$.
For finite $N$, however, concavity makes the plug-in entropy biased downward~\cite[Thm.~2.7.3]{cover_thomas_2006}, so its reference mean requires a correction, while its reference variance is determined by fluctuations in the byte counts.

Under this uniform model, for any $v\in\mathcal{V}$, $X_v\sim\mathrm{Binomial}(50,P_{B_U})$, so $E[X_v]=50P_{B_U}$.
Moreover, every byte contributes the constant surprisal $\lambda(P_{B_U})$, giving $C=50H_\infty^{\mathrm{uniform}}=E[C]$.
It follows that $\operatorname{Var}[C]=0$ and $\tilde{C}=0$; therefore, all moments containing $\tilde{C}$ vanish.

Substituting the binomial moments of $X_v$ into~\eqref{eq:asymean} yields the mean in~\eqref{eq:uniformvalues}.
This expansion agrees, at the corresponding orders, with the results of Miller~\cite{miller_1955}, Basharin~\cite[Eq.~(3)]{basharin_1959}, and, after conversion to bits, Harris~\cite[Eq.~(14)]{harris_1977}.

For distinct byte values $v_0\ne v_1$, independence of the $50$ uniform bytes gives the record-count covariance $E[\tilde{X}_{v_0}\tilde{X}_{v_1}]=-50P_{B_U}^{2}$.
Substituting this covariance and the binomial moments of $X_v$ into~\eqref{eq:asyvar} yields the variance in~\eqref{eq:uniformvalues}.
After conversion to bits, this expansion agrees with Harris~\cite[Eq.~(23)]{harris_1977}; Basharin~\cite[Eq.~(4)]{basharin_1959} likewise notes that the $N^{-1}$ term vanishes for equiprobable symbols.
\end{proof}

\subsection{Normal Approximation and Acceptance Bounds}\label{subsec:asynormal}

\begin{corollary}\label{cor:clt}
As $T \to \infty$,
\begin{equation}\label{eq:clt}
\frac{H(S) - \mu^{\mathrm{VSTL}}(T)}{\sigma^{\mathrm{VSTL}}(T)} \xrightarrow{D} \mathcal{N}(0, 1),
\end{equation}
so, with $\Phi$ the standard normal distribution function, the two-sided acceptance interval for the global measure at significance level $\alpha$ is
\begin{equation}\label{eq:interval}
\mu^{\mathrm{VSTL}}(T) \pm \Phi^{-1} \left( 1 - \frac{\alpha}{2} \right) \sigma^{\mathrm{VSTL}}(T),
\end{equation}
where $\mu^{\mathrm{VSTL}}(T)$ and $\sigma^{\mathrm{VSTL}}(T)$ are the reference values of Propositions~\ref{prop:asymean} and~\ref{prop:asyvar}.
\end{corollary}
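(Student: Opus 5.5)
The plan is to start from the cross-entropy decomposition of Lemma~\ref{lem:crossentropy}, $H(S) = \bar{C}_S - D(Q_S \| \bar{P})$, and treat the two terms separately. By~\eqref{eq:cbar} and~\eqref{eq:Cbarcentred}, $\bar{C}_S - H_\infty^{\mathrm{VSTL}} = N^{-1}\sum_{r=1}^{T} \tilde{C}_r$ is a normalized sum of i.i.d.\ bounded centred record scores, with $\operatorname{Var}[C] > 0$ by~\eqref{eq:varCnum}. The classical Lindeberg--L\'evy central limit theorem therefore gives
\[
\frac{\bar{C}_S - H_\infty^{\mathrm{VSTL}}}{\sqrt{\operatorname{Var}[C]/(50N)}} \xrightarrow{D} \mathcal{N}(0,1).
\]

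Next I would show that the divergence is negligible on this scale. Because $D \ge 0$, Markov's inequality together with~\eqref{eq:ceilingid} and Proposition~\ref{prop:asymean} gives $E[D(Q_S\|\bar{P})] = \mu^{\mathrm{VSTL}}$-deficit $= O(N^{-1})$. Hence $D = O_P(N^{-1})$, so $\sqrt{N}\,D \to 0$ in probability. The same bound shows
\[
\sqrt{N}\bigl(\mu^{\mathrm{VSTL}}(T) - H_\infty^{\mathrm{VSTL}}\bigr) = O(N^{-1/2}) \to 0 .
\]
Slutsky's theorem then yields $\sqrt{N}\,\bigl(H(S)-\mu^{\mathrm{VSTL}}(T)\bigr) \xrightarrow{D} \mathcal{N}\bigl(0, \operatorname{Var}[C]/50\bigr)$.

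Finally, I would replace the normalizing constant by $\sigma^{\mathrm{VSTL}}(T)$. Proposition~\ref{prop:asyvar} gives $N[\sigma^{\mathrm{VSTL}}(T)]^2 = \operatorname{Var}[C]/50 + O(N^{-1})$, so the ratio of the two scalings tends to $1$, and a second application of Slutsky gives~\eqref{eq:clt}. The acceptance interval~\eqref{eq:interval} then follows by inverting the limit: $\Pr\bigl[\lvert H(S)-\mu^{\mathrm{VSTL}}\rvert \le \Phi^{-1}(1-\alpha/2)\,\sigma^{\mathrm{VSTL}}\bigr] \to 1-\alpha$, using the continuity of $\Phi$. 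The interval is evaluated with the truncated expansions~\eqref{eq:meanassembled} and~\eqref{eq:varassembled}. Their $O(N^{-3})$ errors are negligible relative to $\sigma^{\mathrm{VSTL}} \asymp N^{-1/2}$, so the same limit holds for the computable bounds.

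The main obstacle is the nondegeneracy of the leading term. The argument rests on $\operatorname{Var}[C] > 0$, which is a feature specific to the VSTL model. In the uniform case of Corollary~\ref{cor:asyuniform}, $\tilde{C} \equiv 0$ and the statistic is instead governed by $D$, whose $2N\ln(2)\,D$ is asymptotically $\chi^2_{L-1}$ rather than normal. So the step requiring care is confirming from the exact table values that $\operatorname{Var}[C]$ is strictly positive. I would also remark that although the limit holds, at moderate $N$ the $N^{-2}$ variance term ($\approx 265/N^2$ versus $2.35\times10^{-4}/N$) dominates until $N \gtrsim 10^6$. In that regime the $\chi^2$-type skewness of $D$ may make the normal approximation slow to take hold, though this does not affect the asymptotic statement.
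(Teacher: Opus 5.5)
Your argument is correct, but it reaches the limit by a different route from the paper's. The paper works with the whole histogram. It applies the multivariate CLT to the i.i.d.\ centred record-count vectors, obtaining $\sqrt{N}\,\delta_S \xrightarrow{D} \mathcal{N}_L(0,\Sigma)$, and then the delta method at $\bar{P}$. It identifies the limiting variance by contracting the gradient $\varphi'(\bar{P}(v)) = \lambda(\bar{P}(v)) - 1/\ln(2)$ against $\tilde{X}_v$, which yields $\tilde{C}$ and hence $\nabla H(\bar{P})^{T}\Sigma\,\nabla H(\bar{P}) = \operatorname{Var}[C]/50$. Its final step, with the two expansions and Slutsky's theorem, is the same as yours.

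You instead use the fact that Lemma~\ref{lem:crossentropy} is an \emph{exact} linearization whose remainder has a known sign. Its linear part is already $\bar{C}_S$, whose variance Lemma~\ref{lem:score} supplies, so a univariate Lindeberg--L\'evy CLT replaces the multivariate CLT, the delta method and the quadratic-form computation. The nonlinear part is then dispatched by $D \ge 0$, $E[D] = H_\infty^{\mathrm{VSTL}} - \mu^{\mathrm{VSTL}}(T) = O(N^{-1})$ and Markov's inequality.

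Your version is more elementary and shows explicitly that the remainder is $O_P(N^{-1})$, a full order below the fluctuation scale. The price is that it relies on the sign of the remainder and on a bound for its mean. That bound comes from Proposition~\ref{prop:asymean}, or independently from the elementary inequality $\ln(2)\,D(Q_S\|\bar{P}) \le \sum_{v} \delta_{S,v}^{2}/\bar{P}(v)$. The delta method, by contrast, would handle any functional differentiable at $\bar{P}$ and uses the mean expansion only for the centring shift.

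The nondegeneracy you flag is exactly what the paper uses when it says the denominator tends to a positive constant, and it needs no table values. Since $\lambda(\bar{P}(\cdot))$ is nonconstant on $\Gamma$ and $B_U$ charges every category, $\operatorname{Var}[C] \ge 26\operatorname{Var}[\lambda(\bar{P}(B_U))] > 0$. Your closing caveat is also accurate: the $N^{-2}$ term of~\eqref{eq:varassembled} dominates for $N$ below roughly $1.1 \times 10^{6}$. This bears on finite-sample accuracy but not on the stated limit.
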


\begin{proof}
Collect the centred record counts into the vector $\tilde{X}_{\langle r \rangle} = \bigl\langle \tilde{X}_{v,r} \bigr\rangle_{v \in \mathcal{V}}$.
These vectors are i.i.d.\ across records by Definition~\ref{def:vstl}, and equation~\eqref{eq:devsum} holds simultaneously in all $L = \lvert \mathcal{V} \rvert$ coordinates.
The central limit theorem for i.i.d.\ summands~\cite{lindeberg_1922},~\cite[Thm.~(5.10.4)]{grimmett_stirzaker_2001}, in its multivariate form in $\mathbb{R}^{L}$~\cite[Ex.~2.18]{vandervaart_1998}, therefore gives
\begin{equation}\label{eq:histclt}
\begin{gathered}
\sqrt{N} \delta_S = \frac{1}{\sqrt{N}} \sum_{r=1}^{T} \tilde{X}_{\langle r \rangle} \xrightarrow{D} \mathcal{N}_{L} \left( 0, \Sigma \right), \\
\Sigma = \left\langle \frac{E \left[ \tilde{X}_{v_0} \tilde{X}_{v_1} \right]}{50} \right\rangle_{v_0, v_1 \in \mathcal{V}},
\end{gathered}
\end{equation}
where $\Sigma$ is the finite limiting covariance matrix formed from the per-record moments~\eqref{eq:mixedadd} already evaluated for Proposition~\ref{prop:asyvar}.

For each $v \in \mathcal{V}$, $\bar{P}(v) > 0$ and lies in the domain on which $\varphi$ is smooth; therefore, entropy is differentiable at the expansion point.
Since entropy is separable across byte values, the $L$ entries of its gradient are the $L$ evaluations of the one-variable derivative~\eqref{eq:phider}, exactly as in~\eqref{eq:htaylor}.
Thus,
\begin{equation}\label{eq:gradient}
\nabla H \left( \bar{P} \right) = \left\langle \varphi' \left( \bar{P}(v) \right) \right\rangle_{v \in \mathcal{V}} = \left\langle \lambda \left( \bar{P}(v) \right) - \frac{1}{\ln(2)} \right\rangle_{v \in \mathcal{V}} .
\end{equation}

Because $H$ is differentiable at $\bar{P}$, the delta method~\cite[Thm.~3.1]{vandervaart_1998} transforms the multivariate convergence in~\eqref{eq:histclt} into a univariate normal limit. Its variance is the quadratic form $\nabla H(\bar{P})^{T} \Sigma \nabla H(\bar{P})$, which projects the histogram covariance along the entropy gradient.
Since $H(S) = H(Q_S)$ and $H(\bar{P}) = H_\infty^{\mathrm{VSTL}}$ by Proposition~\ref{prop:ceiling}, this gives
\begin{equation}\label{eq:deltasandwich}
\sqrt{N} \left( H(S) - H_\infty^{\mathrm{VSTL}} \right) \xrightarrow{D} \mathcal{N} \left( 0, \nabla H \left( \bar{P} \right)^{T} \Sigma \nabla H \left( \bar{P} \right) \right).
\end{equation}

To evaluate the limiting variance in~\eqref{eq:deltasandwich}, consider the inner product of the gradient~\eqref{eq:gradient} with the centred byte counts $\tilde{X}_{v}$ for a generic record.
This calculation follows an approach similar to~\eqref{eq:linterm}:
\begin{equation}\label{eq:gradcontract}
\begin{aligned}
\sum_{v \in \mathcal{V}} \varphi' \left( \bar{P}(v) \right) \tilde{X}_{v}
&= \sum_{v \in \mathcal{V}} \left( \lambda \left( \bar{P}(v) \right) - \frac{1}{\ln(2)} \right) \tilde{X}_{v} \\
&= \sum_{v \in \mathcal{V}} \lambda \left( \bar{P}(v) \right) \tilde{X}_{v} - \frac{1}{\ln(2)} \sum_{v \in \mathcal{V}} \tilde{X}_{v} \\
&= \sum_{v \in \mathcal{V}} \lambda \left( \bar{P}(v) \right) X_{v} - E \left[ \sum_{v \in \mathcal{V}} \lambda \left( \bar{P}(v) \right) X_{v} \right] \\
&= C - E[C] = \tilde{C}.
\end{aligned}
\end{equation}
 The constant term in $\varphi'(x)$ is eliminated because $\displaystyle \sum_{v \in \mathcal{V}} X_v = \sum_{v \in \mathcal{V}} E[X_v] = 50$ and therefore $\displaystyle \sum_{v \in \mathcal{V}} \tilde{X}_v = 0$.

Substituting the entries of $\Sigma$ from~\eqref{eq:histclt} into the quadratic form and applying bilinearity allows it to be written in terms of the inner product in~\eqref{eq:gradcontract}:
\begin{equation}\label{eq:sandwich}
\begin{aligned}
\nabla H \left( \bar{P} \right)^{T} \Sigma \, \nabla H \left( \bar{P} \right)
&= \frac{1}{50} \sum_{v_0 \in \mathcal{V}} \sum_{v_1 \in \mathcal{V}} \varphi' \left( \bar{P}(v_0) \right) E \left[ \tilde{X}_{v_0} \tilde{X}_{v_1} \right] \varphi' \left( \bar{P}(v_1) \right) \\
&= \frac{1}{50} E \left[ \left( \sum_{v_0 \in \mathcal{V}} \varphi' \left( \bar{P}(v_0) \right) \tilde{X}_{v_0} \right) \left( \sum_{v_1 \in \mathcal{V}} \varphi' \left( \bar{P}(v_1) \right) \tilde{X}_{v_1} \right) \right] \\
&= \frac{E \left[ \tilde{C}^{2} \right]}{50} = \frac{\operatorname{Var}[C]}{50}.
\end{aligned}
\end{equation}
Equation~\eqref{eq:deltasandwich} therefore becomes
\begin{equation}\label{eq:cltraw}
\sqrt{N} \left( H(S) - H_\infty^{\mathrm{VSTL}} \right) \xrightarrow{D} \mathcal{N} \left( 0, \frac{\operatorname{Var}[C]}{50} \right).
\end{equation}

Rewriting the left-hand side of~\eqref{eq:clt} in terms of the $\sqrt{N}$-scaled convergence about $H_\infty^{\mathrm{VSTL}}$ in~\eqref{eq:cltraw} gives
\begin{equation}\label{eq:cltquotient}
\frac{H(S) - \mu^{\mathrm{VSTL}}(T)}{\sigma^{\mathrm{VSTL}}(T)}
= \frac{\sqrt{N} \left( H(S) - H_\infty^{\mathrm{VSTL}} \right) - \sqrt{N} \left( \mu^{\mathrm{VSTL}}(T) - H_\infty^{\mathrm{VSTL}} \right)}{\sqrt{N} \, \sigma^{\mathrm{VSTL}}(T)} .
\end{equation}
The first term in the numerator converges in distribution by~\eqref{eq:cltraw}; the mean expansion~\eqref{eq:asymean} with the coefficients~\eqref{eq:meanconstants} shows that the second term in the numerator is $O \left( N^{-1/2} \right)$ and therefore vanishes as $T \to \infty$; the leading term of~\eqref{eq:asyvar} shows that the denominator converges to a positive constant:
\begin{equation}\label{eq:cltthree}
\begin{gathered}
\sqrt{N} \left( H(S) - H_\infty^{\mathrm{VSTL}} \right) \xrightarrow{D} \mathcal{N} \left( 0, \frac{\operatorname{Var}[C]}{50} \right), \\[16pt]
\sqrt{N} \left( \mu^{\mathrm{VSTL}}(T) - H_\infty^{\mathrm{VSTL}} \right) = O \left( N^{-1/2} \right) \to 0, \\[16pt]
N \left[ \sigma^{\mathrm{VSTL}}(T) \right]^2 = \frac{\operatorname{Var}[C]}{50} + O \left( N^{-1} \right) \to \frac{\operatorname{Var}[C]}{50}, \\
\therefore \sqrt{N} \sigma^{\mathrm{VSTL}}(T) \to \sqrt{\frac{\operatorname{Var}[C]}{50}}.
\end{gathered}
\end{equation}
Slutsky's theorem~\cite{slutsky_1925},~\cite[Exercise~(7.2.5)]{grimmett_stirzaker_2001}, applied to the numerator and denominator of~\eqref{eq:cltquotient}, together with the scaling rule for normal distributions~\cite[Example~(4.4.4)]{grimmett_stirzaker_2001}, therefore yields
\begin{equation}\label{eq:cltunit}
\frac{H(S) - \mu^{\mathrm{VSTL}}(T)}{\sigma^{\mathrm{VSTL}}(T)} \xrightarrow{D} \frac{\displaystyle \mathcal{N} \left( 0, \frac{\operatorname{Var}[C]}{50} \right)}{\displaystyle \sqrt{\frac{\operatorname{Var}[C]}{50}}} \sim \mathcal{N} \left( 0, 1 \right).
\end{equation}
This proves~\eqref{eq:clt}.

The central $1-\alpha$ region of $\mathcal{N}(0,1)$ is bounded by $\pm \Phi^{-1}(1-\alpha/2)$; applying these bounds to~\eqref{eq:clt} gives~\eqref{eq:interval}.
\end{proof}

\subsection{The Test Procedure}\label{subsec:procedure}

The second-order expansion and the normal approximation become increasingly accurate as $T$ grows.
At small triangle counts, however, the remainder of the expansions is too large to neglect.
Figure~\ref{fig:exactasyreference} shows the resulting errors decreasing over $1 \le T \le 200$, motivating a finite lower limit for the test that neither asymptotic result specifies.

\begin{figure}
\centering
\begin{subfigure}{\linewidth}
\centering
\makebox[\linewidth][l]{\small{\textbf{(a)}}}\\[0pt]
\includegraphics[width=\linewidth]{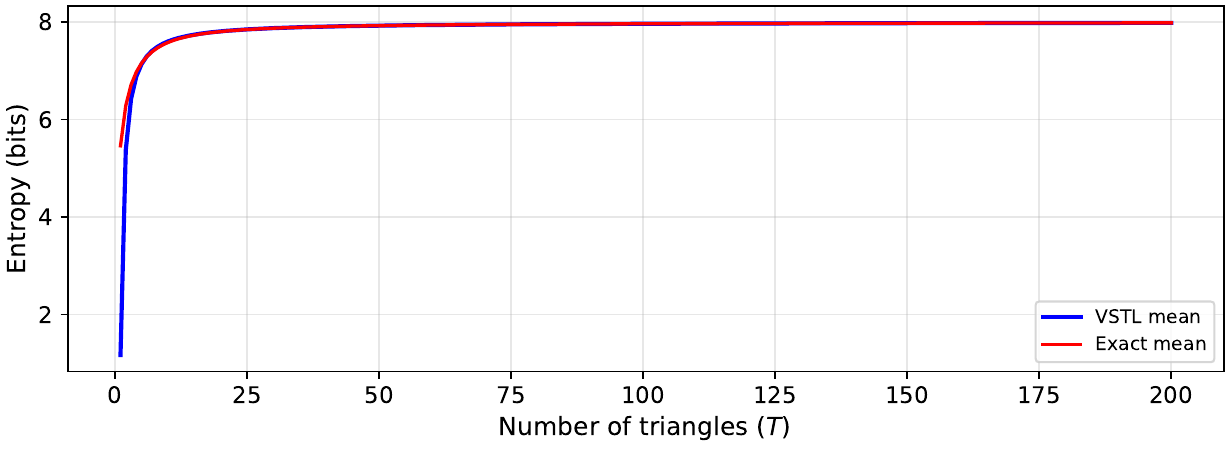}
\phantomsubcaption\label{fig:exactasymean}
\end{subfigure}
\begin{subfigure}{\linewidth}
\centering
\makebox[\linewidth][l]{\small{\textbf{(b)}}}\\[0pt]
\includegraphics[width=\linewidth]{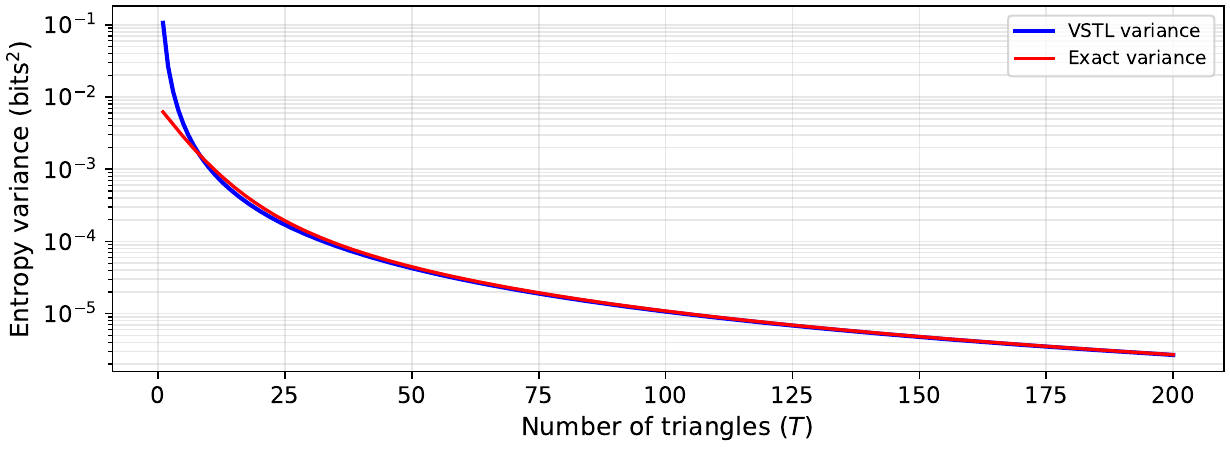}
\phantomsubcaption\label{fig:exactasyvar}
\end{subfigure}
\begin{subfigure}{\linewidth}
\centering
\makebox[\linewidth][l]{\small{\textbf{(c)}}}\\[0pt]
\includegraphics[width=\linewidth]{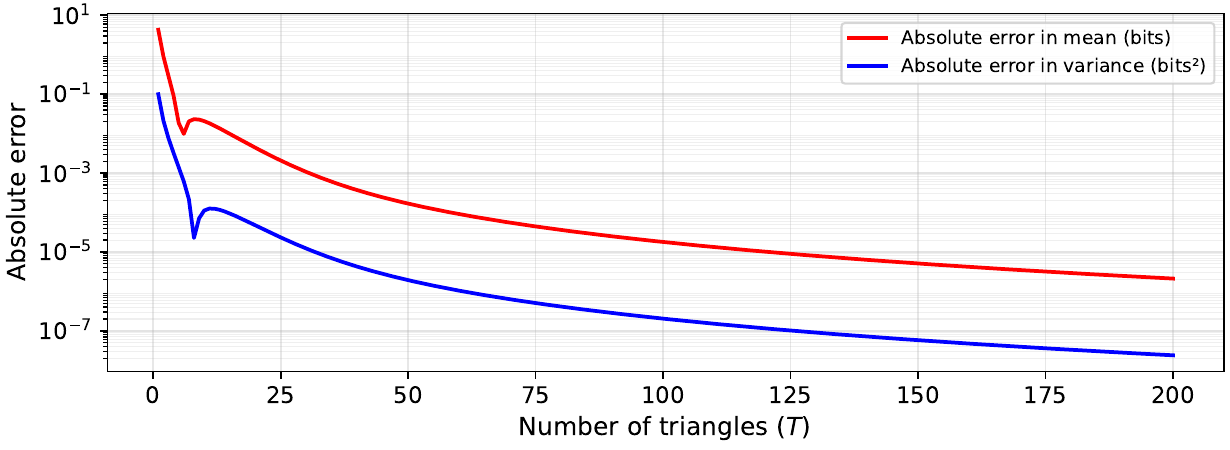}
\phantomsubcaption\label{fig:exactasyerror}
\end{subfigure}
\caption{Exact moments compared with the second-order VSTL expansions: (a) means; (b) variances; (c) absolute errors.}
\label{fig:exactasyreference}
\end{figure}

Figure~\ref{fig:exactasyintervals} compares intervals centred on the exact and second-order reference means and bounded at one respective standard deviation on either side.
The ratio of their intersection length to their union length equals approximately $99.1\%$ at $T=100$ and continues to increase, reaching $99.6\%$ at $T=200$.
We therefore adopt $T \ge 100$ as an operational lower limit for the test.

\begin{figure}
\centering
\begin{subfigure}{\linewidth}
\centering
\makebox[\linewidth][l]{\small{\textbf{(a)}}}\\[0pt]
\includegraphics[width=\linewidth]{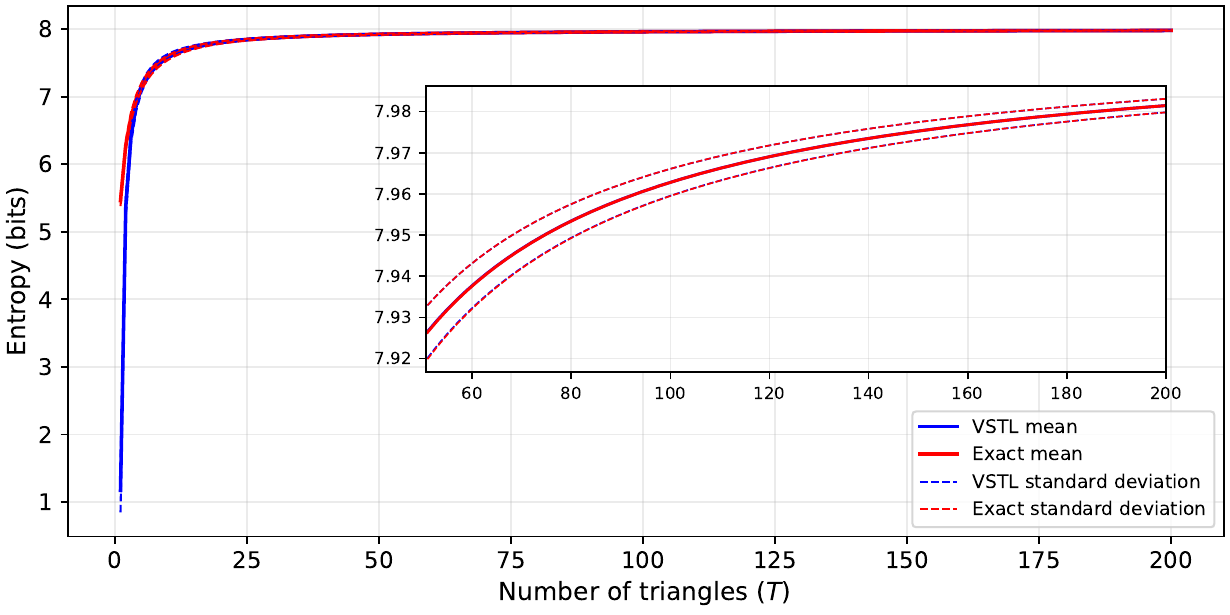}
\phantomsubcaption\label{fig:exactasyinterval}
\end{subfigure}
\begin{subfigure}{\linewidth}
\centering
\makebox[\linewidth][l]{\small{\textbf{(b)}}}\\[0pt]
\includegraphics[width=\linewidth]{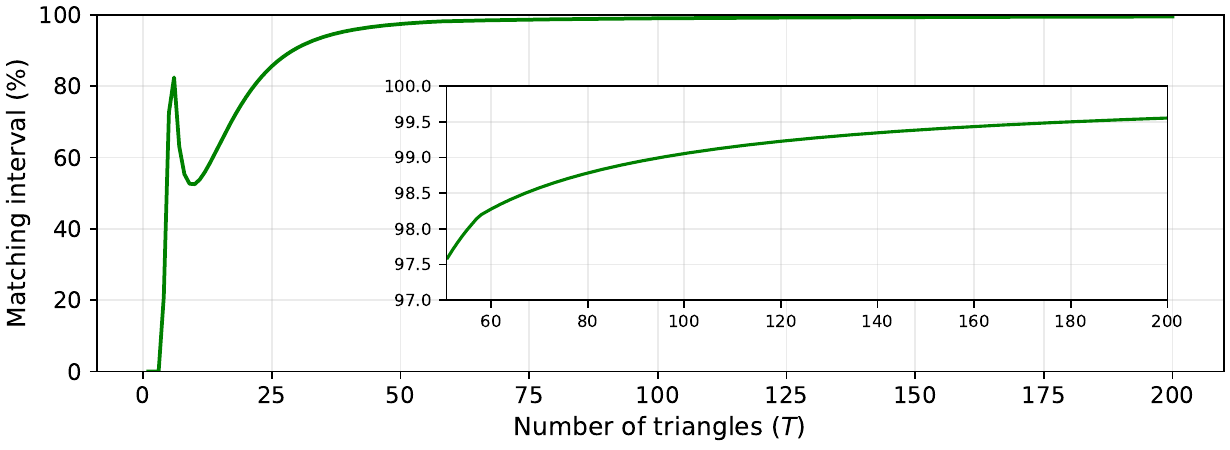}
\phantomsubcaption\label{fig:exactasycoverage}
\end{subfigure}
\caption{Comparison of the exact and second-order VSTL references: (a) intervals centred on the respective means, with dashed bounds at one standard deviation on either side; (b) percentage overlap between the intervals of (a).}
\label{fig:exactasyintervals}
\end{figure}

Corollary~\ref{cor:clt} turns the reference values of Propositions~\ref{prop:asymean} and~\ref{prop:asyvar} into a two-sided test for the entropy of a single ciphertext.
Table~\ref{tab:asym} lists the resulting mean, variance and acceptance bounds at three significance levels for selected numbers of triangles.

\begin{table}
\caption{Second-order reference moments and two-sided acceptance bounds for the VSTL global entropy test.}
\label{tab:asym}
\centering
\scriptsize
\setlength{\tabcolsep}{3pt}
\begin{tabular*}{\textwidth}{@{\extracolsep{\fill}}llcccc@{}}
\toprule
$T$ & $\alpha$ & $\mu^{\mathrm{VSTL}}(T)$ & $[\sigma^{\mathrm{VSTL}}(T)]^2$ & $H_\alpha^{-}(T)$ & $H_\alpha^{+}(T)$ \\
\midrule
\multirow{3}{*}{$100$} & $0.05$ & \multirow{3}{*}{$7.962810934102$} & \multirow{3}{*}{$0.000010661985$} & $7.956411123567$ & $7.969210744638$ \\
& $0.01$ & & & $7.954400157351$ & $7.971221710853$ \\
& $0.001$ & & & $7.952066477668$ & $7.973555390536$ \\
\midrule
\multirow{3}{*}{$1000$} & $0.05$ & \multirow{3}{*}{$7.996232777643$} & \multirow{3}{*}{$0.000000110852$} & $7.995580219145$ & $7.996885336141$ \\
& $0.01$ & & & $7.995375170404$ & $7.997090384883$ \\
& $0.001$ & & & $7.995137216092$ & $7.997328339194$ \\
\midrule
\multirow{3}{*}{$2000$} & $0.05$ & \multirow{3}{*}{$7.998074574860$} & \multirow{3}{*}{$0.000000028888$} & $7.997741447349$ & $7.998407702371$ \\
& $0.01$ & & & $7.997636771098$ & $7.998512378621$ \\
& $0.001$ & & & $7.997515296736$ & $7.998633852984$ \\
\midrule
\multirow{3}{*}{$5000$} & $0.05$ & \multirow{3}{*}{$7.999178896716$} & \multirow{3}{*}{$0.000000005186$} & $7.999037746408$ & $7.999320047024$ \\
& $0.01$ & & & $7.998993393767$ & $7.999364399666$ \\
& $0.001$ & & & $7.998941923555$ & $7.999415869878$ \\
\midrule
\multirow{3}{*}{$10000$} & $0.05$ & \multirow{3}{*}{$7.999546877923$} & \multirow{3}{*}{$0.000000001532$} & $7.999470170620$ & $7.999623585227$ \\
& $0.01$ & & & $7.999446067438$ & $7.999647688408$ \\
& $0.001$ & & & $7.999418096254$ & $7.999675659592$ \\
\midrule
\multirow{3}{*}{$20000$} & $0.05$ & \multirow{3}{*}{$7.999730844887$} & \multirow{3}{*}{$0.000000000500$} & $7.999686997633$ & $7.999774692140$ \\
& $0.01$ & & & $7.999673219828$ & $7.999788469945$ \\
& $0.001$ & & & $7.999657231004$ & $7.999804458769$ \\
\midrule
\multirow{3}{*}{$50000$} & $0.05$ & \multirow{3}{*}{$7.999841217500$} & \multirow{3}{*}{$0.000000000137$} & $7.999818318352$ & $7.999864116648$ \\
& $0.01$ & & & $7.999811122918$ & $7.999871312082$ \\
& $0.001$ & & & $7.999802772784$ & $7.999879662216$ \\
\midrule
\multirow{3}{*}{$100000$} & $0.05$ & \multirow{3}{*}{$7.999878007110$} & \multirow{3}{*}{$0.000000000058$} & $7.999863127307$ & $7.999892886914$ \\
& $0.01$ & & & $7.999858451734$ & $7.999897562487$ \\
& $0.001$ & & & $7.999853025840$ & $7.999902988381$ \\
\midrule
\multirow{3}{*}{$200000$} & $0.05$ & \multirow{3}{*}{$7.999896401679$} & \multirow{3}{*}{$0.000000000026$} & $7.999886376210$ & $7.999906427149$ \\
& $0.01$ & & & $7.999883225979$ & $7.999909577379$ \\
& $0.001$ & & & $7.999879570209$ & $7.999913233149$ \\
\midrule
\multirow{3}{*}{$500000$} & $0.05$ & \multirow{3}{*}{$7.999907438345$} & \multirow{3}{*}{$0.000000000010$} & $7.999901293640$ & $7.999913583050$ \\
& $0.01$ & & & $7.999899362833$ & $7.999915513856$ \\
& $0.001$ & & & $7.999897122178$ & $7.999917754512$ \\
\midrule
\multirow{3}{*}{$1000000$} & $0.05$ & \multirow{3}{*}{$7.999911117221$} & \multirow{3}{*}{$0.000000000005$} & $7.999906819439$ & $7.999915415003$ \\
& $0.01$ & & & $7.999905468978$ & $7.999916765464$ \\
& $0.001$ & & & $7.999903901799$ & $7.999918332642$ \\
\bottomrule
\end{tabular*}
\end{table}

Algorithm~\ref{alg:entropy} applies the reference expansions of Propositions~\ref{prop:asymean} and~\ref{prop:asyvar} and the acceptance interval of Corollary~\ref{cor:clt} to a single ciphertext.

\begin{algorithm}
\caption{VSTL global entropy test}\label{alg:entropy}
\begin{algorithmic}[1]
\Require ciphertext $S$ consisting of T triangles, with $N = 50T$ data bytes $S_1, \ldots, S_N$; significance level $\alpha$
\Ensure global entropy acceptance verdict under the VSTL reference
\State $\#(v) \gets 0$ for all $v \in \mathcal{V}$
\For{$k \gets 1$ \textbf{to} $N$}
\State $\#(S_k) \gets \#(S_k) + 1$
\EndFor
\State $H \gets 0$
\For{$v \in \mathcal{V}$ such that $\#(v) > 0$}
\State $H \gets H - \dfrac{\#(v)}{N} \log_2 \dfrac{\#(v)}{N}$ \Comment{\eqref{eq:plugin}}
\EndFor
\State $\mu \gets 7.999914796090 - \dfrac{183.943323730404}{N} - \dfrac{7879.931052184461}{N^2}$ \Comment{\eqref{eq:meanassembled}}
\State $\sigma \gets \sqrt{\dfrac{0.000235108088}{N} + \dfrac{265.374075223902}{N^2}}$ \Comment{\eqref{eq:varassembled}}
\State $H^{-} \gets \mu - \Phi^{-1}(1 - \alpha/2) \sigma$; \quad $H^{+} \gets \mu + \Phi^{-1}(1 - \alpha/2) \sigma$ \Comment{\eqref{eq:interval}}
\State \Return $H^{-} \le H \le H^{+}$
\end{algorithmic}
\end{algorithm}

\subsection{Practical Significance}\label{subsec:globalpractical}

As $T$ grows, the difference between the uniform and VSTL reference means approaches the deficit $\Delta$ of~\eqref{eq:deficit}, while their variability decreases.
We measure this separation in units of the uniform deviation to determine when the uniform reference becomes unreliable in practice.
The crossing condition is
\begin{equation}\label{eq:uniformcrossing}
\mu^{\mathrm{uniform}}(T)-\mu^{\mathrm{VSTL}}(T)=z\sigma^{\mathrm{uniform}}(T).
\end{equation}
Solving~\eqref{eq:uniformcrossing} for $T$ shows that the mean gap first exceeds one uniform standard deviation at $T=3824$ and the two-sided $\alpha=0.05$ acceptance half-width at $T=7495$.
Both thresholds lie within practically relevant file size ranges.

\subsection{Monte-Carlo Validation}\label{subsec:globalmc}

Following the simulation procedure of~\cite{adjusted_npcr_uaci_2026}, we check whether samples from the VSTL distribution recover the global entropy reference moments.

For a fixed triangle count $T$, one trial draws a VSTL file according to Definition~\ref{def:vstl}.
Each of its $12T$ floats is sampled independently and uniformly from $\mathcal{F}$, and each attribute word independently and uniformly from $\{0,\ldots,2^{16}-1\}$.
We compute the global entropy using~\eqref{eq:plugin} over the bytes of the triangle records.
For each of the selected file sizes spanning $500 \le T \le 1000000$, $n = 10000$ independent trials are performed.

The resulting observations $H_1,\ldots,H_n$ give the sample mean and unbiased sample variance,
\begin{equation}\label{eq:mcestimators}
\begin{gathered}
\hat{\mu} = \frac{1}{n} \sum_{i=1}^{n} H_i, \\
\hat{\sigma}^{2} = \frac{1}{n - 1} \sum_{i=1}^{n} \left( H_i - \hat{\mu} \right)^{2}.
\end{gathered}
\end{equation}
These estimators converge to the true moments as $n$ grows, providing a numerical check on the second-order reference values of Propositions~\ref{prop:asymean} and~\ref{prop:asyvar}.

To compare the reference values and Monte-Carlo estimates on a common scale, we use the standardized displacement
\begin{equation}\label{eq:zeta}
\zeta(x) = \frac{x - \mu^{\mathrm{VSTL}}(T)}{\sigma^{\mathrm{VSTL}}(T)},
\end{equation}
which expresses the difference between any $x$ and the reference mean in reference standard deviations.

Figure~\ref{fig:globalmc} confirms agreement between the simulated and predicted VSTL reference values --- the mean and standard deviation follow their theoretical curves within sampling error at every tested size.
It also reveals the consequences of wrongly relying on the uniform byte null model for VSTL ciphertexts.
The vertical lines at $T=3824$ and $T=7495$ mark the crossings derived in Section~\ref{subsec:globalpractical}, where the mean difference equals one uniform standard deviation and the uniform $95\%$ acceptance half-width, respectively.
Beyond the second threshold, the lower acceptance bound of the conventional test, based on the uniform byte null model, exceeds the VSTL reference mean, making the test unsuitable even at practically relevant file sizes and reinforcing the need for the format-specific test proposed in this paper.

Figure~\ref{fig:globaldiff} shows the absolute differences between the reference values and their estimates on a logarithmic scale; both decrease towards zero over the tested range, with no apparent persistent discrepancy.

\begin{figure}
\centering
\begin{subfigure}{\linewidth}
\centering
\makebox[\linewidth][l]{\small{\textbf{(a)}}}\\[0pt]
\includegraphics[width=\linewidth]{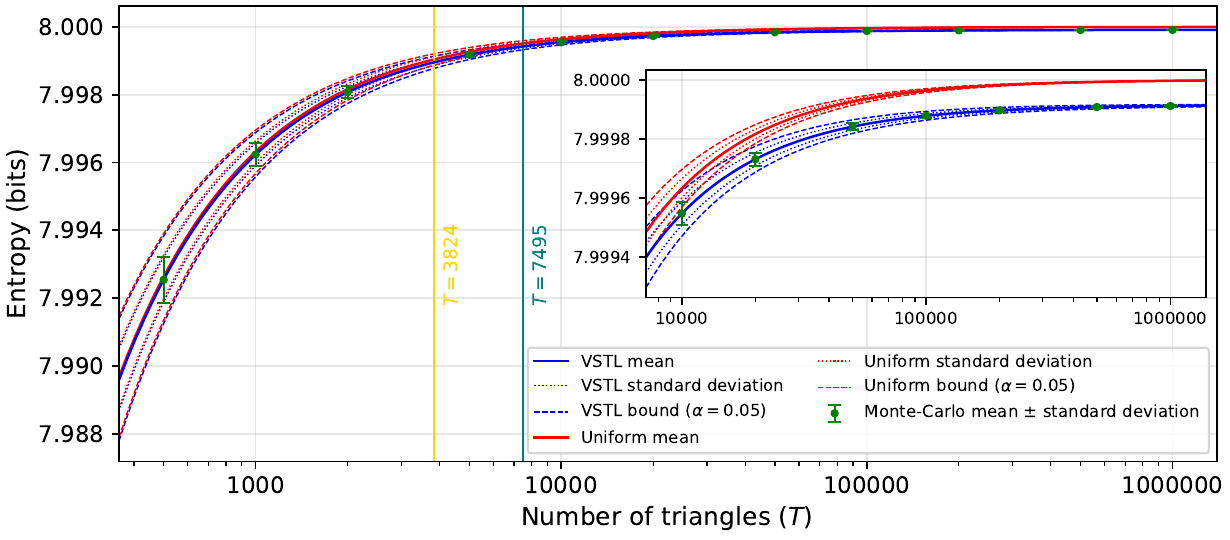}
\phantomsubcaption\label{fig:globalmc-raw}
\end{subfigure}
\begin{subfigure}{\linewidth}
\centering
\makebox[\linewidth][l]{\small{\textbf{(b)}}}\\[0pt]
\includegraphics[width=\linewidth]{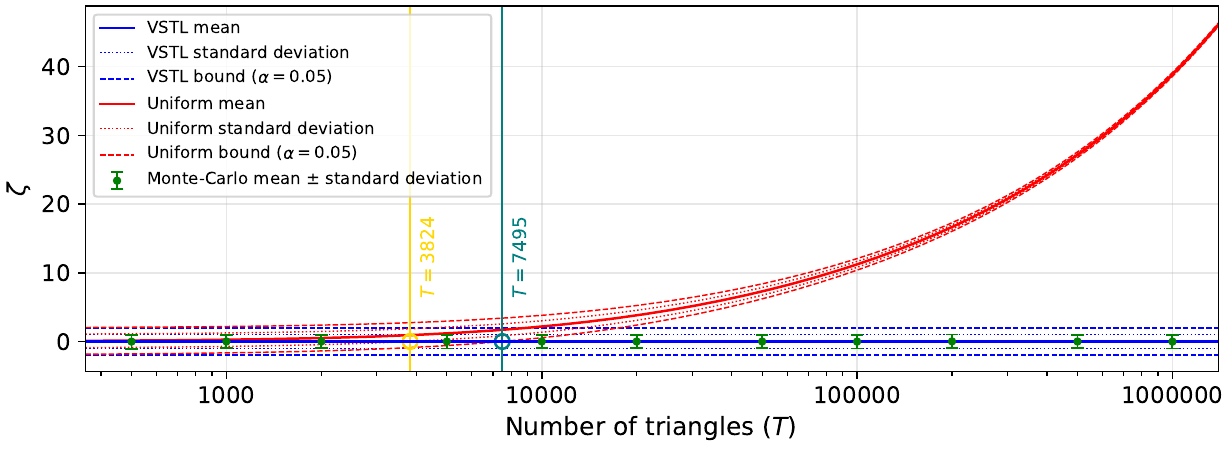}
\phantomsubcaption\label{fig:globalmc-std}
\end{subfigure}
\caption{Global entropy under the VSTL and uniform null models: (a) reference values from the second-order expansions and Monte-Carlo estimates; (b) the standardized counterpart of (a), expressed in $\zeta$-units.}
\label{fig:globalmc}
\end{figure}

\begin{figure}
\centering
\includegraphics[width=\linewidth]{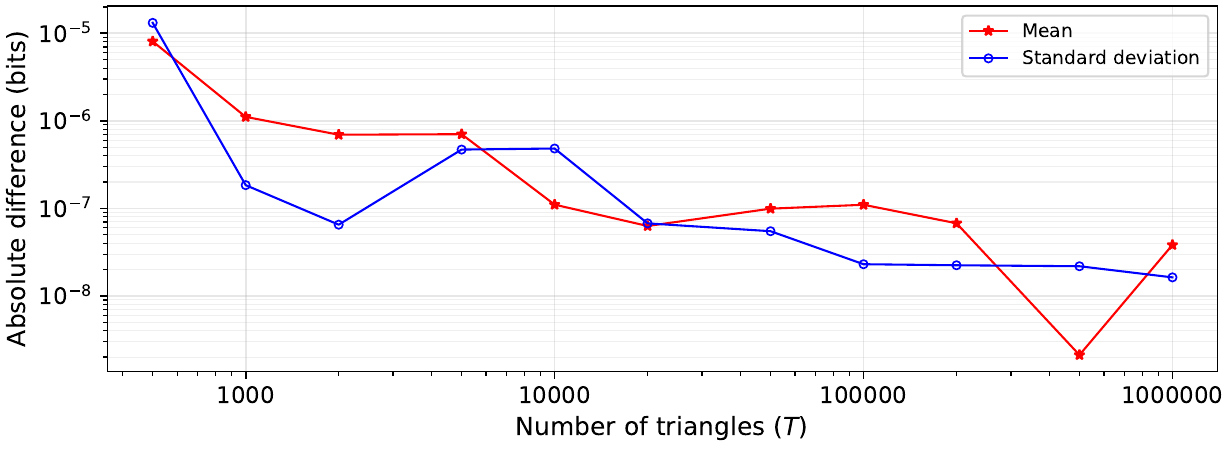}
\caption{Absolute difference between the reference values of the VSTL global entropy test and their corresponding Monte-Carlo estimates: $\lvert \mu^{\mathrm{VSTL}}(T) - \hat{\mu} \rvert$ (red stars) and $\lvert \sigma^{\mathrm{VSTL}}(T) - \hat{\sigma} \rvert$ (blue circles).}
\label{fig:globaldiff}
\end{figure}

\section{The VSTL Local Entropy Test}\label{sec:local}

The $(k, N_B)$-local test selects $k$ non-overlapping blocks of $T_B$ consecutive records from a ciphertext $S$, denoted by $S_1, \ldots, S_k \in \mathcal{V}^{N_B}$, and averages their Shannon entropies:
\begin{equation}\label{eq:local}
\bar{H}_{k, N_B}(S) = \frac{1}{k} \sum_{i=1}^{k} H(S_i).
\end{equation}
The conventional local entropy test evaluates the observed average against reference values derived under the uniform null model, which treats each block as a string of $N_B$ i.i.d.\ uniform bytes.
The VSTL local entropy test, however, requires reference values that account for the validity constraint.

This section adapts the capacity energy criterion of~\cite{local_entropy_2013} to determine the optimal block size $N_B^{\ast}$ for the VSTL local entropy test and then derives the exact reference mean and variance of $\bar{H}_{k, N_B^{\ast}}$ at that block size.

\begin{lemma}\label{lem:aligned}
Let $S_1, \ldots, S_k$ be pairwise disjoint blocks of $T_B$ consecutive records from a VSTL ciphertext. The average of their entropies has mean and variance
\begin{equation}\label{eq:localmoments}
\begin{gathered}
E \left[ \bar{H}_{k, N_B}(S) \right] = \mu^{\mathrm{VSTL}}(T_B), \\
\operatorname{Var} \left[ \bar{H}_{k, N_B}(S) \right] = \frac{\left[ \sigma^{\mathrm{VSTL}}(T_B) \right]^{2}}{k}.
\end{gathered}
\end{equation}
\end{lemma}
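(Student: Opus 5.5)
The plan rests on two facts from Definition~\ref{def:vstl}: each block is distributed exactly like a VSTL file with $T_B$ triangles, and disjoint blocks are independent. Linearity of expectation then gives the mean. For the variance, independence removes all cross-covariances and leaves $k$ equal diagonal terms, scaled by $1/k^2$.

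\textbf{Identical laws.} Fix the block $S_i$. Since blocks consist of whole consecutive records, $S_i$ is the concatenation of $T_B$ triangle records. By Definition~\ref{def:vstl}, the $12T$ floats and $T$ attribute words of the ciphertext are sampled independently and uniformly over their admissible encodings. Restricting to the records of $S_i$ therefore gives $12T_B$ floats and $T_B$ attribute words, independent and uniform. This is exactly the VSTL distribution for a file of $T_B$ triangles, with $N_B = 50T_B$ bytes.

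Alignment to record boundaries is what makes this work. Each block contains $26T_B$ unconstrained bytes and $12T_B$ complete exponent pairs, so the count laws of Lemmas~\ref{lem:singular_count} and~\ref{lem:joint_count} apply verbatim with $T=T_B$. Hence $H(S_i)$ has mean $\mu^{\mathrm{VSTL}}(T_B)$ and variance $[\sigma^{\mathrm{VSTL}}(T_B)]^2$, given exactly by Proposition~\ref{prop:moments}. This holds for every $i$, independently of the block's position in the file, because all records are identically distributed.

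\textbf{Mean.} Linearity of expectation applied to~\eqref{eq:local} gives $E[\bar H_{k,N_B}(S)] = \frac1k\sum_i \mu^{\mathrm{VSTL}}(T_B) = \mu^{\mathrm{VSTL}}(T_B)$.

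\textbf{Independence.} The blocks are pairwise disjoint sets of records, and by Definition~\ref{def:vstl} the records are mutually independent. So $S_1,\ldots,S_k$ are functions of disjoint collections of independent variables and are therefore mutually independent. The entropies $H(S_i)$, being measurable functions of the blocks, are independent as well, and all are square-integrable because $0\le H\le 8$. Every cross term $\operatorname{Cov}[H(S_i),H(S_j)]$ with $i\neq j$ thus vanishes, and
\[
\operatorname{Var}\bigl[\bar H_{k,N_B}(S)\bigr]=\frac1{k^2}\sum_{i=1}^k \operatorname{Var}[H(S_i)]=\frac{[\sigma^{\mathrm{VSTL}}(T_B)]^2}{k}.
\]

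The only step requiring care is the identification of the block law with the VSTL law at $T_B$. It would fail for blocks not aligned to records, where an exponent pair could be split across two blocks and the positional class counts would change. The hypothesis that blocks consist of consecutive whole records rules this out, so no real obstacle remains.
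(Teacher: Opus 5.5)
Your proof is correct and follows the same route as the paper. You use independence of disjoint record sets under Definition~\ref{def:vstl}, identify each block's law with the VSTL law at $T_B$ records (so Proposition~\ref{prop:moments} gives its mean and variance), and then apply linearity of expectation and additivity of variance over independent summands; your remarks on record alignment and on the bound $0 \le H \le 8$ only make explicit what the paper's three-line argument leaves implicit.
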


\begin{proof}
Under Definition~\ref{def:vstl}, records are independent, so blocks containing disjoint sets of records have independent entropies.
Every block contains the $T_B$ records drawn from the same distribution, so its entropy has mean $\mu^{\mathrm{VSTL}}(T_B)$ and variance $\left[ \sigma^{\mathrm{VSTL}}(T_B) \right]^2$.
Averaging the $k$ entropies preserves this mean and divides the variance by $k$, giving~\eqref{eq:localmoments}.
\end{proof}

\subsection{The Optimal Block Size}\label{subsec:blocksize}

The block size is determined by maximizing the capacity energy criterion of~\cite[Eq.~(28)]{local_entropy_2013}.
This criterion is the probability that every byte value occurs in a random block, divided by the block size $N_B$:
\begin{equation}\label{eq:capacity}
f(N_B) = \frac{\displaystyle \Pr \left[ \bigcap_{v \in \mathcal{V}} \left\{ \#(v) > 0 \right\} \right]}{N_B}.
\end{equation}

For the VSTL local entropy test, each block consists of $T_B$ triangle records, so $N_B = 50T_B$.
The probability that every byte value occurs at least once in a random block of size $N_B$ is calculated by inclusion--exclusion over the events of absence.
All values in a subset $\mathcal{A} \subseteq \mathcal{V}$ are absent from the block if and only if every byte lies outside $\mathcal{A}$:
\begin{equation}\label{eq:cover}
\begin{gathered}
\Pr \left[ \bigcap_{v \in \mathcal{V}} \left\{ \#(v) > 0 \right\} \right] = \sum_{\mathcal{A} \subseteq \mathcal{V}} (-1)^{\lvert \mathcal{A} \rvert} \Pr \left[ \bigcap_{v \in \mathcal{A}} \left\{ \#(v) = 0 \right\} \right] \\
= \sum_{\mathcal{A} \subseteq \mathcal{V}} (-1)^{\lvert \mathcal{A} \rvert} \Pr \left[ B_U \in \left( \mathcal{V} \setminus \mathcal{A} \right) \right]^{26T_B} \Pr \left[ (B_0, B_1) \in \left( \mathcal{V} \setminus \mathcal{A} \right)^{2} \right]^{12T_B}.
\end{gathered}
\end{equation}

Since $B_U$ is uniform on $\mathcal{V}$ and $(B_0, B_1)$ is uniform on $\mathcal{X}$, the two factors of~\eqref{eq:cover} can be expressed directly in terms of set cardinalities:
\begin{equation}\label{eq:piA}
\begin{gathered}
\Pr \left[ B_U \in \left( \mathcal{V} \setminus \mathcal{A} \right) \right] = \frac{\lvert \mathcal{V} \setminus \mathcal{A} \rvert}{\lvert \mathcal{V} \rvert} = \frac{256 - \lvert \mathcal{A} \rvert}{256} , \\
\Pr \left[ (B_0, B_1) \in \left( \mathcal{V} \setminus \mathcal{A} \right)^{2} \right] = \frac{\left\lvert \mathcal{X} \cap \left( \mathcal{V} \setminus \mathcal{A} \right)^{2} \right\rvert}{\lvert \mathcal{X} \rvert} = \frac{\left\lvert \mathcal{X} \cap \left( \mathcal{V} \setminus \mathcal{A} \right)^{2} \right\rvert}{65280} .
\end{gathered}
\end{equation}
To evaluate the numerator in the second line, we rewrite the intersection as a difference of products whose cardinalities can be calculated directly:
\begin{equation}\label{eq:admissible_pairs_outside_A}
\begin{gathered}
\begin{aligned}
\mathcal{X} \cap (\mathcal{V} \setminus \mathcal{A})^{2} &= \left[ \mathcal{V}^{2} \setminus \left( \{127, 255\} \times \mathcal{V}_{hi} \right) \right] \cap \left( \mathcal{V} \setminus \mathcal{A} \right)^{2} \\
&= \left( \mathcal{V} \setminus \mathcal{A} \right)^{2} \setminus \left[ \left( \{127, 255\} \setminus \mathcal{A} \right) \times \left( \mathcal{V}_{hi} \setminus \mathcal{A} \right) \right],
\end{aligned} \\[8pt]
\therefore \Pr \left[ (B_0, B_1) \in \left( \mathcal{V} \setminus \mathcal{A} \right)^{2} \right] = \frac{\left( 256 - \lvert \mathcal{A} \rvert \right)^{2} - \left\lvert \{127, 255\} \setminus \mathcal{A} \right\rvert \cdot \left\lvert \mathcal{V}_{hi} \setminus \mathcal{A} \right\rvert}{65280}.
\end{gathered}
\end{equation}

Equations~\eqref{eq:cover}, \eqref{eq:piA} and~\eqref{eq:admissible_pairs_outside_A} show that the contribution of $\mathcal{A}$ depends only on the number of its elements belonging to each of the four categories in~\eqref{eq:cats}.
The capacity energy criterion of~\eqref{eq:capacity} can thus be written as:
\begin{equation}\label{eq:coverfour}
\begin{gathered}
f(N_B) = \frac{1}{N_B} \sum_{\substack{0 \le n_{lo}, n_{hi} \le 127 \\ \varepsilon_{127}, \varepsilon_{255} \in \{ 0, 1 \}}} (-1)^{n_{lo} + n_{hi} + \varepsilon_{127} + \varepsilon_{255}} \binom{127}{n_{lo}} \binom{127}{n_{hi}} f_U^{26T_B} f_{\mathcal{X}}^{12T_B}, \\
f_U = \frac{256 - n_{lo} - n_{hi} - \varepsilon_{127} - \varepsilon_{255}}{256}, \\
f_{\mathcal{X}} = \frac{\left( 256 - n_{lo} - n_{hi} - \varepsilon_{127} - \varepsilon_{255} \right)^2 - \left( 2 - \varepsilon_{127} - \varepsilon_{255} \right) \left( 128 - n_{hi} - \varepsilon_{255} \right)}{65280},
\end{gathered}
\end{equation}
where $f_U$ and $f_{\mathcal{X}}$ are the probabilities that an unconstrained byte lies outside $\mathcal{A}$ and that both bytes of an exponent pair lie outside $\mathcal{A}$, respectively.
The variables $n_{lo}$, $\varepsilon_{127}$, $n_{hi}$ and $\varepsilon_{255}$ specify how many values from each category of $\Gamma$ belong to $\mathcal{A}$.
The binomial coefficients count the selections from the majority sets.

Following~\cite[Eq.~(31)]{local_entropy_2013}, the optimal block size is obtained by maximizing the criterion of~\eqref{eq:capacity}, evaluated using~\eqref{eq:coverfour}:
\begin{equation}\label{eq:tstar}
\begin{gathered}
T_B^{\ast} = \argmax_{T_B \ge 1} f(50 T_B) = 39, \\
N_B^{*} = 50 T_B^{\ast} = 1950.
\end{gathered}
\end{equation}
The capacity energy criterion is presented numerically near the maximum in Table~\ref{tab:capacity} and graphically over a wider range in Figure~\ref{fig:capacity}.

\begin{table}
\caption{The VSTL capacity energy criterion around the optimum. The underlined value marks the maximum at $T_B^{\ast}=39$.}
\label{tab:capacity}
\centering
\footnotesize
\setlength{\tabcolsep}{5pt}
\begin{tabular}{rrr}
\toprule
$T_B$ & $N_B$ & $f(N_B)$ \\
\midrule
$36$ & $1800$ & $0.000443710557$ \\
$37$ & $1850$ & $0.000449342144$ \\
$38$ & $1900$ & $0.000452139725$ \\
$39$ & $1950$ & $\underline{0.000452610292}$ \\
$40$ & $2000$ & $0.000451203039$ \\
$41$ & $2050$ & $0.000448305921$ \\
$42$ & $2100$ & $0.000444247471$ \\
\bottomrule
\end{tabular}
\end{table}

\begin{figure}
\centering
\includegraphics[width=\linewidth]{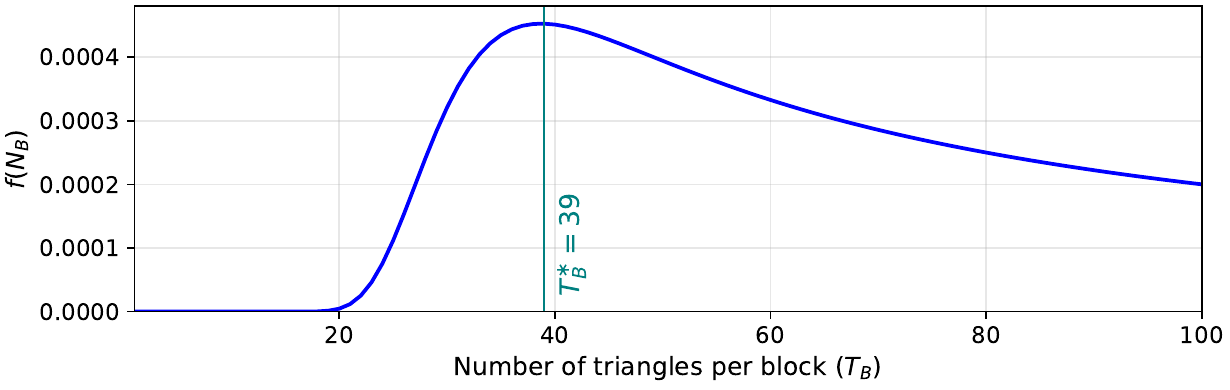}
\caption{The VSTL capacity energy criterion over $1 \le T_B \le 100$, evaluated at integer triangle counts. The vertical line marks the maximum at $T_B^{\ast} = 39$.}
\label{fig:capacity}
\end{figure}

\subsection{The Reference Values}\label{subsec:localref}

The second-order expansions of Propositions~\ref{prop:asymean} and~\ref{prop:asyvar} become increasingly accurate as $T \to \infty$.
However, the optimal local block size $T_B^{\ast} = 39$ lies well below the operational lower limit $T \ge 100$ adopted for the global test in Section~\ref{subsec:procedure}.
We therefore evaluate the exact count laws of Lemmas~\ref{lem:singular_count} and~\ref{lem:joint_count} once for $T_B^{\ast}$; the resulting mean and variance determine the reference values for every admissible $k$, avoiding the truncation errors of the global reference approximations.

Proposition~\ref{prop:moments} is evaluated at $T_B^{\ast} = 39$, with $26 T_B^{\ast} = 1014$ unconstrained bytes and $12 T_B^{\ast} = 468$ exponent pairs.
The count law of Lemma~\ref{lem:singular_count}, \eqref{eq:Eh} and~\eqref{eq:Eh2} then give the following moments:
\begin{equation}\label{eq:catmoments}
\begin{array}{c|cc}
& \theta_\gamma & \theta^{(2)}_\gamma \\
\hline
\mathcal{V}^{*}_{lo} & 0.030920079164 & 0.001042319790 \\
\{127\} & 0.027790753117 & 0.000852720838 \\
\mathcal{V}^{*}_{hi} & 0.030871842363 & 0.001039250977 \\
\{255\} & 0.027741945918 & 0.000849747236
\end{array}
\end{equation}
Likewise, the count law of Lemma~\ref{lem:joint_count} and~\eqref{eq:Ehh} give the joint moments for $v_0 \neq v_1$, with rows indexed by $\gamma_0$ and columns by $\gamma_1$:
\begingroup
\setlength{\arraycolsep}{4pt}
\begin{equation}\label{eq:pairmoments}
\begin{array}{c|cccc}
\theta_{\gamma_0 \gamma_1} & \mathcal{V}^{*}_{lo} & \{127\} & \mathcal{V}^{*}_{hi} & \{255\} \\
\hline
\mathcal{V}^{*}_{lo} & 0.000955712851 & 0.000859027963 & 0.000954222497 & 0.000857520016 \\
\{127\} & 0.000859027963 & \text{---} & 0.000857604860 & 0.000770677417 \\
\mathcal{V}^{*}_{hi} & 0.000954222497 & 0.000857604860 & 0.000952734466 & 0.000856099227 \\
\{255\} & 0.000857520016 & 0.000770677417 & 0.000856099227 & \text{---}
\end{array}
\end{equation}
\endgroup
The table is symmetric, and its two singleton diagonal entries are empty because a singleton category admits no pair of distinct values.

Equations~\eqref{eq:mean} and~\eqref{eq:var} assemble these moments into the exact block mean and variance.
By Lemma~\ref{lem:aligned}, averaging over $k$ blocks gives the VSTL counterparts of the reference values of~\cite[Tab.~3]{local_entropy_2013}:
\begin{equation}\label{eq:localref}
\boxed{%
\begin{gathered}
\mu_{\bar{H}}^{\mathrm{VSTL}} \approx 7.903106732975 \\
\sigma_{\bar{H}}^{\mathrm{VSTL}} \approx \frac{0.008635732664}{\sqrt{k}}
\end{gathered}
}.
\end{equation}

\subsection{Consistency with the Uniform Model}\label{subsec:localuniform}

\begin{corollary}\label{cor:localuniform}
Under the uniform byte model, with $L=\lvert\mathcal{V}\rvert=256$, Lemma~\ref{lem:aligned} reduces to
\begin{equation}\label{eq:localuniform}
\begin{gathered}
\mu_{\bar{H}}^{\mathrm{uniform}} = L \theta_{\mathcal{V}}, \\
\left[ \sigma_{\bar{H}}^{\mathrm{uniform}} \right]^{2} = \frac{L \theta^{(2)}_{\mathcal{V}} + L(L-1)\theta_{\mathcal{V}\mathcal{V}} - \left[ \mu_{\bar{H}}^{\mathrm{uniform}} \right]^{2}}{k},
\end{gathered}
\end{equation}
where the $\theta$ moments are evaluated at $N=N_B$.
\end{corollary}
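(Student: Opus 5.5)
The plan is to chain Lemma~\ref{lem:aligned} with Corollary~\ref{cor:uniform}, exactly as the global case reduced Proposition~\ref{prop:moments}. Lemma~\ref{lem:aligned} expresses the local moments through the single-block moments $\mu^{\mathrm{VSTL}}(T_B)$ and $[\sigma^{\mathrm{VSTL}}(T_B)]^2$. Its proof uses only two facts: the $k$ disjoint blocks are independent, and they are identically distributed. Both facts survive the uniform byte model, where all $kN_B$ bytes are i.i.d.\ uniform. I would first restate the lemma in that setting: the $k$ block entropies $H(S_1),\ldots,H(S_k)$ are i.i.d., so their average has mean $\mu^{\mathrm{uniform}}$ and variance $[\sigma^{\mathrm{uniform}}]^2/k$. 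Here both block moments are computed for a block of $N_B$ bytes.

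Next, I would substitute the uniform block moments from Corollary~\ref{cor:uniform}. Under uniformity the partition $\Gamma$ collapses to the single category $\mathcal{V}$. This gives $\mu^{\mathrm{uniform}} = L\theta_{\mathcal{V}}$ and $[\sigma^{\mathrm{uniform}}]^2 = L\theta^{(2)}_{\mathcal{V}} + L(L-1)\theta_{\mathcal{V}\mathcal{V}} - (L\theta_{\mathcal{V}})^2$. Dividing the variance by $k$ and writing $\mu_{\bar{H}}^{\mathrm{uniform}}$ for $L\theta_{\mathcal{V}}$ yields~\eqref{eq:localuniform} immediately.

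The only point needing care, and the one I would check first, is the meaning of ``evaluated at $N = N_B$''. Corollary~\ref{cor:uniform} is indexed by $T$, with $N = 50T$. Under the uniform model, however, the count laws of Lemmas~\ref{lem:singular_count} and~\ref{lem:joint_count} no longer depend on the split into $26T$ unconstrained bytes and $12T$ exponent pairs. Since $\Lambda_{\mathcal{X},v}$ then factors into two independent uniform bytes, $\#(v)$ is $\mathrm{Binomial}(N,1/L)$ and $(\#(v_0),\#(v_1))$ is trinomial. The moments $\theta_{\mathcal{V}}$, $\theta^{(2)}_{\mathcal{V}}$ and $\theta_{\mathcal{V}\mathcal{V}}$ therefore depend only on the byte length $N$. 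This lets the formula be stated at any $N_B$, including the conventional $1936$, which is not a multiple of $50$. I would make this observation explicit and then note that the result reproduces~\cite[Tab.~3]{local_entropy_2013}.
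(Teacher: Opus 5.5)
Your proposal is correct and follows the paper's proof: the $k$ disjoint blocks remain independent and identically distributed under the uniform model, so substituting Corollary~\ref{cor:uniform} into Lemma~\ref{lem:aligned} and dividing the variance by $k$ gives the result. You also make explicit something the paper leaves implicit: the uniform $\theta$ moments come from binomial and trinomial counts and depend only on the byte length. This justifies evaluating them at $N_B=1936$, which is not a multiple of $50$.
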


\begin{proof}
Assume uniformity --- then Corollary~\ref{cor:uniform} gives the exact mean and variance of each block.
The $k$ disjoint blocks remain independent, so averaging preserves their mean and divides their variance by $k$.
Substituting~\eqref{eq:wu} into~\eqref{eq:localmoments} yields~\eqref{eq:localuniform}.

Equation~\eqref{eq:localuniform} therefore reproduces the reference mean and variance for the local entropy statistic of~\cite[Secs.~3.3 and~4.3]{local_entropy_2013}.
\end{proof}

\subsection{Normal Approximation and Acceptance Bounds}\label{subsec:localasynormal}

Unlike the global normal approximation of Corollary~\ref{cor:clt}, which lets the number of records in a block grow, the local approximation considers a growing number of blocks at a fixed block size.

\begin{corollary}\label{cor:localclt}
Let $S_1, \ldots, S_k$ be pairwise disjoint blocks of $T_B^{\ast}$ consecutive records from a VSTL ciphertext.
Their average entropy has reference mean $\mu_{\bar{H}}^{\mathrm{VSTL}}$ and standard deviation $\sigma_{\bar{H}}^{\mathrm{VSTL}}$ given in~\eqref{eq:localref}.
Then, as the block count $k$ grows,
\begin{equation}\label{eq:localclt}
\frac{\bar{H}_{k, N_B^{\ast}}(S) - \mu_{\bar{H}}^{\mathrm{VSTL}}}{\sigma_{\bar{H}}^{\mathrm{VSTL}}} \xrightarrow{D} \mathcal{N}(0, 1) ,
\end{equation}
so the two-sided acceptance interval at level $\alpha$ is
\begin{equation}\label{eq:localinterval}
\mu_{\bar{H}}^{\mathrm{VSTL}} \pm \Phi^{-1}\!\left( 1 - \frac{\alpha}{2} \right) \sigma_{\bar{H}}^{\mathrm{VSTL}}.
\end{equation}
\end{corollary}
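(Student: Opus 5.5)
The plan is to reduce the statement to the classical Lindeberg--Lévy central limit theorem for i.i.d.\ summands. The block size is fixed at $T_B^{\ast} = 39$ and only the block count $k$ grows.

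First, define $Y_i = H(S_i)$ for $i = 1, \ldots, k$. By Lemma~\ref{lem:aligned} and its proof, the blocks consist of disjoint sets of records. Under Definition~\ref{def:vstl} records are independent and identically distributed, so the $Y_i$ are i.i.d. Each has mean $\mu^{\mathrm{VSTL}}(T_B^{\ast})$ and variance $[\sigma^{\mathrm{VSTL}}(T_B^{\ast})]^2$. Proposition~\ref{prop:moments} gives these exactly, and they are evaluated in~\eqref{eq:localref} as $\mu_{\bar{H}}^{\mathrm{VSTL}}$ and $\sqrt{k}\,\sigma_{\bar{H}}^{\mathrm{VSTL}}$.

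Second, verify the hypotheses of the CLT.
\begin{itemize}
\item Since $0 \le H(S_i) \le 8$, every $Y_i$ is bounded, so all moments are finite.
\item The variance is strictly positive. For this I would not rely only on the numerical value $0.008635732664^2 > 0$. A direct argument is to exhibit two admissible blocks of $39$ records with different histogram entropies. For example, take a block whose bytes all equal $0$, which is admissible since $0$ lies in $\mathcal{V}^{*}_{lo}$ and is a valid exponent pair. Take a second block containing many distinct values. Both occur with positive probability, so $H(S_i)$ is not almost surely constant.
\end{itemize}

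Third, apply the univariate CLT~\cite{lindeberg_1922},~\cite[Thm.~(5.10.4)]{grimmett_stirzaker_2001} to $\sum_i Y_i$. Because $\bar{H}_{k,N_B^{\ast}}(S) = k^{-1}\sum_i Y_i$ and $\sigma_{\bar{H}}^{\mathrm{VSTL}} = \sigma^{\mathrm{VSTL}}(T_B^{\ast})/\sqrt{k}$, the standardized average in~\eqref{eq:localclt} is exactly $\bigl(\sum_i Y_i - k\mu\bigr)/(\sqrt{k}\,\sigma)$. This converges in distribution to $\mathcal{N}(0,1)$.

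Unlike Corollary~\ref{cor:clt}, no delta method or Slutsky step is needed, because the centring and scaling are the exact moments rather than asymptotic expansions.

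Finally, the interval~\eqref{eq:localinterval} follows as in Corollary~\ref{cor:clt}, taking the central $1-\alpha$ region $\pm\Phi^{-1}(1-\alpha/2)$ of the limit.

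The only potentially delicate point is nondegeneracy of the variance. This is handled by the explicit two-block example above. Everything else is a direct citation.
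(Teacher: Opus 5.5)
Your proposal is correct and follows the paper's proof: both use Lemma~\ref{lem:aligned} to get i.i.d.\ block entropies, apply the i.i.d.\ central limit theorem, and read the interval off the central $1-\alpha$ region of $\mathcal{N}(0,1)$. Your one addition is an explicit nondegeneracy argument, and it is valid: the all-zero block is admissible, since \texttt{0x00000000} encodes $+0$ and $(0,0)\in\mathcal{X}$. This establishes rigorously the positive variance that the paper takes from the numerical value in~\eqref{eq:localref}.
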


\begin{proof}
By Lemma~\ref{lem:aligned}, the block entropies are independent and share the same distribution.
Since the variance in~\eqref{eq:localref} is finite and positive, the central limit theorem~\cite{lindeberg_1922},~\cite[Thm.~(5.10.4)]{grimmett_stirzaker_2001} gives~\eqref{eq:localclt}.

The central $1-\alpha$ region of $\mathcal{N}(0,1)$ is bounded by $\pm \Phi^{-1}(1-\alpha/2)$; applying these bounds to~\eqref{eq:localclt} gives~\eqref{eq:localinterval}.
\end{proof}

\subsection{The Test Procedure}\label{subsec:localtest}

The block count may be chosen within $30 \le k \le \left\lfloor \dfrac{T}{T_B^{\ast}} \right\rfloor$. The lower bound follows the convention of~\cite{local_entropy_2013} for the normal approximation, while the upper bound is the number of complete non-overlapping blocks available in a file of $T$ triangles. By~\eqref{eq:localmoments}, increasing $k$ reduces the standard deviation of the average, without changing its reference mean.

Corollary~\ref{cor:localclt} turns the reference values of~\eqref{eq:localref} into a two-sided test for the average entropy of $k$ non-overlapping blocks from a single ciphertext.
Table~\ref{tab:local} lists the resulting mean, standard deviation and acceptance bounds at three significance levels for selected block counts.

\begin{table}
\caption{Exact reference moments and two-sided acceptance bounds for the VSTL local entropy test at $T_B^{\ast} = 39$, $N_B^{\ast} = 1950$.}
\label{tab:local}
\centering
\scriptsize
\setlength{\tabcolsep}{3pt}
\begin{tabular*}{\textwidth}{@{\extracolsep{\fill}}llcccc@{}}
\toprule
$k$ & $\alpha$ & $\mu_{\bar{H}}^{\mathrm{VSTL}}$ & $\sigma_{\bar{H}}^{\mathrm{VSTL}}$ & $\bar{H}_\alpha^{-}$ & $\bar{H}_\alpha^{+}$ \\
\midrule
\multirow{3}{*}{$30$} & $0.05$ & \multirow{3}{*}{$7.903106732975$} & \multirow{3}{*}{$0.001576661860$} & $7.900016532513$ & $7.906196933436$ \\
 & $0.01$ &  &  & $7.899045521154$ & $7.907167944796$ \\
 & $0.001$ &  &  & $7.897918684977$ & $7.908294780972$ \\
\midrule
\multirow{3}{*}{$50$} & $0.05$ & \multirow{3}{*}{$7.903106732975$} & \multirow{3}{*}{$0.001221277025$} & $7.900713073990$ & $7.905500391960$ \\
 & $0.01$ &  &  & $7.899960931825$ & $7.906252534125$ \\
 & $0.001$ &  &  & $7.899088088276$ & $7.907125377673$ \\
\midrule
\multirow{3}{*}{$100$} & $0.05$ & \multirow{3}{*}{$7.903106732975$} & \multirow{3}{*}{$0.000863573266$} & $7.901414160475$ & $7.904799305475$ \\
 & $0.01$ &  &  & $7.900882315650$ & $7.905331150300$ \\
 & $0.001$ &  &  & $7.900265122057$ & $7.905948343892$ \\
\midrule
\multirow{3}{*}{$250$} & $0.05$ & \multirow{3}{*}{$7.903106732975$} & \multirow{3}{*}{$0.000546171690$} & $7.902036256134$ & $7.904177209816$ \\
 & $0.01$ &  &  & $7.901699887932$ & $7.904513578018$ \\
 & $0.001$ &  &  & $7.901309540430$ & $7.904903925520$ \\
\midrule
\multirow{3}{*}{$500$} & $0.05$ & \multirow{3}{*}{$7.903106732975$} & \multirow{3}{*}{$0.000386201705$} & $7.902349791541$ & $7.903863674408$ \\
 & $0.01$ &  &  & $7.902111943305$ & $7.904101522645$ \\
 & $0.001$ &  &  & $7.901835925939$ & $7.904377540010$ \\
\midrule
\multirow{3}{*}{$700$} & $0.05$ & \multirow{3}{*}{$7.903106732975$} & \multirow{3}{*}{$0.000326400015$} & $7.902467000702$ & $7.903746465248$ \\
 & $0.01$ &  &  & $7.902265982253$ & $7.903947483697$ \\
 & $0.001$ &  &  & $7.902032705002$ & $7.904180760948$ \\
\midrule
\multirow{3}{*}{$1000$} & $0.05$ & \multirow{3}{*}{$7.903106732975$} & \multirow{3}{*}{$0.000273085845$} & $7.902571494554$ & $7.903641971395$ \\
 & $0.01$ &  &  & $7.902403310453$ & $7.903810155496$ \\
 & $0.001$ &  &  & $7.902208136702$ & $7.904005329247$ \\
\midrule
\multirow{3}{*}{$1500$} & $0.05$ & \multirow{3}{*}{$7.903106732975$} & \multirow{3}{*}{$0.000222973659$} & $7.902669712634$ & $7.903543753315$ \\
 & $0.01$ &  &  & $7.902532390891$ & $7.903681075059$ \\
 & $0.001$ &  &  & $7.902373032191$ & $7.903840433759$ \\
\midrule
\multirow{3}{*}{$2000$} & $0.05$ & \multirow{3}{*}{$7.903106732975$} & \multirow{3}{*}{$0.000193100853$} & $7.902728262258$ & $7.903485203692$ \\
 & $0.01$ &  &  & $7.902609338140$ & $7.903604127810$ \\
 & $0.001$ &  &  & $7.902471329457$ & $7.903742136493$ \\
\bottomrule
\end{tabular*}
\end{table}

Algorithm~\ref{alg:localentropy} applies the reference moments of~\eqref{eq:localref} and the acceptance interval of Corollary~\ref{cor:localclt} to a single ciphertext.

\begin{algorithm}
\caption{VSTL $(k, N_B)$-local entropy test}\label{alg:localentropy}
\begin{algorithmic}[1]
\Require ciphertext $S$ consisting of $T$ triangles, with $N = 50T$ data bytes $S_1, \ldots, S_N$; block count $30 \le k \le \left\lfloor \dfrac{T}{39} \right\rfloor$; significance level $\alpha$
\Ensure $(k, 1950)$-local entropy acceptance verdict under the VSTL reference
\State $\mathcal{R} \gets$ the starting positions of $k$ non-overlapping and uniformly selected $1950$-byte blocks
\State $\bar{H} \gets 0$
\For{$i \in \mathcal{R}$}
\State $H_i \gets 0$
\State $\#(v) \gets 0$ for all $v \in \mathcal{V}$
\For{$j \gets 0$ \textbf{to} $1949$}
\State $\#(S_{i+j}) \gets \#(S_{i+j}) + 1$
\EndFor
\For{$v \in \mathcal{V}$ such that $\#(v) > 0$}
\State $H_i \gets H_i - \dfrac{\#(v)}{1950} \log_2 \dfrac{\#(v)}{1950}$ \Comment{\eqref{eq:plugin}}
\EndFor
\State $\bar{H} \gets \bar{H} + \dfrac{H_i}{k}$ \Comment{\eqref{eq:local}}
\EndFor
\State $\mu \gets 7.903106732975$ \Comment{\eqref{eq:localref}}
\State $\sigma \gets \dfrac{0.008635732664}{\sqrt{k}}$ \Comment{\eqref{eq:localref}}
\State $\bar{H}^{-} \gets \mu - \Phi^{-1}(1 - \alpha/2) \sigma$; $\bar{H}^{+} \gets \mu + \Phi^{-1}(1 - \alpha/2) \sigma$ \Comment{\eqref{eq:localinterval}}
\State \Return $\bar{H}^{-} \le \bar{H} \le \bar{H}^{+}$
\end{algorithmic}
\end{algorithm}

\subsection{Practical Significance}\label{subsec:localpractical}

The uniform reference of~\cite[Tab.~3]{local_entropy_2013} uses blocks of $1936$ bytes, whereas the VSTL local entropy test requires blocks of $1950$ bytes.
We therefore compare the VSTL reference with the uniform references at both block sizes, as reported in Table~\ref{tab:localreferences}.

\begin{table}
\caption{Uniform local entropy reference values at $N_B = 1936$ and $N_B = 1950$, compared with the VSTL reference, with crossing thresholds for the mean differences.}
\label{tab:localreferences}
\centering
\footnotesize
\renewcommand{\arraystretch}{1.25}
\setlength{\tabcolsep}{4pt}
\begin{tabular}{lcc}
\toprule
 & Uniform ($N_B = 1936$) & Uniform ($N_B = 1950$) \\
\midrule
$\mu_{\bar{H}}$ & $7.902469316647$ & $7.903192345732$ \\
$\sqrt{k}\,\sigma_{\bar{H}}$ & $0.008694225839$ & $0.008628570032$ \\
$\mu_{\bar{H}}^{\mathrm{VSTL}}-\mu_{\bar{H}}$ & $0.000637416328$ & $-0.000085612757$ \\
$\sigma_{\bar{H}}/\sigma_{\bar{H}}^{\mathrm{VSTL}}$ & $1.006773388890$ & $0.999170582071$ \\
\midrule
Crossing at one standard deviation & $k \ge 187$ & $k \ge 10158$ \\
Crossing at $95\%$ acceptance half-width & $k \ge 715$ & $k \ge 39021$ \\
\bottomrule
\end{tabular}
\end{table}

We express each mean difference in units of the corresponding uniform standard deviation to quantify the separation from the VSTL reference and identify when the uniform reference becomes practically unsuitable:
\begin{equation}\label{eq:driftgrowth}
\left\lvert \mu_{\bar{H}}^{\mathrm{VSTL}} - \mu_{\bar{H}}^{\mathrm{uniform}}(N_B) \right\rvert = z \sigma_{\bar{H}}^{\mathrm{uniform}}(N_B).
\end{equation}
Solving~\eqref{eq:driftgrowth} for $k$ with $z = 1$ gives the crossing at one uniform standard deviation, while with $z = \Phi^{-1}(0.975)$ it gives the crossing at the two-sided $95\%$ acceptance half-width.
The mean differences and corresponding crossings are also listed in Table~\ref{tab:localreferences}.

Against the uniform reference at $N_B = 1936$, both crossings occur at block counts attainable within practically relevant file sizes; at $N_B = 1950$, the crossings occur at substantially larger block counts.

\begin{figure}
\centering
\begin{subfigure}{\linewidth}
\centering
\makebox[\linewidth][l]{\small{\textbf{(a)}}}\\[0pt]
\includegraphics[width=\linewidth]{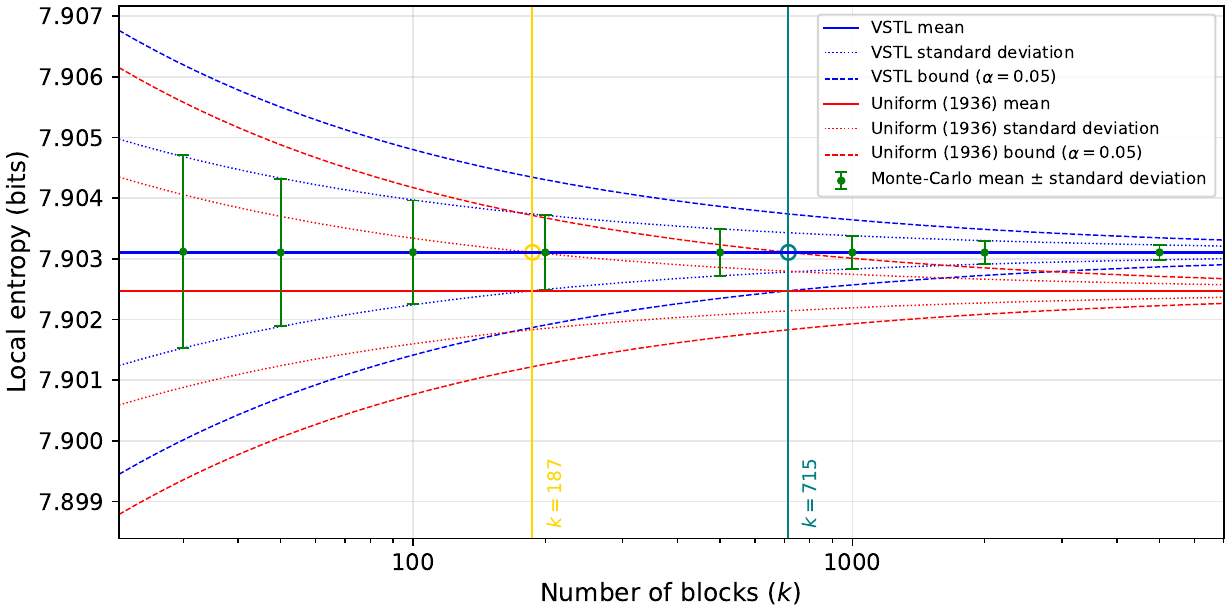}
\phantomsubcaption\label{fig:localmc1936-raw}
\end{subfigure}
\begin{subfigure}{\linewidth}
\centering
\makebox[\linewidth][l]{\small{\textbf{(b)}}}\\[0pt]
\includegraphics[width=\linewidth]{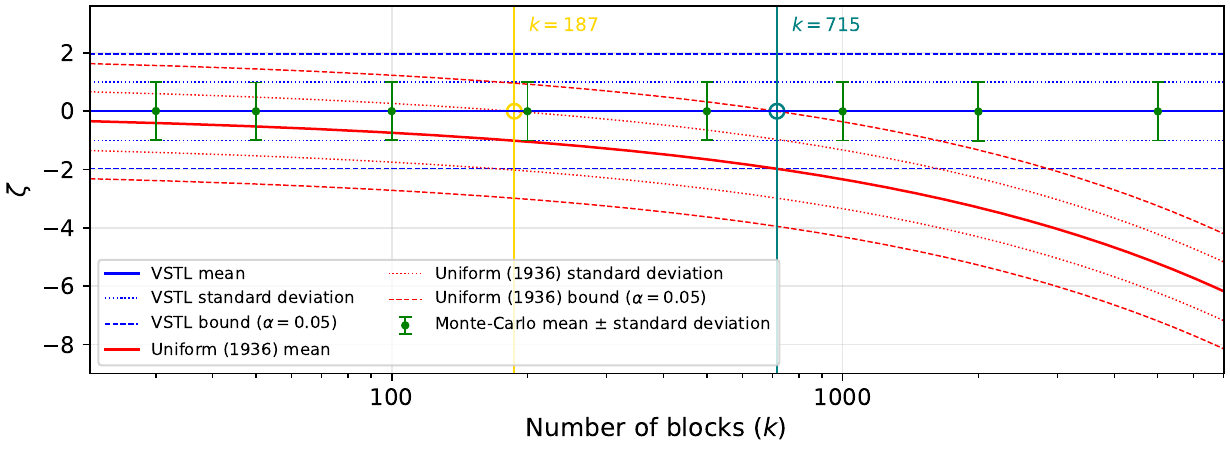}
\phantomsubcaption\label{fig:localmc1936-std}
\end{subfigure}
\caption{Local entropy under the VSTL model, compared with the conventional uniform reference at $N_B = 1936$: (a) exact reference values and Monte-Carlo estimates; (b) the standardized counterpart of (a), expressed in $\zeta$-units.}
\label{fig:localmc1936}
\end{figure}

\begin{figure}
\centering
\begin{subfigure}{\linewidth}
\centering
\makebox[\linewidth][l]{\small{\textbf{(a)}}}\\[0pt]
\includegraphics[width=\linewidth]{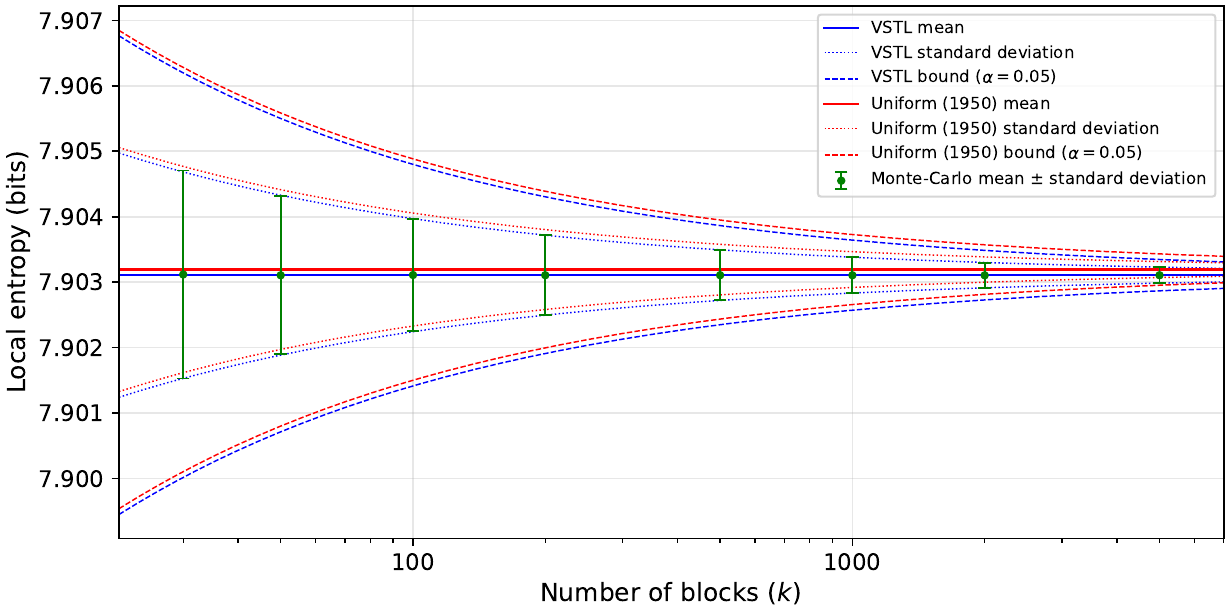}
\phantomsubcaption\label{fig:localmc1950-raw}
\end{subfigure}
\begin{subfigure}{\linewidth}
\centering
\makebox[\linewidth][l]{\small{\textbf{(b)}}}\\[0pt]
\includegraphics[width=\linewidth]{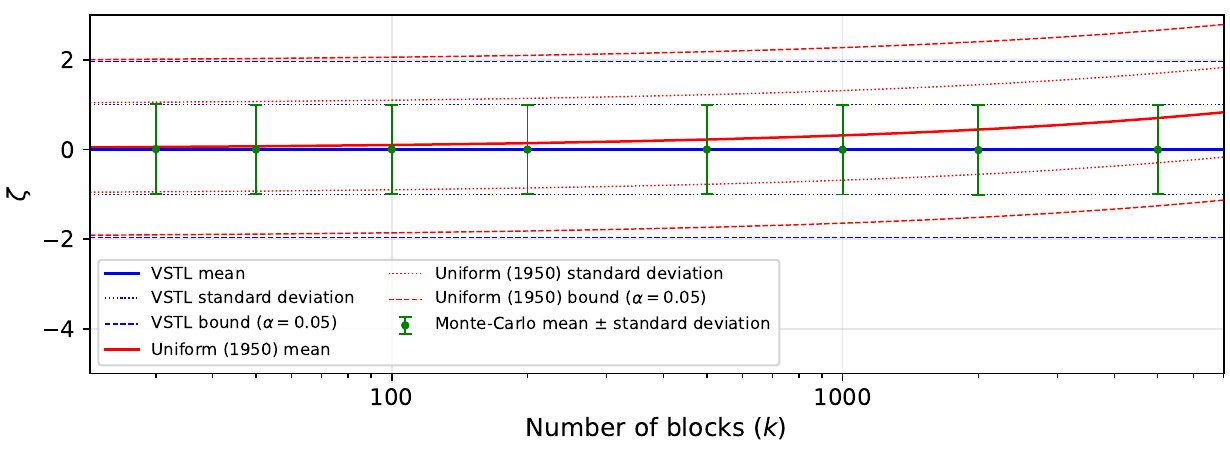}
\phantomsubcaption\label{fig:localmc1950-std}
\end{subfigure}
\caption{Local entropy under the VSTL model, compared with the uniform reference at $N_B = 1950$: (a) exact reference values and Monte-Carlo estimates; (b) the standardized counterpart of (a), expressed in $\zeta$-units.}
\label{fig:localmc1950}
\end{figure}

\subsection{Monte-Carlo Validation}\label{subsec:localmc}

Following the simulation procedure of~\cite{adjusted_npcr_uaci_2026} and Section~\ref{subsec:globalmc}, we check whether samples from the VSTL distribution recover the local entropy reference moments.

For a fixed $k$, one trial draws a VSTL file according to Definition~\ref{def:vstl}, consisting of $k$ blocks of $T_B^{\ast}$ triangle records each.
We compute their average entropy using~\eqref{eq:local}.
By Lemma~\ref{lem:aligned}, this procedure is equivalent in distribution to selecting $k$ disjoint blocks from a longer VSTL file.
We use $n = 10000$ independent trials at each of the selected block counts spanning $30 \le k \le 5000$.

The trial averages give the sample mean $\hat{\mu}_{\bar{H}}$ and unbiased sample variance $\hat{\sigma}_{\bar{H}}^{2}$ through~\eqref{eq:mcestimators}.
Standardization follows~\eqref{eq:zeta}, using the exact local reference values $\mu_{\bar{H}}^{\mathrm{VSTL}}$ and $\sigma_{\bar{H}}^{\mathrm{VSTL}}$ of~\eqref{eq:localref}.

The simulated mean and standard deviation agree with the exact VSTL predictions within sampling error across all tested block counts, as illustrated in Figures~\ref{fig:localmc1936} and~\ref{fig:localmc1950}.
In the comparison with the uniform reference at $N_B = 1936$, the fixed difference between the means becomes increasingly pronounced as the acceptance intervals narrow for larger $k$.
The vertical lines in Figure~\ref{fig:localmc1936} mark the crossings derived in Section~\ref{subsec:localpractical} and listed in Table~\ref{tab:localreferences}; beyond the second crossing, the VSTL reference mean lies above the upper bound of the uniform $95\%$ acceptance interval.
For the uniform reference evaluated at the optimal block size for the VSTL local entropy test $N_B^{\ast}$, the mean difference is smaller and the curves are substantially closer, as shown in Figure~\ref{fig:localmc1950}, with both crossing thresholds lying beyond the tested range.

Figure~\ref{fig:localdiff} shows the absolute differences between the reference values and their Monte-Carlo estimates on a logarithmic scale; both show an overall downward trend towards zero as the block count grows.

\begin{figure}
\centering
\includegraphics[width=\linewidth]{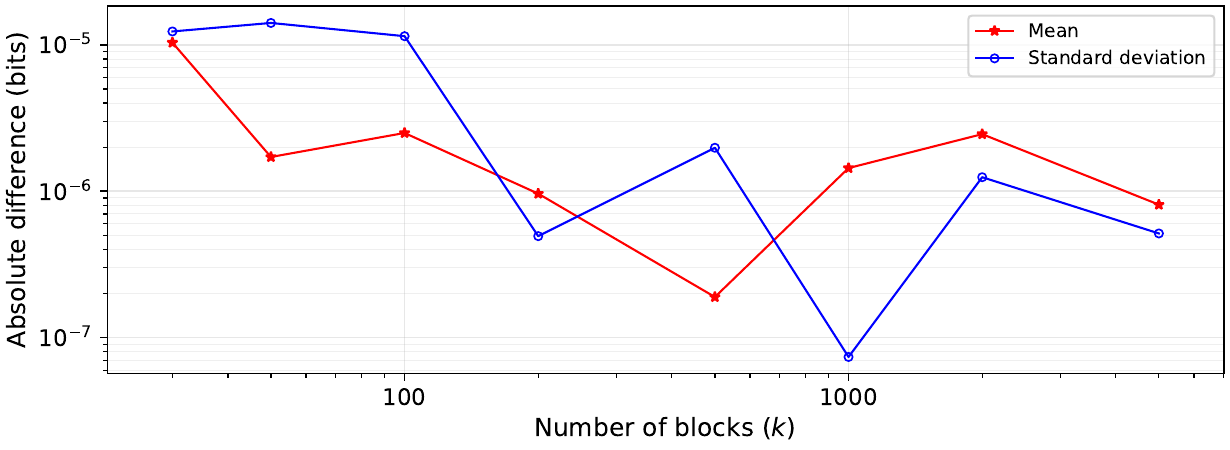}
\caption{Absolute difference between the reference values of the VSTL local entropy test and their corresponding Monte-Carlo estimates: $\lvert \mu_{\bar{H}}^{\mathrm{VSTL}} - \hat{\mu}_{\bar{H}} \rvert$ (red stars) and $\lvert \sigma_{\bar{H}}^{\mathrm{VSTL}} - \hat{\sigma}_{\bar{H}} \rvert$ (blue circles).}
\label{fig:localdiff}
\end{figure}

\section{Conclusions}\label{sec:conclusions}

This paper introduced a two-sided hypothesis test for the global entropy under the VSTL null model and adjusted the established local entropy test to this setting.
The reference values and acceptance bounds account for the IEEE-754 validity constraint and the resulting dependence between the exponent bytes of each float.
Removing this validity constraint recovers the conventional uniform model.

The global entropy converges to approximately $7.99991479609$ bits, with an expected value strictly below this limit for every finite sample size.
The second-order expansions provide efficiently computable reference moments and retain a nonzero $N^{-1}$ term in the variance, unlike the uniform case, where this term vanishes.
The difference between the reference means becomes relevant at practical file sizes --- at $3824$ triangles it first exceeds one uniform standard deviation, while at $7495$ triangles the VSTL reference mean first falls below the lower bound of the uniform $95\%$ acceptance interval.

For local entropy, the capacity energy criterion increased the block size from the conventional $1936$ bytes to $1950$ bytes, corresponding to $39$ triangles.
The exact moments gave a reference mean of approximately $7.903106732975$ bits and a standard deviation of approximately $\dfrac{0.008635732664}{\sqrt{k}}$ for the average over $k$ disjoint blocks.
The difference from the uniform reference at $1936$ bytes is again practically relevant --- at $187$ blocks it first exceeds one uniform standard deviation, while at $715$ blocks the VSTL reference mean first exceeds the upper bound of the uniform $95\%$ acceptance interval; the reference at $1950$ bytes is substantially closer, but still becomes unsuitable for larger numbers of blocks.

Monte-Carlo simulations confirm the global and local reference moments within sampling error in all tested cases.

We therefore recommend evaluating STL ciphers against the VSTL references derived in this paper.
Acceptance indicates consistency with the stated null model in terms of entropy; it does not establish cryptographic security.

\section*{Declaration of competing interest}

The authors declare that they have no known competing financial interests or personal relationships that could have appeared to influence the work reported in this paper.

\section*{Funding}

This work was supported by the Scientific Research Fund of Konstantin Preslavsky University of Shumen [grant number RD-08-91/02.02.2026].

\section*{Declaration of generative AI and AI-assisted technologies in the manuscript preparation process}

Generative AI tools were used to assist in identifying related literature, improve the language and readability of the manuscript, proofread the text, review the mathematical derivations and validate the consistency of the reported results.
The authors retain full responsibility for the correctness of this work.

\bibliographystyle{elsarticle-num}
\bibliography{references}

\end{document}